\newtheorem{problem}{Problem}
\newtheorem{theorem}{Theorem}
\newtheorem{lemma}{Lemma}
\newtheorem{proposition}{Proposition}
\newtheorem{definition}{Definition}
\newtheorem{remark}{Remark}
\title{Structure-Enhanced DRL for Optimal Transmission Scheduling}
\author{Jiazheng Chen, Wanchun Liu*,~\IEEEmembership{Member,~IEEE,} Daniel~E.~Quevedo,~\IEEEmembership{Fellow,~IEEE,} Saeed R. Khosravirad,~\IEEEmembership{Member,~IEEE,} Yonghui~Li,~\IEEEmembership{Fellow,~IEEE,} Branka Vucetic,~\IEEEmembership{Life Fellow,~IEEE}

\thanks{Part of the work has been submitted to IEEE ICC 2023 \cite{Chen2023SEDRL}.}
\thanks{J. Chen, W. Liu, Y. Li, and B. Vucetic are with the School of Electrical and Information Engineering, The University of Sydney, Sydney, NSW 2006, Australia (e-mail: jiazheng.chen@sydney.edu.au; wanchun.liu@sydney.edu.au; yonghui.li@sydney.edu.au; branka.vucetic@sydney.edu.au).
D. E. Quevedo is with the School of Electrical Engineering and Robotics, Queensland University of Technology (QUT), Brisbane, Australia. (e-mail: dquevedo@ieee.org).
S. R. Khosravirad is with the Nokia Bell Laboratories, Murry Hill, NJ 07964 USA (e-mail: saeed.khosravirad@nokia-bell-labs.com).
\textit{(W. Liu is the corresponding author.)}}} 
\date{September 2022}
\begin{document}

\maketitle
\vspace{-1.8cm}

\begin{abstract}
    Remote state estimation of large-scale distributed dynamic processes plays an important role in Industry 4.0 applications. In this paper, we focus on the transmission scheduling problem of a remote estimation system. First, we derive some structural properties of the optimal sensor scheduling policy over fading channels. Then, building on these theoretical guidelines, we develop a structure-enhanced deep reinforcement learning (DRL) framework for optimal scheduling of the system to achieve the minimum overall estimation mean-square error (MSE).
    In particular, we propose a structure-enhanced action selection method, which tends to select actions that obey the policy structure. This explores the action space more effectively and enhances the learning efficiency of DRL agents.
    Furthermore, we introduce a structure-enhanced loss function to add penalties to actions that do not follow the policy structure. The new loss function guides the DRL to converge to the optimal policy structure quickly.
    Our numerical experiments illustrate that the proposed structure-enhanced DRL algorithms can save the training time by 50\% and reduce the remote estimation MSE by 10\% to 25\%, when compared to benchmark DRL algorithms. 
    In addition, we show that the derived structural properties exist in a wide range of dynamic scheduling problems that go beyond remote state estimation.
\end{abstract}
\vspace{-0.5cm}

\begin{IEEEkeywords}
Remote state estimation, deep reinforcement learning, sensor scheduling, threshold structure.
\end{IEEEkeywords}
\vspace{-0.5cm}

\section{Introduction} \label{sec:intro}
Wireless networked control systems (WNCSs) consisting of distributed sensors, actuators, controllers, and plants are a key component for Industry 4.0 and have been widely applied in many areas such as industrial automation, vehicle monitoring systems, building automation and smart grids \cite{Park2018WNCS}. In particular, providing high-quality real-time remote estimation of dynamic system plant states plays an important role in ensuring control performance and stability of WNCSs \cite{Huang2020UpDown, Schenato2007Stability}.
For large-scale WNCSs, transmission scheduling of wireless sensors over limited bandwidth needs to be properly designed to guarantee remote estimation performance.

There are many existing works on the transmission scheduling of WNCSs. 
In \cite{gatsis2015opportunistic}, the optimal scheduling problem of a multi-loop WNCS with limited communication resources was investigated for minimizing the transmission power consumption under a WNCS stability constraint.
In~\cite{han2017optimal,wu2018optimalmulti}, optimal sensor scheduling problems of remote estimation systems were investigated for achieving the best overall estimation mean-square error (MSE). In particular, dynamic decision-making problems were formulated as Markov decision processes (MDPs), which can be solved by classical  methods, such as policy and value iterations.
The recent work~\cite{Forootani2022NOMA} reduces the computation complexity for solving optimal scheduling problems through approximate dynamic programming.
However, the conventional model-based solutions are not feasible in large-scale scheduling problems because of the curse of dimensionality caused by high-dimensional state and action spaces.\footnote{In~\cite{gatsis2015opportunistic,han2017optimal,wu2018optimalmulti}, only the optimal scheduling of two-sensor systems has been solved effectively by the proposed methods.} 

In recent years, deep reinforcement learning (DRL) has been developed to deal with large MDPs by using deep neural networks as function approximators \cite{sutton2018reinforcement, zhao2022deep}. 
Some works \cite{leong2020DRL, liu2021drlscheduling, Demirel2018DEEPCAS, Yang2022OMA} have used the deep Q-network (DQN), a simple DRL method, to solve multi-sensor-multi-channel scheduling problems in different remote estimation scenarios. In particular, sensor scheduling problems for systems with 6 sensors have been solved effectively, providing significant performance gains over heuristic methods in terms of estimation quality. The more recent work~\cite{pang2022drl} has introduced DRL algorithms with an actor-critic structure to solve scheduling problems at a much larger scale (that cannot be handled by the DQN). However, existing works merely use the general DRL frameworks to solve specific scheduling problems, without questioning what features distinguish sensor scheduling problems from other MDPs. Also, we note that a drawback of general DRL is that it often cannot perform policy exploration effectively for specific tasks~\cite{guo2019exploration}, which can lead to getting stuck in local minima or even total failure. Thus, the existing DRL-based solutions could be far from optimal.

A key feature that we will exploit in our current work is that optimal transmission scheduling policies of remote estimation systems often have threshold structures~\cite{wu2019single,wu2020optimalmulti,wu2018optimalmulti}, which means that there exist switching boundaries dividing the state space into multiple regions for different scheduling actions. In other words, an optimal policy has a structure where the action only changes at the switching boundaries of the state space.
In particular, \cite{wu2019single} focuses on an energy-constrained single-sensor-single-channel system and proves that the optimal policy has a threshold in terms of the sensor's age of information (AoI), determining whether the sensor will be scheduled or not.
In~\cite{wu2020optimalmulti}, the authors considered a multi-sensor-multi-channel system, where all channels are static, and each sensor has a constant packet-drop probability at all frequency channels. This work also showed that the optimal scheduling policy has a multi-dimensional threshold structure in terms of all sensor AoI. 
As an extension, the work in~\cite{wu2018optimalmulti} (a two-sensor system), assumed that different sensors could have different numbers of packets for carrying each measurement. The demonstrated threshold property of the optimal policy is related to the sensor AoI and the remaining packet numbers of each sensor. 
There are two limitations of the channel models adopted in the above works: 1) fading channel models are commonly adopted in practice, where channel states are time-varying, and 2) wireless propagation via different frequency bandwidths has different properties, leading to different channel qualities.
Not limited to remote estimation systems, in~\cite{hsu2017threshold}, the threshold structure of an optimal sensor scheduling policy has also been identified for minimizing the average sum AoI. However, this work only considered a single-channel system, and the transmission success or failure was determined before a scheduling action.
Therefore, the theoretical works~\cite{wu2019single,wu2020optimalmulti,wu2018optimalmulti,hsu2017threshold} only derived the structural properties of optimal policies under some ideal assumptions. It is an open challenge to prove the existence of structural results of the optimal scheduling policy of a general multi-sensor-multi-channel system over practical fading channels.
Furthermore, there is no existing work in the open literature utilizing the structural properties to guide DRL algorithms for effectively solving optimal scheduling problems.

In this paper, we consider the optimal sensor scheduling problem of a general $N$-sensor-$M$-channel remote estimation system over fading channels. The main contributions of this work are summarized as follows.
\begin{itemize}
    \item[$\bullet$] We prove that the optimal sensor scheduling policy has a threshold structure in terms of both the AoI states of all sensors and the corresponding channel states, where the channel states of different sensors at different frequencies are different. To the best of our knowledge, this is the first structural result of optimal scheduling policies over fading channels in the literature. In addition, we show that such a structural property exists in a wide range of dynamic scheduling/resource allocation problems, not limited to remote state estimations. Interestingly, we also give a counterexample to show when such a property does not exist.
    \item[$\bullet$] We first formulate the sensor scheduling problem into an MDP, and then develop novel structure-enhanced DRL algorithms for solving the problem, building on the derived threshold properties of the optimal policy. 
    In particular, we design a \textbf{structure-enhanced action selection method}, which tends to select actions that obey the threshold structure. Such an action selection method can explore the action space more effectively and enhance the learning efficiency of DRL agents.
    Furthermore, we introduce a \textbf{structure-enhanced loss function} to add penalties to actions that do not follow the threshold structure. The new loss function guides the DRL to converge to the optimal policy structure quickly. 
    We apply the proposed action selection method and the novel loss function to redesign the most commonly adopted DRL frameworks for scheduling, i.e., DQN and deep deterministic policy gradient (DDPG), referred to as the \textbf{structure-enhanced DQN and DDPG algorithms}.
    \item[$\bullet$] Our extensive numerical results illustrate that the proposed structure-enhanced DRL algorithms can save the training time by 50\% while reducing the remote estimation MSE by 10\% to 25\% compared with benchmark DRL algorithms. Importantly, the structure-enhanced DRL algorithms can converge and perform well under some system settings that cannot be solved by any of the benchmark DRL algorithms.
\end{itemize}
\begin{figure}[t]
    \centering
    \includegraphics[width=0.65\linewidth]{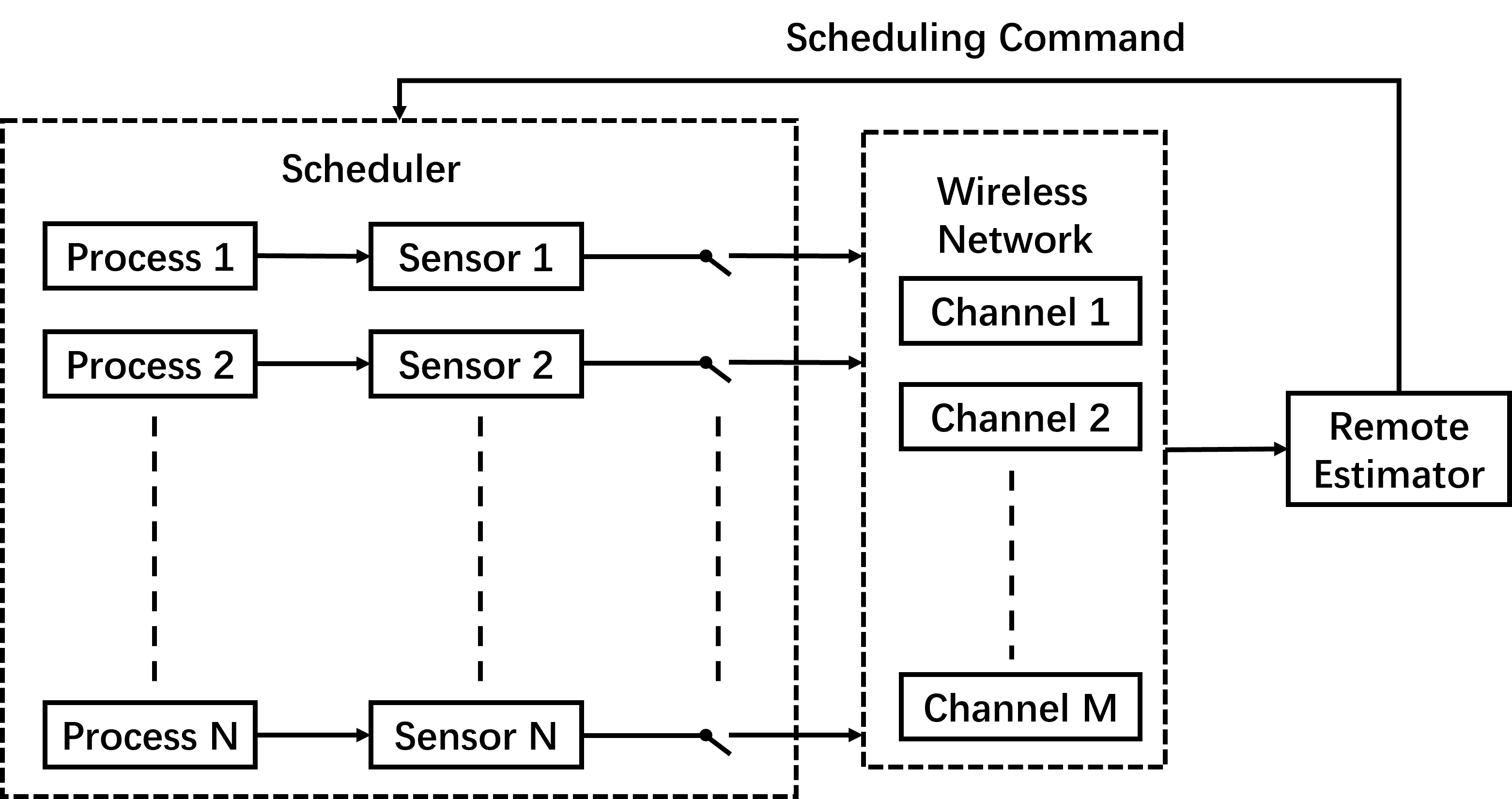}
    \vspace{-0.2cm}
    \caption{Remote state estimation system with $N$ processes and $M$ channels.}
    \label{fig:system_model}
    \vspace{-0.7cm}
\end{figure}
\textbf{Outline:} The system model of the remote state estimation system is described in Section~\ref{sec: sys}. The sensor scheduling problem formulation and the introduction of the structural properties are presented in Section~\ref{sec: problem}. The proofs of structural properties of the optimal scheduling policy are given in Section~\ref{sec: proof of threshold structure}. The structure-enhanced DRL algorithms for the formulated problem are presented in Section~\ref{sec: DRL}. The numerical results are shown and analyzed in Section~\ref{sec: simulation}, followed by conclusions in Section~\ref{sec: conclusion}.

\section{System Model} \label{sec: sys}
We consider a remote estimation system with $N$ dynamic processes, each measured by a sensor, which pre-processes the raw measurements and sends its state estimates to a remote estimator through one of $M$ wireless channels, as illustrated in Fig.~\ref{fig:system_model}.

\subsection{Dynamic Process Model and Local State Estimation}
Each dynamic process $n$ is modeled as a discrete-time linear time-invariant (LTI) system as \cite{leong2020DRL,liu2021remoteMF, liu2022stability}
\begin{equation}\label{eq:LTI}
\begin{aligned}
    \mathbf{x}_{n,t+1} & = \mathbf{A}_{n} \mathbf{x}_{n, t} + \mathbf{w}_{n, t},  \\
    \mathbf{y}_{n, t} & = \mathbf{C}_{n} \mathbf{x}_{n, t} + \mathbf{v}_{n, t}, n\in\{1,\dots,N\}, t\in\{1,\dots\},
\end{aligned}
\end{equation}
where $\mathbf{x}_{n, t} \in \mathbb{R}^{l_{n}}$ is  process $n$'s state at time $t$, and  $\mathbf{y}_{n, t} \in \mathbb{R}^{e_{n}}$ is the state measurement of the sensor $n$, 
$\mathbf{A}_{n} \in \mathbb{R}^{l_{n} \times l_{n}}$ and $\mathbf{C}_{n} \in \mathbb{R}^{e_{n} \times l_{n}}$ are the system matrix
and the measurement matrix, respectively, $\mathbf{w}_{n, t} \in \mathbb{R}^{l_{n}}$ and $\mathbf{v}_{n, t} \in \mathbb{R}^{e_{n}}$ are the process disturbance and the measurement noise modeled as independent and identically distributed (i.i.d) zero-mean Gaussian random vectors $\mathcal{N}(\mathbf{0},\mathbf{W}_{n})$ and $\mathcal{N}(\mathbf{0},\mathbf{V}_{n})$, respectively. \emph{We assume that the spectral radius of $\mathbf{A}_n, \forall n$, is greater than one, which means that the dynamic processes are unstable, making the remote estimation problem more interesting} (see \cite{liu2021remoteMF} and references therein).

Due to the presence of noise in \eqref{eq:LTI}, each sensor $n$ executes a classic Kalman filter to pre-process the raw measurement and generate state estimate $\mathbf{x}^s_{n,t}$  at each time $t$~\cite{liu2021remoteMF} as
\begin{subequations}
    \begin{align}
    \mathbf{x}_{n, t|t-1}^{s} & = \mathbf{A}_{n}\mathbf{x}_{n, t-1}^{s} \label{KF,a} \\
    \mathbf{P}_{n, t|t-1}^{s} & = \mathbf{A}_{n} \mathbf{P}_{n, t-1}^{s} \mathbf{A}_{n}^{\top} + \mathbf{W}_{n} \label{KF,b} \\
    \mathbf{K}_{n, t} & = \mathbf{P}_{n, t|t-1}^{s} \mathbf{C}_{n}^{\top} (\mathbf{C}_{n} \mathbf{P}_{n, t|t-1}^{s} \mathbf{C}_{n}^{\top}+\mathbf{V}_{n})^{-1} \label{KF,c} \\
    \mathbf{x}_{n, t}^{s} & = \mathbf{x}_{n, t|t-1}^{s} + \mathbf{K}_{n, t} (\mathbf{y}_{n, t} - \mathbf{C}\mathbf{x}_{n, t|t-1}^{s}) \label{KF,d} \\
    \mathbf{P}_{n, t}^{s} & = (\mathbf{I}_{n} - \mathbf{K}_{n, t} \mathbf{C}_{n}) \mathbf{P}_{n, t|t-1}^{s} \label{KF,e}
    \end{align}
\end{subequations}
where $\mathbf{x}_{n, t|t-1}^{s}$ and $\mathbf{P}_{n, t|t-1}^{s}$ are the prior state estimate and the corresponding estimation error covariance of sensor $n$, respectively, $\mathbf{x}_{n, t}^{s}$ and $\mathbf{P}_{n, t}^{s}$ are the posterior state estimate and the corresponding estimation error covariance of sensor $n$ at time $t$, respectively. 
In particular, sensor $n$ sends the estimate $\mathbf{x}^s_{n,t}$ to the remote estimator (not $\mathbf{y}_{n,t}$) as a packet, once scheduled, and the local state estimation error covariance matrix is defined as 
\begin{equation}\label{eq:local_P}
    \mathbf{P}_{n, t}^{s} \triangleq \mathbb{E} \left[ (\mathbf{x}_{n, t}^{s} - \mathbf{x}_{n, t}) (\mathbf{x}_{n, t}^{s} - \mathbf{x}_{n, t})^{\top} \right].
\end{equation}
$\mathbf{K}_{n, t}$ is the Kalman gain of sensor $n$, and $\mathbf{I}_{n}$ is an identity matrix. Note that~\eqref{KF,a} and~\eqref{KF,b} present the prediction steps while~\eqref{KF,c},~\eqref{KF,d}, and~\eqref{KF,e} are the updating steps.
We note that local Kalman filters are commonly assumed to operate in the steady state mode in the literature (see \cite{liu2021remoteMF} and references therein\footnote{The $n$th local Kalman filter converges to  steady state if $(\mathbf{A}_{n},\mathbf{C}_{n})$ is observable and $(\mathbf{A}_{n}, \sqrt{\mathbf{W}_{n}})$ is controllable.}). We thus assume that the error covariance matrix has converged to a constant, i.e.,  $\mathbf{P}_{n, t}^{s} = \bar{\mathbf{P}}_{n}, \forall t,n$.

\subsection{Wireless Communications and Remote State Estimation}
There are only $M$ wireless channels (e.g., subcarriers) for the $N$ sensors' transmissions, where $M \leq N$.
We consider independent and identically distributed (i.i.d.) block fading channels, where the channel quality is fixed during each packet transmission and varies packet by packet, independently.
Let the $N\times M$ matrix $\mathbf{H}_{t}$ denote the channel state of the system at time $t$, where the element in the $n$th row and $m$th column, say  $h_{n,m,t} \in \mathcal{H} \triangleq \left\{ 1, 2, \dots, \bar{h} \right\}$, represents the channel state between sensor $n$ and the remote estimator at channel $m$. 
In particular, there are $\bar{h}$ quantized channel states in total. 
The distribution of $h_{n,m,t}$ is given~as
\begin{equation}\label{eq:q}
\operatorname{Pr}(h_{n,m,t}=i) = q^{(n,m)}_{i}, \forall t,
\end{equation}
where $\sum_{i=1}^{\bar{h}} q^{(n,m)}_{i} =1, \forall n,m$.
The instantaneous channel state $\mathbf{H}_{t}$ is available at the remote estimator based on standard channel estimation methods.

The packet drop probability at channel state $i' \in \mathcal{H}$ is denoted as $\tilde{p}_{i'}$. Without loss of generality, we assume that $\tilde{p}_{1} \geq \tilde{p}_{2} \geq \dots \geq \tilde{p}_{\bar{h}}$. We also define the packet success rate for the channel state $h_{n,m,t}$ as $p_{n,m,t}\in \{p_1,\dots,p_{\bar{h}}\}$, where $p_{i'} \triangleq 1-\tilde{p}_{i'}$.

Due to the limited communication channels, only $M$ out of $N$ sensors can be scheduled at each time step. Let $a_{n,t} \in \{0, 1, 2, \dots, M\}$ represent the channel allocation for sensor $n$ at time $t$, where
\begin{equation}\label{eq:action}
    a_{n,t} = \left\{
    \begin{array}{ll}
        0  & \text{if sensor $n$ is not scheduled} \vspace{-0.1cm}\\
        m  & \text{if sensor $n$ is scheduled to channel $m$.}
    \end{array}
    \right.
\end{equation}
In particular, we assume that each sensor can be scheduled to at most one channel and that each channel is assigned to one sensor \cite{pang2022drl}. Then, the constraints on $a_{n,t}$ are given as
    \begin{equation}\label{eq: action constraint}
        \sum_{m=1}^{M} \boldsymbol{\mathbbm{1}}\left( a_{n,t} = m \right) \leq 1, \quad 
        \sum_{n=1}^{N} \boldsymbol{\mathbbm{1}}\left( a_{n,t} = m \right) = 1, 
    \end{equation} where $\boldsymbol{\mathbbm{1}} (\cdot)$ is the indicator function.
    
Considering schedule actions and packet dropouts, sensor $n$'s estimate may not be received by the remote estimator in every time slot. We define the packet reception indicator as
\begin{equation*}
\eta_{n, t} = \left\{
\begin{array}{ll}
1, & \text{if sensor $n$'s packet is received at time $t$}\\
0, & \text{otherwise}.
\end{array}
\right.
\end{equation*} 
Considering the randomness of $\eta_{n,t}$ and assuming that the remote estimator performs state estimation at the beginning of each time slot, the remote state estimate that minimizes the estimation MSE follows the stochastic recursion:
\begin{align}\label{eq:x_hat}
    \hat{\mathbf{x}}_{n, t+1} & = \left\{
                        \begin{array}{ll}
                             \mathbf{A}_n\mathbf{x}_{n, t}^{s}, & {\text{if $\eta_{n,t} = 1$}}\\
                             \mathbf{A}_n \hat{\mathbf{x}}_{n, t}, & {\text{otherwise}},
                        \end{array}
                    \right.
\end{align}
where $\mathbf{A}_n$ is the system matrix of process $n$ defined in \eqref{eq:LTI}.
If sensor $n$'s packet is not received, then the remote estimator propagates its estimate in the previous time slot to estimate the current state.
From \eqref{eq:LTI} and \eqref{eq:x_hat}, we derive the estimation error covariance as
\begin{align}
  \mathbf{P}_{n, t} & \triangleq \mathbb{E} \left[ \left(\hat{\mathbf{x}}_{n, t} - \mathbf{x}_{n, t}\right) \left(\hat{\mathbf{x}}_{n, t} - \mathbf{x}_{n, t} \right)^{\top} \right] \\
& = \left \{
\begin{array}{ll}
\mathbf{A}_n\bar{\mathbf{P}}_{n}\mathbf{A}^\top_n + \mathbf{W}_n, & {\text{if $\eta_{n,t} = 1$}}\\
\mathbf{A}_n{\mathbf{P}}_{n,t}\mathbf{A}^\top_n + \mathbf{W}_n, & {\text{otherwise}},
\end{array} \label{eq:remote MSE}
\right. 
\end{align}
where $\bar{\mathbf{P}}_{n}$ is the local estimation error covariance of sensor $n$ defined under \eqref{eq:local_P}.

Let $\tau_{n,t}\in \{1,2,\dots\}$ denote the age-of-information (AoI) of sensor $n$ at time $t$, which measures the
amount of  time elapsed since the latest sensor packet was successfully received. Then, we have
\begin{equation}\label{eq:tau}
\tau_{n,t+1} =\begin{cases}
1 & {\text{if $\eta_{n,t} = 1$}}\\
\tau_{n,t}+1 & {\text{otherwise.}}
\end{cases}
\end{equation}
If a sensor is frequently scheduled at good channels, then the corresponding average AoI is small. However, due to the scheduling constraint~\eqref{eq: action constraint}, this is often not possible. This leads to estimator stability issues, as studied, e.g., in~\cite{pang2022drl}.

From \eqref{eq:remote MSE} and \eqref{eq:tau}, the error covariances can be written in terms of the AoI as
\begin{equation}\label{eq:MSE}
    \mathbf{P}_{n,t} = f_{n}^{\tau _{n,t}}(\bar{\mathbf{P}}_{n}),
\end{equation}
where
$f_{n}(\mathbf{X}) = \mathbf{A}_{n} \mathbf{X} \mathbf{A}_{n}^{\top} + \mathbf{W}_{n}$ and $f^{\tau+1}_{n}(\cdot) =f_n(f^{\tau}_{n}(\cdot))$. It has been proved that the estimation MSE, i.e., $\operatorname{Tr}(\mathbf{P}_{n,t})$, \emph{monotonically increases with the AoI state $\tau_{n,t}$~\cite{liu2021remoteMF}}.

\section{Problem Formulation and Threshold structure} \label{sec: problem}
In this paper, we aim to find a dynamic scheduling policy $\pi(\cdot)$ that uses the AoI states of all sensors, as well as the channel states to minimize the expected total discounted estimation MSE of all $N$ processes over the infinite time horizon.
\begin{problem}\label{pro1}
\begin{equation}
    \max_{\pi} \lim_{T \to \infty} \mathbb{E} \left[ \sum_{t=1}^{T} \sum_{n=1}^N -\gamma^t \operatorname{Tr}(\mathbf{P}_{n,t}) \right],
\end{equation}
where $\gamma\in(0,1)$ is a discount factor.
\end{problem}

Problem~\ref{pro1} is a Markovian sequential decision-making problem. This is because the instantaneous estimation MSE, $\mathbf{P}_{n,t}$, only depends on the AoI state $\tau_{n,t}$ in \eqref{eq:MSE}, which is Markovian~\eqref{eq:tau}, and the channel states are i.i.d..
Therefore, we formulate Problem~\ref{pro1} as an MDP.\footnote{Not all MDPs for transmission scheduling of remote estimation systems have a feasible solution. However, once the remote estimation stability condition in terms of the dynamic process parameters and the channel statistics is satisfied, the MDP has a solution. We assume that the remote estimation stability condition of our system is satisfied, and only focus on the optimal solution of the MDP in the rest of the paper. The detailed stability condition can be found in our previous work~\cite{pang2022drl}.}

\subsection{MDP Formulation}\label{sec:MDP}
1) The state of the MDP is defined as $\mathbf{s}_{t} \triangleq \left( \bm{\tau}_{t}, \mathbf{H}_{t} \right) \in \mathcal{S} \triangleq \mathbb{N}^N \times \mathcal{H}^{N \times M}$, where $\bm{\tau}_{t} = (\tau_{1,t}, \tau_{2,t}, \dots, \tau_{N,t}) \in \mathbb{N}^N$ is the AoI state vector. Thus, $\mathbf{s}_{t}$ takes into account both the AoI and channel states.
    
2) The overall schedule action of the $N$ sensors is defined as $\mathbf{a}_{t} = (a_{1,t}, a_{2,t}, \dots, a_{N,t}) \in \mathcal{A} \triangleq \left\{ 0, 1, 2, \dots, M \right\}^N$ under the constraint~\eqref{eq: action constraint}. There are $N!/(N-M)!$ actions of the $N$-choose-$M$ problem in total.
The policy $\pi$ is a state-action mapping, i.e., $\mathbf{a}_{t} = \pi(\mathbf{s}_{t}) $.
    
3) The transition probability $\operatorname{Pr}(\mathbf{s}_{t+1}|\mathbf{s}_{t}, \mathbf{a}_{t})$ is the probability of the next state $\mathbf{s}_{t+1}$ given the current state $\mathbf{s}_{t}$ and the  action $\mathbf{a}_{t}$. 
Since the state transition is independent of the time index given the action $\mathbf{a}$ and the state $\mathbf{s}$, we drop the subscript $t$ here and use $\mathbf{s}$ and $\mathbf{s}^+$ to represent the current and the next states, respectively.
Due to the i.i.d. fading channel states, we have 
\begin{equation}\label{eq: transition p}
    \operatorname{Pr}(\mathbf{s}^+|\mathbf{s}, \mathbf{a}) = \operatorname{Pr}(\bm{\tau}^+|\bm{\tau},\mathbf{H}, \mathbf{a}) \operatorname{Pr}(\mathbf{H}^+),
\end{equation}
where $\operatorname{Pr}(\mathbf{H}^+)$ can be obtained from \eqref{eq:q}, and  $\operatorname{Pr}(\bm{\tau}^+|\bm{\tau},\mathbf{H}, \mathbf{a}) = \prod_{n=1}^N \operatorname{Pr}(\tau^+_{n}|\tau_{n}, \mathbf{H}, a_{n})$ and
\begin{align}
    & \operatorname{Pr}(\tau^+_{n}|\tau_{n}, \mathbf{H}, a_{n})  = \left\{
    \begin{array}{l}
        p_{n, m}, \qquad \; \; \text{if $\tau^+_{n} = 1, a_{n} = m$} \\
        1-p_{n, m}, \ \ \, \text{if $\tau^+_{n} = \tau_{n} + 1, a_{n} = m$}\\
        1, \qquad \quad \,  \text{\ \ if $\tau^+_{n} = \tau_{n} + 1, a_{n} = 0$} \\
        0, \qquad \quad \,  {\ \ \text{otherwise.}}
    \end{array}
    \right.
\end{align}
which is derived based on~\eqref{eq:q} and~\eqref{eq:tau}.

4) The immediate reward of Problem~\ref{pro1} at time $t$ is defined as the negative sum estimation MSE, $\sum_{n=1}^N -\operatorname{Tr} (\mathbf{P}_{n,t})$. Since $\mathbf{P}_{n,t}$ defined in \eqref{eq:MSE} is a function of $\tau_{n,t}$, the reward is represented as $r(\mathbf{s}_{t})$ and \emph{monotonically decreases with each AoI state}.

\subsection{Threshold Structure of the Optimal MDP Solution}
 Unlike the existing works~\cite{wu2018optimalmulti, wu2020optimalmulti}, which only considered oversimplified systems over static channels, we aim to derive the structural property of the optimal policy of the general multi-sensor-multi-channel system over fading channels as below.

\begin{definition}[Channel-State Threshold Policy]\label{def: threshold-channel}
	For a channel-state threshold scheduling policy, if channel~$m$ is assigned to sensor~$n$ at the state $\mathbf{s} = (\bm{\tau}, \mathbf{H})$, then for state $\mathbf{s}' = (\bm{\tau}, \mathbf{H}'_{n,m})$, where $\mathbf{H}'_{n,m}$ is identical to $\mathbf{H}$ except the sensor-$n$-channel-$m$ state with $h'_{n,m} > h_{n,m}$, then channel $m$ is still assigned to sensor~$n$.
\end{definition}

\begin{definition}[AoI-State Threshold Policy]\label{def: threshold-AoI}
	For an AoI-state threshold scheduling policy, if channel~$m$ is assigned to sensor~$n$ at the state $\mathbf{s} = (\bm{\tau}, \mathbf{H})$, then for state $\mathbf{s}' = (\bm{\tau}'_{(n)}, \mathbf{H})$, where $ \bm{\tau}'_{(n)}$ is idential to $\bm{\tau}$ except sensor $n$'s AoI with $\tau'_n \geq \tau_{n}$, then either channel $m$ or a better channel is assigned to sensor~$n$.
\end{definition}

Definition~\ref{def: threshold-channel} states that sensor $n$ is scheduled at channel $m$ at a certain state, if the channel quality of $h_{n,m}$ improves while the AoI and the other channel states are the same, then the threshold policy still assigns channel $m$ to sensor $n$. 
For Definition~\ref{def: threshold-AoI}, if the AoI state of sensor $n$ is increased while the other states remain the same, the  policy must schedule sensor $n$ to a channel that is no worse than the previous one. 

To illustrate that an optimal policy may have a threshold structure, we find the optimal policy of a two-sensor-single-channel system by solving the MDP with the conventional value iteration algorithm, which is illustrated in Fig.~\ref{fig: threshold}. 
We see that action-switching curves exist in both AoI and channel state spaces, and the properties in Definitions~\ref{def: threshold-channel} and~\ref{def: threshold-AoI} are observed.
Inspired by the above result, in the following, we will prove that the optimal policy has the structural properties in Definitions~\ref{def: threshold-channel} and~\ref{def: threshold-AoI}.\footnote{We note that in \cite{wu2018optimalmulti, wu2020optimalmulti}, only threshold structural results in terms of the AoI states are proved over static channels. Moreover, the proofs have some limitations: 
1) In \cite{wu2018optimalmulti}, the authors proved the threshold structure for a multi-dimensional state and action scenario based on Theorem 8.11.3 in \cite{puterman2014markov}, which, however, can only be used for single-dimensional scenarios.
2) In \cite{wu2020optimalmulti}, the proof of the optimal policy's structural property only considered a part of the action space; however, the full action space must be examined to show the optimally.}
 
\begin{figure}[t]
    \centering
    \begin{subfigure}[Schedule actions at AoI states]
        {
        \begin{minipage}[t]{0.45\linewidth}
        \centering
        \includegraphics[width=\textwidth]{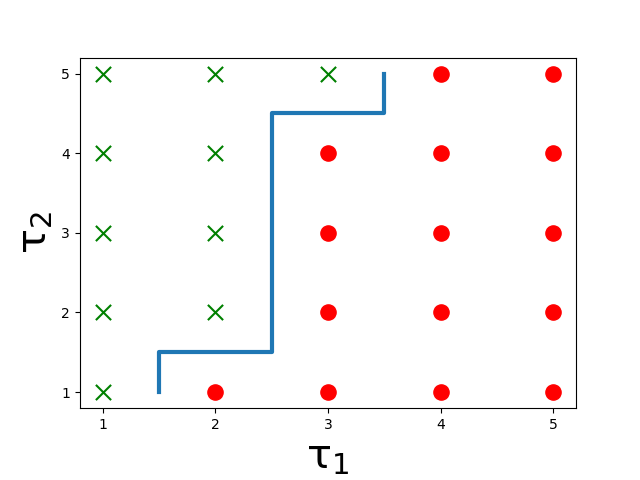}
        \end{minipage}
        \label{fig: threshold AoI}
        }
    \end{subfigure}
    \hspace{-0.9cm}
    \begin{subfigure}[Schedule actions at channel states]
        {
        \begin{minipage}[t]{0.45\linewidth}
        \centering
        \includegraphics[width=1\textwidth]{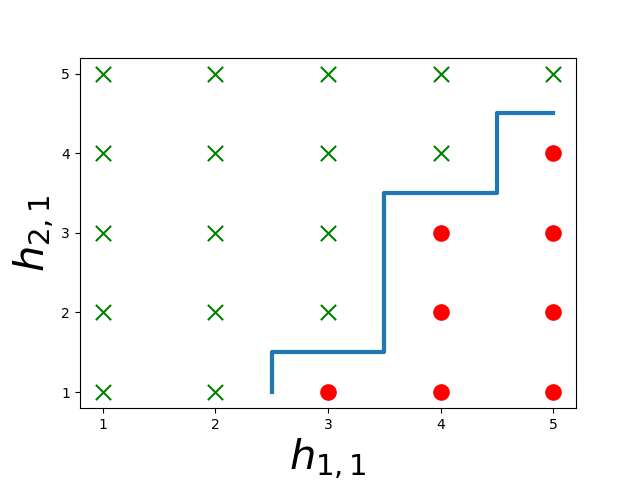}
        \end{minipage}
        \label{fig: threshold channel}
        }
    \end{subfigure}
    \vspace{-0.3cm}
    \caption{Structure of the optimal scheduling policy with $N=2$ and $M=1$, where $\bullet$ and $\times$ represent the schedule of sensor 1 and 2, respectively.}
    \vspace{-0.7cm}
    \label{fig: threshold}
\end{figure}

\section{Threshold Structure of Optimal Policies} \label{sec: proof of threshold structure}
We derive the structural properties of the optimal scheduling policy by using value iteration concepts, which require the definition of the optimal value function, $V^{*}(\mathbf{s}_{t}): \mathcal{S} \to \mathbb{R}$, and the state-action value function, $Q(\mathbf{s}_{t}, \mathbf{a}_{t}): \mathcal{S} \times \mathcal{A} \to \mathbb{R}$ as below. 

Given the current state $\mathbf{s}_{t} = (\bm{\tau}_{t}, \mathbf{H}_{t})$, the optimal value function is the maximum expected discounted sum of the future cost, i.e., achieved by the optimal policy $\pi^{*}(\cdot)$.
\begin{equation}\label{eq: optimal value function}
    V^{*}(\mathbf{s}_{t}) = \max_{\pi} \mathbb{E}\left[ \sum_{t'=t}^{\infty} \gamma^{t'-t} r\left(\mathbf{s}_{t'}\right)\right].
\end{equation}
The optimal value function satisfies the Bellman equation:
\begin{align}
    & V^{*}(\mathbf{s}_{t})  = r(\mathbf{s}_{t}) + \gamma \max_{\mathbf{a}_{t} \in \mathcal{A}}  \left[  \sum_{\mathbf{s}_{t+1}} \operatorname{Pr}(\mathbf{s}_{t+1}|\mathbf{s}_{t}, \mathbf{a}_{t})V^{*}(\mathbf{s}_{t+1})\right], \label{eq:V}
\end{align}
where the optimal action $\mathbf{a}^{*}$ is obtained by the optimal policy $\pi^{*}(\cdot)$, i.e.,
\begin{equation}\label{eq:optimal V action}
    \mathbf{a}_{t}^{*} 
    \triangleq \pi^{*}(\mathbf{s}_{t}) 
    = \mathop{\arg\max}_{\mathbf{a}_{t}\in \mathcal{A}}  \left[ \sum_{\mathbf{s}_{t+1}} \operatorname{Pr}(\mathbf{s}_{t+1}|\mathbf{s}_{t}, \mathbf{a}_{t})V^{*}(\mathbf{s}_{t+1})\right].
\end{equation}
Given the current state-action pair $\mathbf{s}_{t}$ and $\mathbf{a}_{t}$, the state-action value function, which is also called the Q-value function, measures the expected discounted sum of the future cost under the optimal policy $\pi^{*}(\cdot)$ as
\begin{equation}\label{eq:Bellman_Q}
    Q(\mathbf{s}_{t}, \mathbf{a}_{t}) 
    = r(\mathbf{s}_{t}) + \gamma \sum_{\mathbf{s}_{t+1}}\operatorname{Pr}(\mathbf{s}_{t+1}|\mathbf{s}_{t}, \mathbf{a}_{t}) V^{*}(\mathbf{s}_{t+1}). 
\end{equation}
From~\eqref{eq:V} and~\eqref{eq:Bellman_Q}, it directly follows that the optimal value function and the Q-value function satisfy:
\begin{equation}\label{ineq:V&Q}
    V^{*}(\mathbf{s}_{t}) = Q(\mathbf{s}_{t}, \mathbf{a}_{t}^*) \geq Q(\mathbf{s}_{t}, \mathbf{a}_{t}).
\end{equation}
For notation simplicity, we use $\mathbf{a}, \mathbf{s}, \mathbf{s}^{+}$ to represent $\mathbf{a}_{t}, \mathbf{s}_{t}, \mathbf{s}_{t+1}$ in the following.

We prove the channel-state and the AoI-state threshold properties of the optimal scheduling policy based on the classical value iteration algorithm, as it can achieve the optimal solution~\cite{puterman1990markov, hernandez2012further}.
In the value iteration, the initial value function and its  $\tilde{t}$-th iteration are $V^{0}(\mathbf{s})\in \mathcal{V} $ and $V^{\tilde{t}}(\mathbf{s}) \in \mathcal{V}$, respectively, where $\mathcal{V}$ is the set of any measurable function, i.e., $\mathcal{V}: \mathcal{S} \to \mathbb{R}$. At the $\tilde{t}$-th iteration, we have $V^{\tilde{t}+1} = \mathsf{B} \left[V^{\tilde{t}}\right]$, where $\mathsf{B}[\cdot]: \mathcal{V} \to \mathcal{V}$ is the Bellman operator:
\begin{align}
    \mathsf{B} \left[V^{\tilde{t}}\right] (\mathbf{s}) = r(\mathbf{s}) + \gamma \max_{\mathbf{a} \in \mathcal{A}} \left[ \sum_{\mathbf{s}^{+}} \operatorname{Pr}(\mathbf{s}^{+}|\mathbf{s}, \mathbf{a}) V^{\tilde{t}} (\mathbf{s}^{+})\right]. \label{eq: bellman operator}
\end{align}
We next elucidate the convergence and the optimality of value iterations.
\begin{lemma}[\!\cite{puterman1990markov, hernandez2012further}\,]\label{lemma:converge of V*}
    If the optimal policy exists, then the operator $\mathsf{B}$ has a unique fixed point $V^{*} \in \mathcal{V}$ and for all $V^{0} \in \mathcal{V}$, the sequence $\{V^{\tilde{t}}\}$ defined by $V^{\tilde{t}+1} = \mathsf{B} [V^{\tilde{t}}]$ converges in norm to $V^{*}$, i.e.
    \begin{equation}
        \lim_{\tilde{t} \to \infty} V^{\tilde{t}} = V^{*}.
    \end{equation}
\end{lemma}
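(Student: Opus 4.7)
The plan is to establish the result via the Banach fixed-point theorem applied to the Bellman operator $\mathsf{B}$ acting on $\mathcal{V}$ endowed with the supremum norm $\|V\|_\infty \triangleq \sup_{\mathbf{s}\in\mathcal{S}} |V(\mathbf{s})|$. Since a bounded measurable function space with this norm is complete, once $\mathsf{B}$ is shown to be a contraction on it, existence and uniqueness of the fixed point $V^*\in\mathcal{V}$ as well as convergence of the iterates $V^{\tilde{t}+1}=\mathsf{B}[V^{\tilde{t}}]$ from any starting $V^0\in\mathcal{V}$ all follow immediately, at a geometric rate.

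The key step is therefore to prove that $\mathsf{B}$ is a $\gamma$-contraction. Fix $V_1,V_2\in\mathcal{V}$ and a state $\mathbf{s}$. Using the elementary inequality $|\max_{\mathbf{a}}f(\mathbf{a})-\max_{\mathbf{a}}g(\mathbf{a})|\leq \max_{\mathbf{a}}|f(\mathbf{a})-g(\mathbf{a})|$ applied to the two inner maxima in \eqref{eq: bellman operator}, together with the fact that $\operatorname{Pr}(\mathbf{s}^+|\mathbf{s},\mathbf{a})$ is a probability distribution over $\mathbf{s}^+$, one obtains
\begin{align}
|\mathsf{B}[V_1](\mathbf{s})-\mathsf{B}[V_2](\mathbf{s})|
&\leq \gamma \max_{\mathbf{a}\in\mathcal{A}}\sum_{\mathbf{s}^+}\operatorname{Pr}(\mathbf{s}^+|\mathbf{s},\mathbf{a})\,|V_1(\mathbf{s}^+)-V_2(\mathbf{s}^+)| \\
&\leq \gamma \|V_1-V_2\|_\infty.
\end{align}
Taking the supremum over $\mathbf{s}$ gives $\|\mathsf{B}[V_1]-\mathsf{B}[V_2]\|_\infty\leq \gamma \|V_1-V_2\|_\infty$, where $\gamma\in(0,1)$ by Problem~\ref{pro1}. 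Banach's theorem then delivers the unique fixed point $V^*\in\mathcal{V}$ together with the norm bound $\|V^{\tilde{t}}-V^*\|_\infty\leq \gamma^{\tilde{t}}\|V^0-V^*\|_\infty\to 0$.

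The last step is to identify this abstract fixed point with the optimal value function defined in \eqref{eq: optimal value function}. Any fixed point of $\mathsf{B}$ is a solution of the Bellman optimality equation \eqref{eq:V}; conversely, a standard finite-horizon truncation argument, using $\gamma<1$ to kill the tail, shows that the discounted return attained by the greedy policy induced by the fixed point coincides with the supremum over $\pi$ in \eqref{eq: optimal value function}. Hence that fixed point equals $V^*$.

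The main technical obstacle is that the AoI component $\bm{\tau}\in\mathbb{N}^N$ of the state is unbounded and the per-stage reward $r(\mathbf{s})=-\sum_n\operatorname{Tr}(\mathbf{P}_{n,t})$ grows without bound in $\bm{\tau}$ because the matrices $\mathbf{A}_n$ are unstable. Consequently, $V^0$ and $V^*$ generally do not lie in the ordinary bounded-function space, and a naive sup-norm contraction argument does not apply. One must either work in a weighted-supremum-norm Banach space in which both $r$ and $V^*$ have finite weighted norm, or invoke the existence of a stabilizing policy (guaranteed here by the remote-estimation stability condition assumed in Section~\ref{sec:MDP}) so that the discounted sum in \eqref{eq: optimal value function} is finite and the weighted-norm contraction of $\mathsf{B}$ carries through. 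This is precisely why the lemma is stated conditionally on the existence of an optimal policy, and why the authors delegate the detailed verification to \cite{puterman1990markov,hernandez2012further}, where the weighted-norm machinery is developed.
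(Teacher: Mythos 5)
The paper does not actually prove this lemma: it is imported verbatim from \cite{puterman1990markov, hernandez2012further} as a known property of the Bellman operator, so there is no in-paper argument to compare against. Your sketch is the standard one that underlies those references: show $\mathsf{B}$ is a $\gamma$-contraction in a supremum-type norm via $|\max_{\mathbf{a}}f-\max_{\mathbf{a}}g|\leq\max_{\mathbf{a}}|f-g|$ and the stochasticity of $\operatorname{Pr}(\mathbf{s}^{+}|\mathbf{s},\mathbf{a})$, invoke Banach's fixed-point theorem for existence, uniqueness, and geometric convergence, and then identify the fixed point with \eqref{eq: optimal value function} by a truncation argument. That chain of steps is correct.

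Your final paragraph is the most valuable part and is exactly right: since $\bm{\tau}\in\mathbb{N}^{N}$ is unbounded and $r(\mathbf{s})$ diverges in the AoI (the $\mathbf{A}_{n}$ are unstable), neither $r$ nor $V^{*}$ is bounded, the paper's $\mathcal{V}$ (all measurable functions) is not a Banach space under $\|\cdot\|_{\infty}$, and the plain sup-norm contraction does not literally apply. The rigorous version requires a weighted supremum norm with a weight function that majorizes the cost growth and satisfies a drift-type condition under the transition kernel (this is where the stability condition assumed in Section~\ref{sec:MDP} and the hypothesis ``if the optimal policy exists'' enter), which is precisely the machinery of \cite{hernandez2012further}. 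You correctly flag this rather than papering over it, so the proposal is a faithful and honest reconstruction of the cited result; the only thing it does not do is carry out the weighted-norm verification explicitly, which the paper itself also delegates to the references.
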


Before proceeding further, we derive the following technical lemma about the monotonicity of the optimal value function. 
\begin{lemma}[Monotonicity]\label{lemma:monotone}
    Consider states $\mathbf{s} = (\bm{\tau}, \mathbf{H})$ and $\mathbf{s}' = (\bm{\tau}'_{(i)}, \mathbf{H})$, where $\bm{\tau}'_{(i)} = (\tau_{1}, \dots, \tau'_{i}, \dots, \tau_{N})$ and $\tau'_{i} \geq \tau_{i}$. The following holds 
    \begin{equation}
        V^{*}(\mathbf{s}') \leq V^{*}(\mathbf{s}).
    \end{equation}
\end{lemma}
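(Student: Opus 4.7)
My plan is to prove Lemma 2 by induction on the value-iteration recursion in \eqref{eq: bellman operator} and then pass to the limit using Lemma~\ref{lemma:converge of V*}. Concretely, I would define the class of monotone value functions
\begin{equation*}
\mathcal{V}_{\text{mon}} \triangleq \left\{ V \in \mathcal{V} : V(\bm{\tau}'_{(i)}, \mathbf{H}) \leq V(\bm{\tau}, \mathbf{H}) \text{ whenever } \tau'_i \geq \tau_i,\ i=1,\dots,N \right\},
\end{equation*}
and show that (i) the initial choice $V^0 \equiv 0$ belongs to $\mathcal{V}_{\text{mon}}$, and (ii) the Bellman operator $\mathsf{B}$ maps $\mathcal{V}_{\text{mon}}$ into itself. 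Since Lemma~\ref{lemma:converge of V*} gives $V^{\tilde t} \to V^*$ and $\mathcal{V}_{\text{mon}}$ is closed under pointwise limits, the conclusion $V^* \in \mathcal{V}_{\text{mon}}$ follows.

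For the inductive step, fix $\mathbf{s} = (\bm{\tau}, \mathbf{H})$ and $\mathbf{s}' = (\bm{\tau}'_{(i)}, \mathbf{H})$ with $\tau'_i \geq \tau_i$, assume $V^{\tilde t} \in \mathcal{V}_{\text{mon}}$, and compare $\mathsf{B}[V^{\tilde t}](\mathbf{s}')$ with $\mathsf{B}[V^{\tilde t}](\mathbf{s})$ term by term. For the instantaneous reward, the remark after~\eqref{eq:MSE} says that $\operatorname{Tr}(\mathbf{P}_{n,t})$ monotonically increases in $\tau_{n,t}$, and the reward is $r = -\sum_n \operatorname{Tr}(\mathbf{P}_{n,t})$, so $r(\mathbf{s}') \leq r(\mathbf{s})$. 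For the expected future value, I would use a coupling argument: for any fixed action $\mathbf{a}$, construct next-state random variables $\mathbf{s}^+$ (from $\mathbf{s}$) and $\mathbf{s}'^+$ (from $\mathbf{s}'$) on a common probability space by reusing the same Bernoulli packet-reception outcome and the same i.i.d.\ channel draw $\mathbf{H}^+$. Then by the transition rule in Section~\ref{sec:MDP}, the $n$-th AoI component satisfies $\tau'^+_n = \tau^+_n$ for $n \neq i$, while for $n = i$ either both land on $1$ (successful transmission) or both propagate by $+1$, giving $\tau'^+_i \geq \tau^+_i$ pathwise; the channel components of $\mathbf{s}'^+$ and $\mathbf{s}^+$ coincide. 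Applying the induction hypothesis to each sample path yields $V^{\tilde t}(\mathbf{s}'^+) \leq V^{\tilde t}(\mathbf{s}^+)$ almost surely, and taking expectations gives
\begin{equation*}
\sum_{\mathbf{s}^+} \operatorname{Pr}(\mathbf{s}^+\mid\mathbf{s}',\mathbf{a})\, V^{\tilde t}(\mathbf{s}^+)
\;\leq\;
\sum_{\mathbf{s}^+} \operatorname{Pr}(\mathbf{s}^+\mid\mathbf{s},\mathbf{a})\, V^{\tilde t}(\mathbf{s}^+)
\end{equation*}
for every $\mathbf{a}\in\mathcal{A}$. Taking $\max_{\mathbf{a}}$ preserves the inequality, which combined with $r(\mathbf{s}') \leq r(\mathbf{s})$ gives $\mathsf{B}[V^{\tilde t}](\mathbf{s}') \leq \mathsf{B}[V^{\tilde t}](\mathbf{s})$, completing the induction.

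The main obstacle I anticipate is the comparison of the two transition distributions, since the max over $\mathbf{a}$ is outside the expectation and one cannot directly exchange it with a pointwise inequality in $V^{\tilde t}$. The coupling construction sidesteps this cleanly, but requires checking that the transition law in Section~\ref{sec:MDP} really only depends on $\tau_i$ through a $+1$ shift under packet loss (and is independent of $\tau_i$ under packet reception), which is exactly what~\eqref{eq: transition p} and the AoI update in~\eqref{eq:tau} guarantee. Everything else—monotonicity of the reward, closure of $\mathcal{V}_{\text{mon}}$ under limits, and invocation of Lemma~\ref{lemma:converge of V*}—is routine.
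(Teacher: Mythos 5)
Your proposal is correct and follows essentially the same route as the paper's Appendix~\ref{proof: monotonicity}: induction on the value-iteration sequence, showing the Bellman operator preserves monotonicity (the paper handles the $\max$ by evaluating both $W$-functions at the action optimal for $\mathbf{s}'$, which is equivalent to your ``the inequality holds for every $\mathbf{a}$, so it survives the max'' step), and then passing to the limit via Lemma~\ref{lemma:converge of V*}. Your coupling of the packet-reception outcome and channel draw is just a probabilistic restatement of the paper's explicit observation that $\operatorname{Pr}(\tau_i'^+ = \tau_i'+1 \mid \cdot) = \operatorname{Pr}(\tau_i^+ = \tau_i+1 \mid \cdot)$ and $\operatorname{Pr}(\tau_i'^+ = 1 \mid \cdot) = \operatorname{Pr}(\tau_i^+ = 1 \mid \cdot)$, so there is no substantive difference.
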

\begin{proof}\label{proof:lemma1}
    See Appendix~\ref{proof: monotonicity}.
\end{proof}
Lemma~\ref{lemma:monotone} will assist in the comparison of value functions with different AoI states in the proof of the channel-state threshold and the AoI-state threshold properties.

\subsection{Channel-State Threshold Property}
In this part, we completely prove that the optimal policy of a general multi-sensor-multi-channel system has the channel-state threshold property in Definition~\ref{def: threshold-channel}.

\begin{theorem}\label{theo: multiple channel}
    The optimal policy $\pi^{*}(\cdot)$ of the multi-sensor-multi-channel system has the channel-state threshold property in Definition~\ref{def: threshold-channel}.
\end{theorem}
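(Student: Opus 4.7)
The plan is to prove Theorem~\ref{theo: multiple channel} by a direct exchange argument that leverages the monotonicity of $V^*$ given by Lemma~\ref{lemma:monotone}. Fix a state $\mathbf{s} = (\bm{\tau}, \mathbf{H})$ at which an optimal action $\mathbf{a}^*$ schedules sensor~$n$ on channel~$m$, and consider the perturbed state $\mathbf{s}' = (\bm{\tau}, \mathbf{H}'_{n,m})$ with $h'_{n,m} > h_{n,m}$, so that the success probability of sensor~$n$ on channel~$m$ strictly increases while every other entry of the channel matrix and every AoI component is held fixed. I will show that $Q(\mathbf{s}', \mathbf{a}^*) \geq Q(\mathbf{s}', \mathbf{a})$ for every competing action $\mathbf{a} \in \mathcal{A}$ that does \emph{not} assign channel~$m$ to sensor~$n$; this yields existence of an optimal action at $\mathbf{s}'$ that still schedules $n$ on $m$, as required by Definition~\ref{def: threshold-channel}.

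The argument breaks into three steps, carried out in this order. \emph{Step~1.} For any action $\mathbf{a}$ not scheduling $n$ on $m$, every channel state $h_{i,j}$ actually consulted by the factored transition law $\operatorname{Pr}(\bm{\tau}^+ \mid \bm{\tau}, \mathbf{H}, \mathbf{a}) = \prod_{i=1}^N \operatorname{Pr}(\tau_i^+ \mid \tau_i, \mathbf{H}, a_i)$ is untouched by the perturbation of $h_{n,m}$. Combined with the i.i.d.\ character of $\mathbf{H}^+$ in~\eqref{eq: transition p} and the fact that $r(\mathbf{s})$ depends only on $\bm{\tau}$, this yields
\[
    Q(\mathbf{s}', \mathbf{a}) = Q(\mathbf{s}, \mathbf{a}).
\]
\emph{Step~2.} For the distinguished action $\mathbf{a}^*$, the only marginal of $\operatorname{Pr}(\bm{\tau}^+ \mid \cdot)$ that differs between $\mathbf{s}$ and $\mathbf{s}'$ is that of $\tau_n^+$: the mass on $\tau_n^+ = 1$ grows from $p_{h_{n,m}}$ to $p_{h'_{n,m}}$, with a corresponding drop on $\tau_n^+ = \tau_n + 1$. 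Marginalizing the expectation over $\tau_n^+$ and applying Lemma~\ref{lemma:monotone} (so that $V^*$ evaluated at $\tau_n^+ = 1$ weakly dominates its value at $\tau_n^+ = \tau_n + 1$ when all other coordinates of $(\bm{\tau}^+, \mathbf{H}^+)$ are held fixed) gives
\[
    Q(\mathbf{s}', \mathbf{a}^*) \geq Q(\mathbf{s}, \mathbf{a}^*).
\]
\emph{Step~3.} Optimality of $\mathbf{a}^*$ at $\mathbf{s}$ provides $Q(\mathbf{s}, \mathbf{a}^*) \geq Q(\mathbf{s}, \mathbf{a})$, and chaining the three inequalities delivers $Q(\mathbf{s}', \mathbf{a}^*) \geq Q(\mathbf{s}', \mathbf{a})$.

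The main obstacle I foresee is formalizing Step~2, since the dependence of $Q$ on $p_{h_{n,m}}$ is hidden inside a sum over all pairs $(\bm{\tau}^+, \mathbf{H}^+)$. I would resolve this by marginalizing only the $\tau_n^+$ coordinate and rewriting the resulting conditional expectation as $p_{h_{n,m}} \overline{V}_1 + (1 - p_{h_{n,m}}) \overline{V}_{\tau_n + 1}$, where $\overline{V}_1$ and $\overline{V}_{\tau_n+1}$ are averages of $V^*$ with respect to the unchanged joint distribution of the remaining AoI coordinates and the next channel matrix; Lemma~\ref{lemma:monotone} then yields $\overline{V}_1 \geq \overline{V}_{\tau_n + 1}$, and monotonicity in the coefficient $p_{h_{n,m}}$ is immediate. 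A minor bookkeeping point is that Step~1 must cover both possibilities under $\mathbf{a}$: sensor~$n$ may be unscheduled ($a_n = 0$, so $\operatorname{Pr}(\tau_n^+ \mid \cdot)$ does not depend on $\mathbf{H}$ at all) or placed on some channel $m' \neq m$ (so the relevant entry is $h_{n,m'}$, which is unchanged); both cases are handled by the four-branch definition of the AoI transition in Section~\ref{sec:MDP}.
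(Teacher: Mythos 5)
Your proposal is correct and follows essentially the same route as the paper's proof: the same decomposition into $Q(\mathbf{s}',\mathbf{a}^*)\geq Q(\mathbf{s},\mathbf{a}^*)$ (via the factored transition law, the i.i.d.\ next channel state, and Lemma~\ref{lemma:monotone} applied to the $p\,\overline{V}_1+(1-p)\overline{V}_{\tau_n+1}$ form, which is exactly the paper's inequality~\eqref{eq: p'V > pV}) and $Q(\mathbf{s}',\mathbf{a})=Q(\mathbf{s},\mathbf{a})$ for competing actions not using the perturbed entry, chained through optimality of $\mathbf{a}^*$ at $\mathbf{s}$. Your explicit handling of the $a_n=0$ versus $a_n=m'\neq m$ cases is a minor bookkeeping point the paper leaves implicit, but the argument is the same.
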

\begin{proof}
 From the Q-value definition in \eqref{eq:Bellman_Q},  the theorem can be translated as: if for state $\mathbf{s} = (\bm{\tau}, \mathbf{H})$, the inequality $Q(\mathbf{s}, \mathbf{a}^{*}) \geq Q(\mathbf{s}, \mathbf{a}),\forall \mathbf{a}\in\mathcal{A}$ exists, where ${a}^{*}_{i} = m$, then for state $\mathbf{s}' = (\bm{\tau}, \mathbf{H}'_{i,m})$, where $\mathbf{H}'_{i,m}$ and $\mathbf{H}$ are identical except the element $h'_{i,m} > h_{i,m}$, the following inequality holds $Q(\mathbf{s}', {\mathbf{a}'}^{*}) \geq Q(\mathbf{s}', \mathbf{a})$, where ${\mathbf{a}'}^{*}$ is the optimal action at the state $\mathbf{s}'$ with ${a'_{i}}^{*} = m$. Since $Q(\mathbf{s}', {\mathbf{a}'}^{*}) \geq Q(\mathbf{s}', \mathbf{a}^{*})$, we only need to prove $Q(\mathbf{s}', \mathbf{a}^{*})  \geq Q(\mathbf{s}', \mathbf{a}')$, where ${a}'_{i} \neq m$.
To prove $Q(\mathbf{s}', \mathbf{a}^{*}) \geq Q(\mathbf{s}', \mathbf{a}')$, we will prove $Q(\mathbf{s}', \mathbf{a}^{*}) \geq Q(\mathbf{s}, \mathbf{a}^{*})$ and then $Q(\mathbf{s}, \mathbf{a}') = Q(\mathbf{s}', \mathbf{a}')$. In the following, we use $\mathbf{H}'$ to represent $\mathbf{H}'_{i,m}$ for the notation simplicity.

First, from~\eqref{eq:q} and~\eqref{eq:tau}, we have
    \begin{align}
        \operatorname{Pr}(\bm{\tau}^{+}|\bm{\tau}, \mathbf{H}, \mathbf{a}) 
        = \prod_{n=1}^{N} \operatorname{Pr} (\tau_{n}^{+}|\tau_{n}, \mathbf{h}_{n}, a_{n}) 
        = \operatorname{Pr} (\tau_{i}^{+}|\tau_{i}, \mathbf{h}_{i}, a_{i}) \operatorname{Pr} (\bm{\tau}_{\backslash i}^{+}|\bm{\tau}_{\backslash i}, \mathbf{H}_{\backslash i}, \mathbf{a}_{\backslash i}), \label{eq: transition tau}
    \end{align}
    where $\bm{\tau}_{\backslash i} = (\tau_{1}, \dots, \tau_{i-1}, \tau_{i+1}, \dots, \tau_{N})$ and $\mathbf{a}_{\backslash i} = (a_{1}, \dots, a_{i-1}, a_{i+1}, \dots, a_{N})$ represent the AoI states and the actions of all sensors except sensor $i$, respectively, and  $\mathbf{h}_{i} = (h_{i,1}, h_{i,2}, \dots, h_{i,M})$ is the vector channel states between sensor $i$ and the remote estimator and $\mathbf{H}_{\backslash i} = (\mathbf{h}_{1}, \dots, \mathbf{h}_{i-1}, \\ \mathbf{h}_{i+1}, \dots,  \mathbf{h}_{N})$.
    By using~\eqref{eq: transition p} and~\eqref{eq: transition tau}, it can be derived that \vspace{-0.5cm}
    \begin{align}
        Q(\mathbf{s}, \mathbf{a}) 
        & = r(\mathbf{s}) + \gamma \sum_{\mathbf{s}^{+}} \operatorname{Pr}(\mathbf{s}^{+}|\mathbf{s}, \mathbf{H}, \mathbf{a}) V^{*}(\mathbf{s}^{+}) \\
        & = r(\mathbf{s}) + \gamma \sum_{\mathbf{H}^{+}} \sum_{\bm{\tau}^{+}} \operatorname{Pr} (\mathbf{H}^{+}) \operatorname{Pr} (\bm{\tau}^{+}|\bm{\tau}, \mathbf{H}, \mathbf{a}) V^{*}(\mathbf{s}^{+})\\
        & = r(\mathbf{s}) + \gamma \sum_{\mathbf{H}^{+}} \sum_{\bm{\tau}_{\backslash i}^{+}} \sum_{\tau_{i}^{+}} \operatorname{Pr} (\mathbf{H}^{+}) \operatorname{Pr} (\bm{\tau}_{\backslash i}^{+}|\bm{\tau}_{\backslash i}, \mathbf{H}_{\backslash i}, \mathbf{a}_{\backslash i})  \operatorname{Pr} (\tau_{i}^{+}|\tau_{i}, \mathbf{h}_{i}, a_{i}) V^{*}(\mathbf{s}^{+}). \label{eq: Q i -i}
    \end{align}
Based on~\eqref{eq: Q i -i}, we have
\begin{align}
    & \hspace{-0.2cm} Q(\mathbf{s}', \mathbf{a}^{*}) = r(\mathbf{s}') \!+\! \gamma \sum_{{\mathbf{H}'}^{+}} \sum_{\bm{\tau}_{\backslash i}^{+}}\sum_{\tau_{i}^{+}} \operatorname{Pr} ({\mathbf{H}'}^{+}) \operatorname{Pr} (\bm{\tau}_{\backslash i}^{+}|\bm{\tau}_{\backslash i}, \mathbf{H}_{\backslash i}, \mathbf{a}^{*}_{\backslash i}) \operatorname{Pr} (\tau_{i}^{+}|\tau_{i}, \mathbf{h}'_{i}, a^{*}_{i})  V^{*}({\mathbf{s}'}^{+}) \\
    & \hspace{-0.2cm} \quad \ \!\geq\! r(\mathbf{s}) \!+\! \gamma \sum_{\mathbf{H}^{+}} \sum_{\bm{\tau}_{\backslash i}^{+}}  \sum_{\tau_{i}^{+}} \operatorname{Pr} (\mathbf{H}^{+}) \operatorname{Pr} (\bm{\tau}_{\backslash i}^{+}|\bm{\tau}_{\backslash i}, \!\mathbf{H}_{\backslash i}, \!\mathbf{a}^{*}_{\backslash i})  \operatorname{Pr} (\tau_{i}^{+}|\tau_{i}, \!\mathbf{h}_{i},\! a^{*}_{i})  V^{*}(\mathbf{s}^{+}) \!=\! Q(\mathbf{s}, \mathbf{a}^{*}), \label{eq: Q(s',a*) > Q(s,a*)}
\end{align}
where the inequality is derived by replacing the parameter ${\mathbf{H}'}^{+}$ with ${\mathbf{H}}^{+}$, and then using $r(\mathbf{s}') = r(\mathbf{s})$, $h'_{i,m} > h_{i,m}, a^{*}_{i} = m$, and the following inequality
\begin{align}
    p'_{i,m} \!V^{*}\!(1, \! \bm{\tau}_{\backslash i}^{+}, \! \mathbf{H}^{+}) \!+\! (1\!\!-\!p'_{i,m}) V^{*}\!(\tau_{i}\!+\!1, \!\bm{\tau}_{\backslash i}^{+}, \!\mathbf{H}^{+})
    \!\geq\! p_{i,m} \!V^{*}\!(1, \!\bm{\tau}_{\backslash i}^{+}, \!{\mathbf{H}}^{+}) \!+\! (1\!\!-\!p_{i,m}) V^{*}\!(\tau_{i}\!+\!1, \!\bm{\tau}_{\backslash i}^{+}, \!{\mathbf{H}}^{+}) \hspace{-0.5cm} \label{eq: p'V > pV}
\end{align}
achieved by $p'_{i,m} \geq p_{i,m}$ and $V^{*}(1, \! \bm{\tau}_{\backslash i}^{+}, \! \mathbf{H}^{+})\geq V^{*}(\tau_{i}\!+\!1, \!\bm{\tau}_{\backslash i}^{+}, \!\mathbf{H}^{+})$ from Lemma~\ref{lemma:monotone}.

Second, we derive that
\begin{align}
    & \hspace{-0.2cm} Q(\mathbf{s}, \mathbf{a}) = r(\mathbf{s}) \!+\! \gamma \sum_{\mathbf{H}^{+}} \sum_{\bm{\tau}_{\backslash i}^{+}} \sum_{\tau_{i}^{+}} \operatorname{Pr} (\mathbf{H}^{+})   \operatorname{Pr} (\bm{\tau}_{\backslash i}^{+}|\bm{\tau}_{\backslash i}, \mathbf{H}_{\backslash i}, \mathbf{a}'_{\backslash i})  \operatorname{Pr} (\tau_{i}^{+}|\tau_{i}, \mathbf{h}_{i}, a'_{i}) V^{*}(\mathbf{s}^{+}) \\
    & \hspace{-0.2cm} \quad \ \! =\! r(\mathbf{s}') \!+\! \gamma \sum_{{\mathbf{H}'}^{+}} \sum_{\bm{\tau}_{\backslash i}^{+}} \sum_{\tau_{i}^{+}} \operatorname{Pr} ({\mathbf{H}'}^{+}) \operatorname{Pr} (\bm{\tau}_{\backslash i}^{+}|\bm{\tau}_{\backslash i},\! \mathbf{H}_{\backslash i},\! \mathbf{a}'_{\backslash i})\! \operatorname{Pr} (\tau_{i}^{+}|\tau_{i}, \!\mathbf{h}'_{i},\! a'_{i}) V^{*}({\mathbf{s}'}^{+}) \!=\! Q(\mathbf{s}', \!\mathbf{a}), \label{eq: Q(s,a) = Q(s',a)}
\end{align}
where the second equality is derived by replacing the parameter ${\mathbf{H}'}^{+}$ with ${\mathbf{H}}^{+}$, and then using $h'_{i,m} > h_{i,m}, a'_{i} \neq m$, and $\operatorname{Pr} (\tau_{i}^{+}|\tau_{i}, \mathbf{h}_{i}, a'_{i}) V^{*}(\mathbf{s}^{+}) = \operatorname{Pr} (\tau_{i}^{+}|\tau_{i}, \mathbf{h}'_{i}, a'_{i}) V^{*}({\mathbf{s}'}^{+})$.
\end{proof}

\subsection{AoI-State Threshold Property}\label{sec: property (ii)}
We consider three network scenarios with different numbers of sensors and channels as below.

\subsubsection{Two-sensor-single-channel systems}
As presented in Theorem~\ref{theo:2-s-1-c}, we prove that the optimal policy of a two-sensor-single-channel system is an AoI-state threshold policy in Definition~\ref{def: threshold-AoI}.
In other words, if the optimal action at a state is to schedule sensor $n$, then the optimal action is still to schedule sensor $n$ at all states that only increase sensor $n$'s AoI.
\begin{theorem} \label{theo:2-s-1-c}
    The optimal policy $\pi^{*}(\cdot)$ of the two-sensor-single-channel system has the AoI-state threshold property in Definition~\ref{def: threshold-AoI}: suppose that for state $\mathbf{s} = (\bm{\tau}, \mathbf{H})$, the optimal action is to schedule sensor $i$, i.e. $a^*_{i} = 1$, then for state $\mathbf{s}' = (\bm{\tau}'_{(i)}, \mathbf{H}),$ where $ \tau'_{i} \geq \tau_{i}$, the optimal action is still $a'^*_{i} = 1$.
\end{theorem}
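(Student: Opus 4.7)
The plan is to recast the claim as a monotonicity-of-Q-gap argument. For $N=2$ and $M=1$ the feasibility constraint~\eqref{eq: action constraint} admits only two actions, $\mathbf{a}^{(1)}=(1,0)$ and $\mathbf{a}^{(2)}=(0,1)$, so taking $i=1$ without loss of generality, the hypothesis $a^*_1=1$ is equivalent to $Q(\mathbf{s},\mathbf{a}^{(1)})\ge Q(\mathbf{s},\mathbf{a}^{(2)})$ and the desired conclusion is $Q(\mathbf{s}',\mathbf{a}^{(1)})\ge Q(\mathbf{s}',\mathbf{a}^{(2)})$. Defining $f(\tau_1):=Q((\tau_1,\tau_2,\mathbf{H}),\mathbf{a}^{(1)})-Q((\tau_1,\tau_2,\mathbf{H}),\mathbf{a}^{(2)})$ with $\tau_2$ and $\mathbf{H}$ held fixed, it therefore suffices to prove that $f$ is non-decreasing in $\tau_1$: combining $f(\tau_1)\ge 0$ with $f(\tau'_1)\ge f(\tau_1)$ then gives $f(\tau'_1)\ge 0$.

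I would expand both Q-values via~\eqref{eq:Bellman_Q} using the i.i.d.\ channel factorisation $\operatorname{Pr}(\mathbf{s}^+|\mathbf{s},\mathbf{a})=\operatorname{Pr}(\mathbf{H}^+)\operatorname{Pr}(\bm{\tau}^+|\bm{\tau},\mathbf{H},\mathbf{a})$ as in~\eqref{eq: Q i -i}. The immediate reward $r(\mathbf{s})$ cancels in the difference, and grouping the common $V^*(\tau_1+1,\tau_2+1,\mathbf{H}^+)$ term produces a closed form for $f(\tau_1)$ in only three $V^*$-evaluations, at $(1,\tau_2+1,\mathbf{H}^+)$, $(\tau_1+1,1,\mathbf{H}^+)$, and $(\tau_1+1,\tau_2+1,\mathbf{H}^+)$, carrying the coefficients $p_{1,1}$, $-p_{2,1}$ and $p_{2,1}-p_{1,1}$, respectively. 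Subtracting $f(\tau_1)$ from $f(\tau_1+1)$ isolates the two drops
\begin{equation*}
D_1:=V^*(\tau_1+1,1,\mathbf{H}^+)-V^*(\tau_1+2,1,\mathbf{H}^+),\quad D_2:=V^*(\tau_1+1,\tau_2+1,\mathbf{H}^+)-V^*(\tau_1+2,\tau_2+1,\mathbf{H}^+),
\end{equation*}
which are both non-negative by Lemma~\ref{lemma:monotone}, and yields $f(\tau_1+1)-f(\tau_1)=\gamma\sum_{\mathbf{H}^+}\operatorname{Pr}(\mathbf{H}^+)\bigl[\,p_{2,1}D_1-(p_{2,1}-p_{1,1})D_2\,\bigr]$. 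This is obviously non-negative whenever $p_{1,1}\ge p_{2,1}$; in the opposite regime $p_{2,1}>p_{1,1}$ one needs the submodularity-type inequality $D_1\ge D_2$, after which $p_{2,1}D_1-(p_{2,1}-p_{1,1})D_2\ge p_{1,1}D_2\ge 0$.

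The main obstacle is thus establishing the submodularity $D_1\ge D_2$ of $V^*$ in $(\tau_1,\tau_2)$ — intuitively, the marginal cost of ageing sensor~1 is largest when sensor~2 is fresh. I would prove it via the value-iteration machinery underpinning Lemma~\ref{lemma:converge of V*}: let $\mathcal{V}_{\text{sub}}\subset\mathcal{V}$ be the set of functions that are componentwise non-increasing in $\bm{\tau}$ and that satisfy the submodular inequality pointwise in $\mathbf{H}^+$; initialise the iteration at some $V^0\in\mathcal{V}_{\text{sub}}$ (for example $V^0\equiv 0$) and verify inductively that $\mathsf{B}[V^{\tilde{t}}]\in\mathcal{V}_{\text{sub}}$, so that by Lemma~\ref{lemma:converge of V*} the inequality transfers to the limit $V^*$. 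The genuinely delicate step is checking that the $\max_{\mathbf{a}}$ in~\eqref{eq: bellman operator} preserves submodularity; I would handle this by a case analysis over which of $\mathbf{a}^{(1)}$ and $\mathbf{a}^{(2)}$ attains the maximum at each of the four corner AoI-states $(\tau_1+\varepsilon_1,\tau_2+\varepsilon_2)$ with $\varepsilon_i\in\{0,1\}$, combined with a policy-interchange argument that dominates the mismatched corners by the matched ones using monotonicity. Steps~1 and~2 then close the proof.
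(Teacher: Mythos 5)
Your reduction is sound and is essentially the paper's: both arguments reduce the theorem to showing that the gap $f(\tau_1)=Q(\cdot,\mathbf{a}^{(1)})-Q(\cdot,\mathbf{a}^{(2)})$ does not decrease in $\tau_1$, and your difference formula $f(\tau_1+1)-f(\tau_1)=\gamma\sum_{\mathbf{H}^+}\operatorname{Pr}(\mathbf{H}^+)\left[p_{2,1}D_1-(p_{2,1}-p_{1,1})D_2\right]$ is exactly the quantity the paper controls starting from \eqref{eq:Q<Q_1}. Your easy case $p_{1,1}\ge p_{2,1}$ is correct (it is case (a$'$) in the paper's proof of Lemma~\ref{lemma:probabilistic sup}).

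The gap is in the hard case $p_{2,1}>p_{1,1}$: the inequality $D_1\ge D_2$ that your plan hinges on points the wrong way. The structural property the value function actually satisfies --- and which the paper propagates through value iteration in \eqref{eq:sub_0} --- is \emph{submodularity}, $V^{*}(\bm{\tau}\vee\bm{\tau}^{\circ})+V^{*}(\bm{\tau}\wedge\bm{\tau}^{\circ})\le V^{*}(\bm{\tau})+V^{*}(\bm{\tau}^{\circ})$, which instantiated at your four corner states $(\tau_1+1,1)$, $(\tau_1+2,1)$, $(\tau_1+1,\tau_2+1)$, $(\tau_1+2,\tau_2+1)$ reads $D_1\le D_2$: the marginal loss from ageing sensor~1 is \emph{larger} when sensor~2 is stale, since the scheduler must then spend slots on sensor~2 before it can return to sensor~1. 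Your proposed induction would therefore break at the inductive step (the Bellman operator preserves submodularity here, not your reversed inequality), and $D_1\le D_2$ by itself does not yield $p_{2,1}D_1\ge(p_{2,1}-p_{1,1})D_2$ (take $D_1$ near $0$, $D_2$ large, and $p_{1,1}$ small). What is actually needed is precisely the weaker weighted inequality $p_{2,1}D_1\ge(p_{2,1}-p_{1,1})D_2$, which is the paper's ``probabilistic supermodularity'' (Lemma~\ref{lemma:probabilistic sup}); its proof is the real technical core of the theorem and requires carrying submodularity and the weighted inequality \emph{jointly} through the value iteration, with a case analysis over the optimal actions at the four corner states and an argument that one action configuration (case (b$'$.4)) cannot occur. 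None of this is recoverable from the $D_1\ge D_2$ route.
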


\begin{remark}[Analytical Challenges]\label{remark:difficulty}
Although Theorem~\ref{theo:2-s-1-c} looks simple, the proof is highly nontrivial.

First, the main difficulty of the proof is that different actions cause the transition of the AoI states in multiple dimensions, resulting in the comparison of multi-dimensional value functions.
However, Lemma~\ref{lemma:monotone} can only compare the optimal value functions with AoI state changes within one dimension.
For example, given states $\mathbf{s} = (\bm{\tau}, \mathbf{H}), \mathbf{s}' = (\bm{\tau}', \mathbf{H})$, and $\mathbf{s}^{\circ} = (\bm{\tau}^{\circ}, \mathbf{H})$, where $\bm{\tau} = (\tau_{i}, \tau_{j}), \bm{\tau}' = (\tau_{i}', \tau_{j}), \bm{\tau}^{\circ} = (\tau_{i}', \tau_{j}''), \tau_{i}' \geq \tau_{i}$, and $\tau_{j}'' \leq \tau_{j}$, it is easy to have $V(\mathbf{s}) \geq V(\mathbf{s}')$, but is not possible to compare $V(\mathbf{s})$ and $V(\mathbf{s}^{\circ})$.

Second, one may think that the proof of Theorem~\ref{theo:2-s-1-c} is intuitive as a larger AoI of sensor~$i$ results in a lower reward, and scheduling sensor $i$ can improve the reward more efficiently than scheduling other sensors. This is a misunderstanding. Theorem~\ref{theo:2-s-1-c} is equivalent to saying that the inequality $Q(\mathbf{s}', \mathbf{a}^*) \geq Q(\mathbf{s}', \mathbf{a})$ can be derived from $Q(\mathbf{s}, \mathbf{a}^*) \geq Q(\mathbf{s}, \mathbf{a})$, where $a^*_{i}=1$.
We drop the constant channel state, $\mathbf{H}$, in the optimal value functions for simplicity.
Based on~\eqref{eq:Bellman_Q}, $Q(\mathbf{s}', \mathbf{a}^*) \geq Q(\mathbf{s}', \mathbf{a})$ is equal to 
    \begin{align}
        \!\!\!p_{i,1} V^{*}(1, \tau_{j}\!+\!1) \!+\! (1\!-\!p_{i,1}) V^{*}(\tau'_{i}\!+\!1, \tau_{j}\!+\!1)
        \!\geq\! p_{j,1} V^{*}(\tau'_{i}\!+\!1, 1) \!+\! (1\!-\!p_{j,1}) V^{*}(\tau'_{i}\!+\!1, \tau_{j}\!+\!1),\label{eq:Q(s',a*) > Q(s',a)}
    \end{align}
and $Q(\mathbf{s}, \mathbf{a}^*) \geq Q(\mathbf{s}, \mathbf{a})$ is equal to
    \begin{align}
        \!\!\!p_{i,1} V^{*}(1, \tau_{j}\!+\!1) \!+\! (1\!-\!p_{i,1}) V^{*}(\tau_{i}\!+\!1, \tau_{j}\!+\!1)
        \!\geq\! p_{j,1} V^{*}(\tau_{i}\!+\!1, 1) \!+\! (1\!-\!p_{j,1}) V^{*}(\tau_{i}\!+\!1, \tau_{j}\!+\!1).\label{eq:Q(s,a*) > Q(s,a)}
    \end{align}
We see that~\eqref{eq:Q(s',a*) > Q(s',a)} cannot be derived directly based on~\eqref{eq:Q(s,a*) > Q(s,a)}, since it also related to the packet success rates of different sensors, i.e. $p_{i,1}$ and $p_{j,1}$.
\end{remark}

\begin{proof}
As mentioned in Remark~\ref{remark:difficulty}, we will prove \eqref{eq:Q(s',a*) > Q(s',a)} based on \eqref{eq:Q(s,a*) > Q(s,a)}, which depends on the key technical lemma below in terms of the optimal value functions and the packet success rates, which depends on the channel states.
\begin{definition}[Meet and Joint~\cite{topkis1998supermodularity, puterman2014markov}]
    Let $x' \vee x'' = \max \{ x', x'' \}$ and $x' \wedge x'' = \min \{ x', x'' \}$ denote the join and meet of two real numbers, respectively. Define $\mathbf{x}' \vee \mathbf{x}'' = (x'_1 \vee x''_1, \dots, x'_n \vee x''_n)$ and $\mathbf{x}' \wedge \mathbf{x}'' = (x'_1 \wedge x''_1, \dots, x'_n \wedge x''_n)$ as the joint and meet of the vectors $\mathbf{x}'$ and $\mathbf{x}'' \in \mathbb{R}^n$, respectively.
\end{definition}
\begin{lemma}[Probabilistic Supermodularity, $N=2$, $M=1$]\label{lemma:probabilistic sup}
    Given states $\mathbf{s} = (\bm{\tau}, \mathbf{H})$ and $ \mathbf{s}^\circ = (\bm{\tau}^\circ, \mathbf{H})$, where $\bm{\tau}^\circ = (\tau''_i, \tau'_{j}), \tau''_{i} \leq \tau_{i}$, $ \tau'_{j} \geq \tau_{j}$, and $i\neq j \in \{1,2\}$, the following holds
    \begin{equation}\label{eq:sup1}
        p_{j,,1} V^{*}(\mathbf{s} \wedge \mathbf{s}^\circ) + (p_{j,1} - p_{i,1}) V^{*}(\mathbf{s} \vee \mathbf{s}^\circ) \geq p_{j,1} V^{*}(\mathbf{s}) + (p_{j,1} - p_{i,1}) V^{*}(\mathbf{s}^\circ).
    \end{equation}
\end{lemma}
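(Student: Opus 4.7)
The plan is to establish the probabilistic supermodularity of $V^*$ by induction on the value iteration sequence $\{V^{\tilde t}\}$ generated by the Bellman operator $\mathsf{B}$, and then transfer the property to the limit via Lemma~\ref{lemma:converge of V*}. Writing the four states in play as $\mathbf{s}_a \triangleq \mathbf{s} \wedge \mathbf{s}^\circ = (\tau''_i, \tau_j, \mathbf{H})$, $\mathbf{s}_b \triangleq \mathbf{s}$, $\mathbf{s}_c \triangleq \mathbf{s}^\circ$, and $\mathbf{s}_d \triangleq \mathbf{s} \vee \mathbf{s}^\circ = (\tau_i, \tau'_j, \mathbf{H})$, the target inequality is equivalent to $p_{j,1}[V^*(\mathbf{s}_a) - V^*(\mathbf{s}_b)] \geq (p_{j,1}-p_{i,1})[V^*(\mathbf{s}_c) - V^*(\mathbf{s}_d)]$. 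First I would dispatch the easy case $p_{j,1} < p_{i,1}$: the right-hand side is then non-positive while the left-hand side is non-negative by Lemma~\ref{lemma:monotone}, so the inequality is automatic. Only the case $p_{j,1} \geq p_{i,1}$ requires work.

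For the base case of the induction, $V^0 \equiv 0$ trivially satisfies the inequality. For the inductive step I would decompose $V^{\tilde t+1}(\mathbf{s}) = r(\mathbf{s}) + \gamma F(\mathbf{s})$ with $F(\mathbf{s}) \triangleq \max_{\mathbf{a}} \sum_{\mathbf{s}^+}\operatorname{Pr}(\mathbf{s}^+|\mathbf{s},\mathbf{a}) V^{\tilde t}(\mathbf{s}^+)$ and treat the two pieces separately. Since $r$ is additively separable in $(\tau_i, \tau_j)$, the reward side of the target inequality collapses by direct computation to $p_{i,1}[r(\mathbf{s}_a) - r(\mathbf{s}_b)]$, which is non-negative because $\operatorname{Tr}(f_i^{\tau_i}(\bar{\mathbf{P}}_i))$ is monotone in $\tau_i$; the reward contribution is thus settled almost immediately.

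The real work lies in the $F$-part. With $N = 2, M = 1$ there are only two admissible actions at every state (schedule sensor $i$ or sensor $j$), so I would apply the bounds $F(\mathbf{s}_a) \geq Q(\mathbf{s}_a, \mathbf{a}^*_{\mathbf{s}_b})$ and $F(\mathbf{s}_d) \geq Q(\mathbf{s}_d, \mathbf{a}^*_{\mathbf{s}_c})$ and then expand each $Q$ by conditioning on whether the scheduled transmission succeeds or fails and on the i.i.d.~channel refresh $\mathbf{H}^+$. Each resulting term is a product of a factor drawn from $\{p_{i,1}, p_{j,1}, 1-p_{i,1}, 1-p_{j,1}\}$ with a $V^{\tilde t}$-value at an AoI configuration that sits in the $2\times 2$ rectangle with corners $(\tau''_i{+}1, \tau_j{+}1)$ and $(\tau_i{+}1, \tau'_j{+}1)$, possibly with one coordinate reset to $1$ by a successful transmission. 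After rearrangement the target inequality decomposes into a weighted sum of probabilistic-supermodularity inequalities for $V^{\tilde t}$ on these sub-rectangles (each handled by the inductive hypothesis) plus single-coordinate differences that are absorbed using Lemma~\ref{lemma:monotone}.

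The main obstacle will be the cross-case $\mathbf{a}^*_{\mathbf{s}_b} \neq \mathbf{a}^*_{\mathbf{s}_c}$, where the two bounding $Q$-terms invoke different actions and so their expansions do not share a transition kernel. This is precisely why the lemma is stated with the weighted (``probabilistic'') form rather than ordinary supermodularity: the specific coefficient pattern $(p_{j,1}, p_{j,1}-p_{i,1})$ is what naturally emerges from the action-switching algebra when the two action choices are reconciled, whereas plain supermodularity of $V^{\tilde t}$ would not in general be preserved by $\mathsf{B}$ once the optimal sensor depends on the state. Once the inductive step is in place, Lemma~\ref{lemma:converge of V*} passes the property to $V^* = \lim_{\tilde t \to \infty} V^{\tilde t}$, completing the argument.
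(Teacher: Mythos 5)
Your overall architecture matches the paper's: induction along the value-iteration sequence with Lemma~\ref{lemma:converge of V*} passing the property to $V^{*}$, a direct dispatch of the case $p_{j,1}<p_{i,1}$ via Lemma~\ref{lemma:monotone}, the observation that the separable reward contributes $p_{i,1}\left[r(\mathbf{s}\wedge\mathbf{s}^\circ)-r(\mathbf{s})\right]\geq 0$, and a case analysis on the optimal actions in which the values at the meet and join are lower-bounded by $W$-functions evaluated at the actions optimal at $\mathbf{s}$ and $\mathbf{s}^\circ$. The first gap is that your inductive invariant is too weak. In the cross-action case where sensor $i$ is optimal at $\mathbf{s}$ and sensor $j$ at $\mathbf{s}^\circ$, the expansion leaves behind a combination of the form $V^{\tilde t}(\tau''_i{+}1,1)+V^{\tilde t}(\tau_i{+}1,\tau_j{+}1)-V^{\tilde t}(\tau_i{+}1,1)-V^{\tilde t}(\tau''_i{+}1,\tau_j{+}1)$ carrying a strictly positive coefficient ($\gamma(p_{j,1}-p_{j,1}p_{i,1})$ in the paper's bookkeeping). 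This is a plain submodularity expression on a sub-rectangle with one coordinate reset to $1$; it is neither an instance of the probabilistic inequality nor a single-coordinate difference, and the two differences inside it have opposite signs under Lemma~\ref{lemma:monotone}, so monotonicity cannot absorb it. The paper therefore runs a \emph{simultaneous} induction carrying ordinary submodularity of $V^{\tilde t}$ as a second, coupled invariant (and preserving submodularity through the Bellman operator in turn uses the probabilistic-supermodularity hypothesis in its own cross-action cases). An induction that propagates only the probabilistic inequality, as you propose, does not close.

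The second gap is the other cross-case, where sensor $j$ is optimal at $\mathbf{s}=(\tau_i,\tau_j)$ and sensor $i$ at $\mathbf{s}^\circ=(\tau''_i,\tau'_j)$. No ``action-switching algebra'' reconciles this configuration: with your proposed assignment of actions to the meet and join the terms do not telescope into instances of the inductive hypotheses. The paper instead proves this case \emph{cannot occur}: the probabilistic supermodularity of $V^{\tilde t}$ implies the one-step-greedy policy has the AoI-threshold property, so scheduling $j$ at $\mathbf{s}$ forces scheduling $j$ at $\mathbf{s}\vee\mathbf{s}^\circ$, and combining the resulting $W$-inequalities with the optimality of $i$ at $\mathbf{s}^\circ$ contradicts the inductive hypothesis. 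Without this impossibility argument (or a substitute for it), your induction has an unhandled branch.
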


\begin{proof}
    See Appendix~\ref{proof: sup and sub, N=2}.
\end{proof}

In what follows, we denote the state as $\mathbf{s} = (\tau_{i}, \tau_{j})$, since the transition of the channel states is constant during the proof. From \eqref{eq:Q(s,a*) > Q(s,a)}, it directly follows that 
\begin{align}
p_{i,1} V^{*}(1, \tau_{j} \!+\! 1) 
    \geq p_{j,1} V^{*}(\tau_{i} \!+\! 1, 1) \!-\! (p_{j,1} \!-\! p_{i,1}) V^{*}(\tau_{i} \!+\! 1, \tau_{j} \!+\! 1). \label{eq:Q<Q_1}
\end{align}
By applying Lemma~\ref{lemma:probabilistic sup} to the right-hand side of \eqref{eq:Q<Q_1}, we have
\begin{align*}
    & p_{i,1} V^{*}(1, \tau_{j}+1) \geq p_{j,1} V^{*}(\tau'_{i,1} + 1, 1)  -  (p_{j,1} - p_{i,1}) V^{*}(\tau'_{i} + 1, \tau_{j} + 1),
\end{align*}
which is exactly \eqref{eq:Q(s',a*) > Q(s',a)}. Thus, we have proved $Q(\mathbf{s}', \mathbf{a}^*) \geq Q(\mathbf{s}', \mathbf{a})$.
\end{proof}

\subsubsection{Multi-sensor-single-channel systems}
It is intractable to prove that the optimal policy has the AoI-state threshold property in this case. 
Similar to the two-sensor case, we first try to prove the probabilistic supermodularity. However, there are more than two AoI states changing when applying value iterations, making it difficult to find a set of useful inequalities to prove the target inequality.
We can also construct some other inequalities in terms of the optimal value function and the packet success rates that can be used to prove the threshold property, but these inequalities cannot be proved either.

Although the AoI-state threshold property of the optimal policy is not fully derived, we prove an asymptotic structural property as below.
\begin{theorem}\label{theo:N-s-1-c}
   The optimal policy $\pi^{*}(\cdot)$ of a multi-sensor-single-channel system with $N >2$ has an asymptotic AoI-state threshold property: suppose that for state $\mathbf{s} = (\bm{\tau}, \mathbf{H})$, the optimal action is to transmit the estimation of sensor $i$, i.e. $ a^*_{i} = 1$, then for all states $\mathbf{s}' = (\bm{\tau}'_{(i)}, \mathbf{H}),$ where $ \tau'_{i} \gg \tau_{i} $, the optimal policy is still $a'^*_{i} = 1$.
\end{theorem}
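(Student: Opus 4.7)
The plan is to show that $Q(\mathbf{s}', \mathbf{a}^*) \geq Q(\mathbf{s}', \mathbf{a})$ holds for every alternative action $\mathbf{a}$ that schedules some $j \neq i$, provided $\tau'_i$ is sufficiently large. Because only finitely many alternative actions exist (one per choice of $j$), the threshold on $\tau'_i$ can be made uniform by taking the maximum over $j$.

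First, I would expand $Q(\mathbf{s}', \mathbf{a}^*) - Q(\mathbf{s}', \mathbf{a})$ via~\eqref{eq:Bellman_Q}. The reward $r(\mathbf{s}')$ and the independent channel factor $\operatorname{Pr}(\mathbf{H}^+)$ appear in both Q-values and cancel or factor out. Under $\mathbf{a}^*$, sensor $i$'s next AoI equals $1$ w.p.\ $p_{i,1}$ and $\tau'_i+1$ w.p.\ $1-p_{i,1}$, while every other sensor's AoI deterministically increments. Under $\mathbf{a}$, sensor $i$'s next AoI is deterministically $\tau'_i+1$, and sensor $j$'s is $1$ or $\tau_j+1$ with respective probabilities $p_{j,1}$ and $1-p_{j,1}$. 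Regrouping the four resulting terms (exactly as in Remark~\ref{remark:difficulty}) yields
\begin{equation*}
Q(\mathbf{s}', \mathbf{a}^*) - Q(\mathbf{s}', \mathbf{a}) = \gamma \sum_{\mathbf{H}^+} \operatorname{Pr}(\mathbf{H}^+) \bigl[\, p_{i,1}(C - B) - p_{j,1}(A - B)\,\bigr],
\end{equation*}
where, setting $\bm{\tau}^+_{\backslash i,j} \triangleq (\tau_k+1)_{k \neq i,j}$, we have $A \triangleq V^*(\tau'_i+1, 1, \bm{\tau}^+_{\backslash i,j}, \mathbf{H}^+)$, $B \triangleq V^*(\tau'_i+1, \tau_j+1, \bm{\tau}^+_{\backslash i,j}, \mathbf{H}^+)$, and $C \triangleq V^*(1, \tau_j+1, \bm{\tau}^+_{\backslash i,j}, \mathbf{H}^+)$. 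Lemma~\ref{lemma:monotone} immediately gives $C-B \geq 0$ and $A-B \geq 0$, so the question is which term dominates.

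The crux is the pair of asymptotic estimates: (i) $C-B \to +\infty$ as $\tau'_i \to \infty$, and (ii) $A-B$ remains bounded uniformly in $\tau'_i$. For (i), $C$ is independent of $\tau'_i$, whereas the instantaneous reward already embedded in $B$ contains the term $-\operatorname{Tr}(f_i^{\tau'_i+1}(\bar{\mathbf{P}}_i))$, which diverges exponentially since $\rho(\mathbf{A}_i) > 1$; because the optimal policy cannot reset sensor $i$'s AoI with probability one in a single step, the discounted future cost tied to sensor $i$'s large AoI is of the same exponential order in $\tau'_i$. For (ii), I would couple the optimal trajectories emanating from $A$ and $B$ (both states carry sensor~$i$ at AoI $\tau'_i+1$) by applying a common policy and common channel/packet-success realizations; the two trajectories then evolve identically except for sensor $j$'s AoI, which starts $\tau_j$ apart and re-synchronizes after $j$'s first successful transmission. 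The resulting discounted-reward gap is bounded by a quantity depending on $\tau_j$ and $p_{j,1}$ alone, and in particular does not inflate with $\tau'_i$.

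Combining (i) and (ii), for each $j \neq i$ there exists a threshold $\bar{\tau}^{(j)}_i$ such that $\tau'_i \geq \bar{\tau}^{(j)}_i$ forces $p_{i,1}(C-B) \geq p_{j,1}(A-B)$, and setting $\bar{\tau}_i \triangleq \max_{j \neq i} \bar{\tau}^{(j)}_i$ yields the asymptotic threshold claimed in the statement. The main obstacle I anticipate is step (ii): the coupling is delicate because the optimal policies from $A$ and from $B$ may both devote many consecutive slots to sensor~$i$ in view of its huge AoI, deferring sensor $j$'s first successful transmission and letting its AoI drift substantially in both trajectories. A careful per-step bound exploiting the monotone, exponentially decaying gap in sensor $j$'s reward contribution should close the argument, but this very difficulty is presumably why the theorem is stated only asymptotically rather than extending the exact, supermodularity-based result proved for $N=2$.
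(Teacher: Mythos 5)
Your decomposition of $Q(\mathbf{s}',\mathbf{a}^*)-Q(\mathbf{s}',\mathbf{a})$ into $\gamma\sum_{\mathbf{H}^+}\operatorname{Pr}(\mathbf{H}^+)[\,p_{i,1}(C-B)-p_{j,1}(A-B)\,]$ is correct, and the overall strategy is sound in outline, but it is a genuinely different route from the paper's. The paper never argues directly at $\mathbf{s}'$: it starts from the hypothesis $Q(\mathbf{s},\mathbf{a}^*)\geq Q(\mathbf{s},\mathbf{a})$, rewrites it as \eqref{eq:Q<Q_2}, and then invokes an asymptotic probabilistic supermodularity inequality (Lemma~\ref{lemma: probabilistic sup, N>2}, itself proved by induction along the value iteration, with Lemma~\ref{lemma:much larger} playing the role of your estimate (i)) to replace $\tau_i$ by $\tau'_i$ on the right-hand side. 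You instead discard the hypothesis entirely and aim at the unconditional statement that scheduling sensor $i$ dominates every competitor once $\tau'_i$ is large, via ``divergent gain $p_{i,1}(C-B)$ versus bounded loss $p_{j,1}(A-B)$.'' If completed, your argument is more elementary (no supermodularity machinery, no case analysis over the optimal actions at the four lattice points), proves something strictly stronger, and makes the uniformity of the threshold over the finitely many competitors $j$ explicit, which the paper does not. What the paper's route buys is that its only asymptotic input is Lemma~\ref{lemma:much larger}; the rest is an induction that reuses the exact $N=2$ template of Lemma~\ref{lemma:probabilistic sup}.

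The genuine gap is your step (ii), the claim that $A-B$ is bounded \emph{uniformly in $\tau'_i$}. Your coupling charges the whole discrepancy to sensor $j$'s AoI offset of $\tau_j$ up to $j$'s first successful transmission at some random time $\sigma$, giving a bound of the form
$\mathbb{E}\bigl[\sum_{t<\sigma}\gamma^t\bigl(\operatorname{Tr}(f_j^{\tau_j^{(A)}(t)+\tau_j}(\bar{\mathbf{P}}_j))-\operatorname{Tr}(f_j^{\tau_j^{(A)}(t)}(\bar{\mathbf{P}}_j))\bigr)\bigr]$,
where $\tau_j^{(A)}(t)$ is sensor $j$'s AoI along the trajectory from the $A$-state. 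This is finite and $\tau'_i$-free only if either $\gamma\rho(\mathbf{A}_j)^2<1$ (with $\rho$ the spectral radius), which is strictly stronger than finiteness of $V^*$ and need not follow from the paper's stability assumption, or you control the law of $\sigma$ uniformly in $\tau'_i$ --- and $\sigma$ depends on how long the optimal policy keeps favouring sensor $i$, which in principle grows with $\tau'_i$. So (ii) is exactly where the work lives; it is the analogue of what Lemma~\ref{lemma: probabilistic sup, N>2} supplies in the paper. In fairness, the paper's own proof of that lemma (e.g.\ case (b.6) of Appendix~\ref{proof: sup and sub, N>2}) rests on a ``dominant element'' argument of the same asymptotic character as your (i)/(ii) dichotomy, so your proposal is comparably rigorous in spirit; but as written it does not close the theorem.
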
 
Theorem~\ref{theo:N-s-1-c} shows that the threshold structure still exists in the optimal policy of a multi-sensor-single-channel system, at least in the large AoI domain.

\begin{proof}
This theorem is equivalent to stating that if for state $\mathbf{s} = (\bm{\tau}, \mathbf{H})$, the inequality $Q(\mathbf{s}, \mathbf{a}^*) \geq Q(\mathbf{s}, \mathbf{a}), \forall \mathbf{a} \in \mathcal{A}$ exists, where $a_i^* = 1$, then for state $\mathbf{s}' = (\bm{\tau}'_{(i)}, \mathbf{H}),$ where $ \tau'_{i} \gg \tau_{i}$, we have $Q(\mathbf{s}', \mathbf{a}^*) \geq Q(\mathbf{s}', \mathbf{a})$. Similar to Theorem~\ref{theo:2-s-1-c}, we need to develop the following lemmas yet in an asymptotic manner.

\begin{lemma}\label{lemma:much larger}
    Given states $\mathbf{s} = (\bm{\tau}, \mathbf{H})$ and $\mathbf{s}' = (\bm{\tau}'_{(i)}, \mathbf{H}) $, where $\tau'_{i} \gg \tau_{i}$, the following holds
    \begin{equation}
        V^{*}(\mathbf{s}') \ll V^{*}(\mathbf{s}).
    \end{equation}
\end{lemma}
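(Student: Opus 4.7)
The plan is to combine a crude but universal upper bound on $V^{*}(\mathbf{s}')$ obtained from value iteration with a finite lower bound on $V^{*}(\mathbf{s})$ coming from the stability assumption. Because the instantaneous cost $\operatorname{Tr}(f_i^{\tau_i}(\bar{\mathbf{P}}_i))$ grows unboundedly with $\tau_i$ (the spectral radius of $\mathbf{A}_i$ exceeds one, as assumed in Section~\ref{sec: sys}), inflating $\tau'_i$ while keeping the remaining coordinates of $\mathbf{s}$ frozen forces $V^{*}(\mathbf{s}')$ to $-\infty$ while $V^{*}(\mathbf{s})$ stays bounded, which is the qualitative content of $V^{*}(\mathbf{s}')\ll V^{*}(\mathbf{s})$.

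Concretely, first I would show by induction on $\tilde t$ that every value-iteration iterate, starting from $V^{0}\equiv 0$, remains non-positive: the one-step reward $r(\mathbf{s})=-\sum_{n=1}^{N}\operatorname{Tr}(f_n^{\tau_n}(\bar{\mathbf{P}}_n))$ is non-positive, and the Bellman operator $\mathsf{B}$ in~\eqref{eq: bellman operator} preserves non-positivity because it adds $r$ to a non-positive continuation. Applying $\mathsf{B}$ once more at the state $\mathbf{s}'$, the non-positivity of $V^{\tilde t}(\mathbf{s}^{+})$ lets me drop the continuation term to get $V^{\tilde t+1}(\mathbf{s}')\leq r(\mathbf{s}')$; passing to the limit via Lemma~\ref{lemma:converge of V*} yields $V^{*}(\mathbf{s}')\leq r(\mathbf{s}')\leq -\operatorname{Tr}(f_i^{\tau'_i}(\bar{\mathbf{P}}_i))$, where the last step only drops the non-negative contributions of the other $N-1$ sensors.

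Next I would invoke the stability assumption stated in the footnote below Problem~\ref{pro1}, which guarantees that $V^{*}$ is finite at every state. Crucially, $V^{*}(\mathbf{s})$ depends only on the unperturbed coordinates and is therefore a fixed finite constant as $\tau'_i$ varies. Combining with the previous step, $V^{*}(\mathbf{s})-V^{*}(\mathbf{s}')\geq V^{*}(\mathbf{s})+\operatorname{Tr}(f_i^{\tau'_i}(\bar{\mathbf{P}}_i))$; since $\operatorname{Tr}(f_i^{\tau}(\bar{\mathbf{P}}_i))$ grows without bound (at least geometrically, because $\mathbf{A}_i$ is unstable) as $\tau$ increases, the right-hand side tends to $+\infty$ as $\tau'_i\to\infty$, establishing $V^{*}(\mathbf{s}')\ll V^{*}(\mathbf{s})$ in the precise sense that the gap can be made arbitrarily large by taking $\tau'_i$ sufficiently large relative to $\tau_i$.

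The main obstacle I foresee lies not in the derivation itself but in pinning down what $\ll$ is required to mean inside the subsequent proof of Theorem~\ref{theo:N-s-1-c}. The natural reading is that $V^{*}(\mathbf{s})-V^{*}(\mathbf{s}')$ can be made to dominate any prescribed finite quantity by enlarging $\tau'_i$, and that is what the argument above delivers; but one should confirm this is exactly the form invoked when Lemma~\ref{lemma:much larger} is used to out-scale the cross terms arising from different packet success rates in an asymptotic probabilistic-supermodularity step analogous to Lemma~\ref{lemma:probabilistic sup}. A secondary, milder technicality is monotonicity of $\operatorname{Tr}(f_i^{\tau}(\bar{\mathbf{P}}_i))$ in $\tau$, which is immediate from the observation already stated after~\eqref{eq:MSE} that the per-sensor estimation MSE is monotonically increasing in its AoI.
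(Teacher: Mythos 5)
Your proof is correct, but it takes a genuinely different route from the paper's. The paper proves this lemma by the same value-iteration induction template it uses for Lemmas~\ref{lemma:monotone} and~\ref{lemma:probabilistic sup}: it posits that the initial iterate already satisfies $V^{0}(\mathbf{s}')\ll V^{0}(\mathbf{s})$, splits on the optimal action at $\bm{\tau}'_{(i)}$ (schedule sensor $i$ or some $j\neq i$), and argues that the Bellman operator preserves the property because the reward gap $r(\tau'_i,\tau_j)-r(\tau_i,\tau_j)$ is itself very negative, so the property propagates to $V^{*}$. You instead avoid inducting on the ``$\ll$'' relation altogether: you upper-bound $V^{*}(\mathbf{s}')$ by the one-step reward $r(\mathbf{s}')$ (legitimate, since with $V^{0}\equiv 0$ every iterate is non-positive and Lemma~\ref{lemma:converge of V*} lets you pass to the limit), note that $r(\mathbf{s}')\leq-\operatorname{Tr}\bigl(f_i^{\tau'_i}(\bar{\mathbf{P}}_i)\bigr)\to-\infty$ because $\mathbf{A}_i$ is unstable, and lower-bound $V^{*}(\mathbf{s})$ by the finiteness guaranteed by the stability assumption in the footnote to Problem~\ref{pro1}, with $V^{*}(\mathbf{s})$ a fixed constant as $\tau'_i$ varies. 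What your approach buys is a quantitative, self-contained meaning for ``$\ll$'' --- the gap $V^{*}(\mathbf{s})-V^{*}(\mathbf{s}')$ exceeds any prescribed constant once $\tau'_i$ is large enough --- whereas the paper's induction inherits the imprecision of assuming the relation for $V^{0}$ and asserting its preservation without quantifying it. The concern you flag about whether this form matches what Lemma~\ref{lemma: probabilistic sup, N>2} needs resolves in your favor: there the large-AoI value term must dominate fixed finite combinations of other value-function evaluations, which is exactly the divergence your bound provides.
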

\begin{proof}
    See Appendix~\ref{proof: much larger}.
\end{proof}
Based on Lemma~\ref{lemma:much larger}, we derive the asymptotic probabilistic supermodularity below.
\begin{lemma}[Asymptotic Probabilistic Supermodularity, $N>2$, $M=1$]\label{lemma: probabilistic sup, N>2}
    Given states $\mathbf{s} = (\bm{\tau}, \mathbf{H})$ and $ \mathbf{s}^{\circ} = (\bm{\tau}^{\circ}, \mathbf{H})$, where $\bm{\tau} = (\tau_1, \dots, \tau_{i}, \dots, \tau_{j}, \dots, \tau_N)$ and $ \bm{\tau}^{\circ} = (\tau_1, \dots, \tau''_i, \dots, \tau'_{j}, \dots,\tau_N)$ with $\tau_i \gg \tau''_{i}$, $\tau_{j} \leq \tau'_{j}$, and $i\neq j \in \{1,\dots,N\}$, the probabilistic supermodularity in~\eqref{eq:sup1} holds.
\end{lemma}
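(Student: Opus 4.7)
The plan is to adapt the proof of Lemma~\ref{lemma:probabilistic sup} (the $N=2$ case) by induction on the value-iteration steps, exploiting the asymptotic hypothesis $\tau_i \gg \tau''_i$ at the step where the induction hypothesis alone is insufficient. Starting from $V^{0}\equiv 0$, which trivially satisfies \eqref{eq:sup1}, I would show that if $V^{\tilde{t}}$ satisfies the asymptotic supermodularity in \eqref{eq:sup1} whenever $\tau_i \gg \tau''_i$, then so does $V^{\tilde{t}+1}=\mathsf{B}[V^{\tilde{t}}]$. The limit $V^{*}$ inherits the inequality by Lemma~\ref{lemma:converge of V*}.

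For the inductive step, I would use a coupling argument. Let $\mathbf{a}^{*}_{\mathbf{s}}$ and $\mathbf{a}^{*}_{\mathbf{s}^{\circ}}$ be actions attaining the Bellman maximum at $\mathbf{s}$ and $\mathbf{s}^{\circ}$, respectively. On the left-hand side of \eqref{eq:sup1}, I would bound $V^{\tilde{t}+1}(\mathbf{s}\wedge\mathbf{s}^{\circ})$ from below by evaluating the Bellman operator at action $\mathbf{a}^{*}_{\mathbf{s}}$, and $V^{\tilde{t}+1}(\mathbf{s}\vee\mathbf{s}^{\circ})$ from below by evaluating at $\mathbf{a}^{*}_{\mathbf{s}^{\circ}}$, while using equality on the right-hand side. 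After factoring the transition kernel as in \eqref{eq: transition tau}, the four terms split into sums weighted by $\operatorname{Pr}(\tau_i^{+}|\tau_i,\mathbf{h}_i,a_i)$ and $\operatorname{Pr}(\tau_j^{+}|\tau_j,\mathbf{h}_j,a_j)$, with the $N-2$ remaining coordinates sharing a common marginal distribution. This reduces the target inequality to one among values of $V^{\tilde{t}}$ at one-step successor states, on which the induction hypothesis can be applied coordinate-wise (noting that $\tau_i + 1 \gg \tau''_i + 1$ and $1 = 1$ both preserve the asymptotic ordering).

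The main obstacle is the case analysis on the identities of $\mathbf{a}^{*}_{\mathbf{s}}$ and $\mathbf{a}^{*}_{\mathbf{s}^{\circ}}$: each action can schedule sensor $i$, sensor $j$, or any one of the remaining $N-2$ sensors, and each combination alters how the large gap between $\tau_i$ and $\tau''_i$ is propagated to the next iterate. When neither action schedules sensor $i$ or $j$, the successor states still satisfy the asymptotic hypothesis, and the induction hypothesis yields \eqref{eq:sup1} directly. The delicate cases arise when exactly one action resets coordinate $i$ (to $1$ upon successful transmission) while the other does not, so that the four successor states no longer form a meet/join pair in a consistent way. Here Lemma~\ref{lemma:much larger} is essential: any term in which $V^{\tilde{t}}$ is evaluated at a successor still carrying a large $\tau_i+1$ is asymptotically negligible compared with the term obtained by replacing that coordinate with $1$, so the cross-differences not controlled by the induction hypothesis become asymptotically dominated, and \eqref{eq:sup1} is preserved up to an error that vanishes as $\tau_i/\tau''_i\to\infty$. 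This explains precisely why only an asymptotic, rather than exact, supermodularity can be obtained for $N>2$, consistent with the discussion preceding the lemma.
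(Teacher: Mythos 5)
Your proposal takes essentially the same route as the paper's Appendix D: propagate the inequality through value iteration, case-split on the optimal actions at $\mathbf{s}$ and $\mathbf{s}^{\circ}$ (reducing to the $N=2$ argument of Lemma~\ref{lemma:probabilistic sup} when neither action involves a third sensor), and invoke Lemma~\ref{lemma:much larger} so that the terms carrying the large $\tau_i$ dominate in the mixed cases involving some sensor $k\neq i,j$. The only small point worth flagging is that your lower-bound substitutions on the meet/join terms are valid only when the coefficient $p_{j,1}-p_{i,1}$ is nonnegative; the paper isolates $p_{j,1}\leq p_{i,1}$ as a separate case settled by monotonicity alone, and this carries over from the Lemma~\ref{lemma:probabilistic sup} proof exactly as you intend.
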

\begin{proof}
    See Appendix~\ref{proof: sup and sub, N>2}.
\end{proof}

In the following, we will prove $Q(\mathbf{s}', \mathbf{a}^*) \geq Q(\mathbf{s}', \mathbf{a})$ based on $Q(\mathbf{s}, \mathbf{a}^*) \geq Q(\mathbf{s}, \mathbf{a})$ and similar to the proof of Theorem~\ref{theo:2-s-1-c}, we write the state as $\mathbf{s} = (\tau_{i}, \tau_{j}, \bm{\tau}_{\backslash i,j})$. Using~\eqref{eq:Bellman_Q}, $Q(\mathbf{s}, \mathbf{a}^*) \geq Q(\mathbf{s}, \mathbf{a})$ is equal to
\begin{align}
    p_{i,1} V^{*}(1, \tau_{j}+1, \bm{\tau}_{\backslash i,j}^{+}) \geq p_{j,1} V^{*}(\tau_{i}+1, 1, \bm{\tau}_{\backslash i,j}^{+}) - (p_{j,1}-p_{i,1}) V^{*}(\tau_{i}+1, \tau_{j}+1, \bm{\tau}_{\backslash i,j}^{+}). \label{eq:Q<Q_2}
\end{align}
By applying Lemma~\ref{lemma: probabilistic sup, N>2} to the right-hand side of~\eqref{eq:Q<Q_2}, we have
\begin{equation}
p_{i,1} V^{*}(1, \tau_{j}+1, \bm{\tau}_{\backslash i,j}^{+}) \geq p_{j,1} V^{*}(\tau'_{i}\!+ \!1, 1, \bm{\tau}_{\backslash i,j}^{+})  \!- \! (p_{j,1} \!- \!p_{i,1}) V^{*}(\tau'_{i} \!+ \!1, \tau_{j} \!+ \!1, \bm{\tau}_{\backslash i,j}^{+}), 
\end{equation}
and thus
\begin{align}
    & p_{i,1} V^{*}(1, \tau_{j}\!+\!1, \bm{\tau}_{\backslash i,j}^{+}) \!+\! (1\!-\!p_{i,1}) V^{*}(\tau'_{i}\!+\!1, \tau_{j}\!+\!1, \bm{\tau}_{\backslash i,j}^{+}) \\
    & \quad \geq p_{j,1} V^{*}(\tau'_{i}\!+\!1, 1, \bm{\tau}_{\backslash i,j}^{+}) + (1\!-\!p_{j,1}) V^{*}(\tau'_{i}\!+\!1, \tau_{j}\!+\!1, \bm{\tau}_{\backslash i,j}^{+}),
\end{align}
which is exactly $Q(\mathbf{s}', \mathbf{a}^*) \geq Q(\mathbf{s}', \mathbf{a})$ based on~\eqref{eq:Bellman_Q}.
\end{proof}

\subsubsection{Multi-sensor-multi-channel systems}
This case is more complex than the single-channel one, due to the increased state dimension introduced by the multi-dimensional channel. Thus, the AoI-state threshold property of the optimal policy cannot be derived strictly. In the following, we develop another structural property of the optimal policy.

\begin{proposition}\label{prop: multiple aoi}
If for state $\mathbf{s} = (\bm{\tau}, \mathbf{H})$, the optimal action is $\mathbf{a}^* = (a^*_1, \dots, a^*_i, \dots, a^*_N), a^*_i = m$, and  for state $\mathbf{s}' = (\bm{\tau}'_{(i)}, \mathbf{H}),$ where $ \tau'_{i} \geq \tau_{i}$, the optimal channel assignments except for channel $m$ keep constant, i.e. $a^*_{n} = {a'_{n}}^{*}, \forall a^*_{n} \neq 0,$ and $ i \neq n$, then the optimal action for sensor $j$ at state $\mathbf{s}'$ is to not occupy channel $m$, if sensor $j$ is not scheduled at state $\mathbf{s}$, i.e., ${a_j}^{*} =0$, and its channel condition is worse than that of sensor $i$, i.e., $h_{i,m} \geq h_{j,m}$.
\end{proposition}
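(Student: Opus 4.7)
The plan is to reduce the proposition to a pairwise Q-value comparison at $\mathbf{s}'$ and then import the optimality at $\mathbf{s}$ together with the monotonicity of the value function. First, I consider two candidate actions at state $\mathbf{s}'$ that both agree with $\mathbf{a}^*$ on every sensor $n \neq i, j$: the action $\mathbf{a}''$ that assigns channel $m$ to sensor $i$ (as in $\mathbf{a}^*$) and leaves $j$ unscheduled, and the alternative action $\mathbf{a}'$ that assigns channel $m$ to sensor $j$ while leaving $i$ unscheduled. Under the proposition's hypothesis that the optimal channel assignments for channels other than $m$ are preserved in going from $\mathbf{s}$ to $\mathbf{s}'$, these are the two configurations that need to be discriminated, and it suffices to prove $Q(\mathbf{s}', \mathbf{a}'') \geq Q(\mathbf{s}', \mathbf{a}')$ so that an optimal action at $\mathbf{s}'$ can be selected that does not allocate channel $m$ to $j$.

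Second, following the factorization $\operatorname{Pr}(\bm{\tau}^+ \mid \bm{\tau}, \mathbf{H}, \mathbf{a}) = \prod_n \operatorname{Pr}(\tau^+_n \mid \tau_n, \mathbf{h}_n, a_n)$ used in the proof of Theorem~\ref{theo: multiple channel}, I would write the difference $Q(\mathbf{s}', \mathbf{a}'') - Q(\mathbf{s}', \mathbf{a}')$ as $\gamma$ times an expectation over $(\mathbf{H}^+, \bm{\tau}^+_{\backslash i,j})$ of a pointwise integrand $f'$ that involves $V^{*}$ evaluated at four ``corner'' states differing only in the $i$-th and $j$-th AoI coordinates and weighted by the products of $p_{i,m}, 1-p_{i,m}, p_{j,m}, 1-p_{j,m}$. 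The marginal transitions for sensors $n \neq i, j$ are identical under $\mathbf{a}''$ and $\mathbf{a}'$ and cancel out. The analogous integrand $f$ at state $\mathbf{s}$, obtained from the optimality $Q(\mathbf{s}, \mathbf{a}^*) \geq Q(\mathbf{s}, \mathbf{a}_{\mathrm{swap}})$ applied to the swap action $\mathbf{a}_{\mathrm{swap}}$ at $\mathbf{s}$, differs from $f'$ only by having $\tau_i$ in place of $\tau'_i$, and satisfies $\mathbb{E}[f] \geq 0$.

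Third, the crux is to show that $f' \geq f$ pointwise. Rewriting $f' - f$ as a linear combination of the two value-function differences $V^{*}(\tau'_i+1, \tau_j+1, \cdot) - V^{*}(\tau_i+1, \tau_j+1, \cdot)$ and $V^{*}(\tau'_i+1, 1, \cdot) - V^{*}(\tau_i+1, 1, \cdot)$, both differences are non-positive by Lemma~\ref{lemma:monotone} since $\tau'_i \geq \tau_i$, while the coefficient on the first, $p_{j,m} - p_{i,m}$, is also non-positive because $h_{i,m} \geq h_{j,m}$ implies $p_{i,m} \geq p_{j,m}$ via the ordering $\tilde p_1 \geq \cdots \geq \tilde p_{\bar{h}}$. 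The remaining coefficient $-p_{j,m}$ is non-positive as well, so both summands in $f' - f$ are non-negative, yielding $f' \geq f$ pointwise. Taking expectations then gives $Q(\mathbf{s}', \mathbf{a}'') - Q(\mathbf{s}', \mathbf{a}') \geq \gamma\,\mathbb{E}[f] \geq 0$, which completes the argument.

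The main obstacle is organizational rather than conceptual: keeping the four corner values of $V^{*}$ straight and verifying that the marginal transitions of the sensors $n \neq i, j$ really do cancel between $\mathbf{a}''$ and $\mathbf{a}'$, which is exactly why the proposition's preservation hypothesis on the other channel assignments is imposed. In contrast with Theorems~\ref{theo:2-s-1-c} and~\ref{theo:N-s-1-c}, no probabilistic supermodularity lemma is required here, because the two actions being compared differ only by a single swap on channel $m$, and the sign of $p_{j,m} - p_{i,m}$ from the channel-ordering hypothesis suffices to close the inequality directly via Lemma~\ref{lemma:monotone}.
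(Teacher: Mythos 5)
Your proposal is correct and follows essentially the same route as the paper's own proof: both reduce the claim to the pairwise comparison $Q(\mathbf{s}',\mathbf{a}^*)\geq Q(\mathbf{s}',\mathbf{a}')$ for the single swap of channel $m$ between sensors $i$ and $j$, cancel the marginals of the other sensors via the product factorization of the transition kernel, extract the inequality $p_{i,m}V^{*}(1,\tau_j+1)\geq p_{j,m}V^{*}(\tau_i+1,1)+(p_{i,m}-p_{j,m})V^{*}(\tau_i+1,\tau_j+1)$ from optimality at $\mathbf{s}$, and then shift $\tau_i$ to $\tau'_i$ using Lemma~\ref{lemma:monotone} together with $p_{i,m}\geq p_{j,m}$ (your pointwise $f'\geq f$ step is exactly the paper's inequality~\eqref{eq: prop1 P(i) > P(j)}). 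Your observation that no probabilistic supermodularity is needed here also matches the paper.
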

Proposition~\ref{prop: multiple aoi} says that sensor $i$'s allocated channel won't be occupied by other sensors with worse channel conditions when sensor $i$'s AoI increases and the optimal channel assignment rules for other channels remain the same.
\begin{proof}
    This proposition equivalent to that if for state $\mathbf{s} = (\bm{\tau}, \mathbf{H})$, the inequality $Q(\mathbf{s}, \mathbf{a}^{*}) \geq Q(\mathbf{s}, \mathbf{a})$ exists, where $a^{*}_{i} = m$ and $ a^{*}_{j} = 0$, then for state $\mathbf{s}' = (\bm{\tau}'_{(i)}, \mathbf{H}),$ where $\tau'_{i} \geq \tau_{i}$, the following inequality holds $Q(\mathbf{s}', \mathbf{a}^{*}) \geq Q(\mathbf{s}', \mathbf{a}')$, where $a'_{j} = m, \forall h_{i,m} \geq h_{j,m}$ and $a^{*}_{\backslash i,j} = a'_{\backslash i,j}$.

    Therefore, we will prove $Q(\mathbf{s}', \mathbf{a}^{*}) \geq Q(\mathbf{s}', \mathbf{a}')$ based on $Q(\mathbf{s}, \mathbf{a}^{*}) \geq Q(\mathbf{s}, \mathbf{a})$ in the following.
    Taking~\eqref{eq: Q i -i} into $Q(\mathbf{s}, \mathbf{a}^{*}) \geq Q(\mathbf{s}, \mathbf{a}')$, we have
    \begin{align}
        & \hspace{-0.5cm}  \sum_{\bm{\tau}_{\backslash i,j}^{+}} \sum_{\mathbf{H}^{+}} \sum_{\tau_{i}^{+}} \sum_{\tau_{j}^{+}}   \operatorname{Pr} (\bm{\tau}_{\backslash i,j}^{+}|\bm{\tau}_{\backslash i,j}, \mathbf{H}_{\backslash i,j}, \mathbf{a}^{*}_{\backslash i,j}) \operatorname{Pr} (\mathbf{H}^{+}) \operatorname{Pr} (\tau_{i}^{+}|\tau_{i},\! \mathbf{h}_{i},\! a^{*}_{i})  \operatorname{Pr} (\tau_{j}^{+}|\tau_{j},\! \mathbf{h}_{j},\! a^{*}_{j}) V^{*}(\mathbf{s}^{+}) \\
        & \hspace{-0.5cm} \!\geq\!  \sum_{\bm{\tau}_{\backslash i,j}^{+}} \sum_{\mathbf{H}^{+}} \sum_{\tau_{i}^{+}} \sum_{\tau_{j}^{+}}   \operatorname{Pr} (\bm{\tau}_{\backslash i,j}^{+}|\bm{\tau}_{\backslash i,j},\! \mathbf{H}_{\backslash i,j},\! \mathbf{a}'_{\backslash i,j}) \! \operatorname{Pr} (\mathbf{H}^{+}) \!  \operatorname{Pr} (\tau_{i}^{+}|\tau_{i},\! \mathbf{h}_{i},\! a'_{i}) \! \operatorname{Pr} (\tau_{j}^{+}|\tau_{j},\! \mathbf{h}_{j},\! a'_{j}) V^{*}(\mathbf{s}^{+}). \label{eq: pro2 Q1<Q2}
    \end{align}
Based on $a^{*}_{\backslash i,j} = a'_{\backslash i,j}$, we have
\begin{equation}\label{eq: P(-i,-j)}
    \operatorname{Pr} (\bm{\tau}_{\backslash i,j}^{+}|\bm{\tau}_{\backslash i,j}, \mathbf{H}_{\backslash i,j}, \mathbf{a}^{*}_{\backslash i,j}) 
    = \operatorname{Pr} (\bm{\tau}_{\backslash i,j}^{+}|\bm{\tau}_{\backslash i,j}, \mathbf{H}_{\backslash i,j}, \mathbf{a}'_{\backslash i,j}).
\end{equation}
Thus, from~\eqref{eq: pro2 Q1<Q2}, the following inequality can be derived
\begin{align}
    \!\!\sum_{\tau_{i}^{+}} \!\sum_{\tau_{j}^{+}}  \operatorname{Pr}\! \left(\tau_{i}^{+}|\tau_{i},\! \mathbf{h}_{i},\! a^{*}_{i}\right) \! 
     \operatorname{Pr} \!\left(\tau_{j}^{+}|\tau_{j},\! \mathbf{h}_{j},\! a^{*}_{j}\right) \!\! V^{*}\!\!\left(\mathbf{s}^{+}\right) 
     \!\geq\! \sum_{\tau_{i}^{+}}\!\sum_{\tau_{j}^{+}} \operatorname{Pr} \! \left(\tau_{i}^{+}|\tau_{i},\! \mathbf{h}_{i},\! a'_{i}\right) \!
     \operatorname{Pr} \! \left(\tau_{j}^{+}|\tau_{j},\! \mathbf{h}_{j},\! a'_{j}\right) \!\! V^{*}\!\!\left(\mathbf{s}^{+}\right). \!\!
\end{align}
For notation simplicity, the state can be rewritten as $\mathbf{s} = (\tau_{i}, \tau_{j})$, because the states except for $\tau_{i}$ and $\tau_{j}$ are constant.
Based on $a^{*}_{i} = m, a^{*}_{j} = 0,a'_{i} = 0,$ and $a'_{j} = m$, the left-hand side and the right hand side of~\eqref{eq: pro2 Q1<Q2} are equal to 
\begin{align}
\begin{cases}
    p_{i,m} \! V^{*}(1, \!\tau_{j}\!+\!\!1) \!+\! (1\!\!-\!p_{i,m}) V^{*}(\tau_{i}\!+\!\!1, \!\tau_{j}\!+\!\!1)\\
    p_{j,m} \!V^{*}(\tau_{i}\!+\!\!1,\! 1) \!+\! (1\!\!-\!p_{j,m}) V^{*}(\tau_{i}\!+\!\!1, \!\tau_{j}\!+\!\!1). \label{eq: P(i,j),a}
\end{cases}
\end{align}
Then, the following inequality can be derived
\begin{align}
    p_{i,m} V^{*}(1, \tau_{j}+1) 
    & \!\geq p_{j,m} V^{*}(\tau_{i}+1, 1) + (p_{i,m} - p_{j,m}) V^{*}(\tau_{i}+1, \tau_{j}+1) \\
    \label{eq: prop1 P(i) > P(j)}
    & \!\geq p_{j,m} V^{*}(\tau'_{i}+1, 1) + (p_{i,m} - p_{j,m}) V^{*}(\tau'_{i}+1, \tau_{j}+1), 
\end{align}
where the first inequality is derived by~\eqref{eq: pro2 Q1<Q2},~\eqref{eq: P(-i,-j)}, and~\eqref{eq: P(i,j),a}, and the second inequality is by using Lemma~\ref{lemma:monotone} and $p_{i,m} \geq p_{j,m}$.
The inequality~\eqref{eq: prop1 P(i) > P(j)} can be rewritten as
\begin{align}\label{eq:Vi(tau'_i)>Vj(tau'_j)}
     p_{i,m} V^{*}(1, \!\tau_{j}\!+\!\!1) \!+\! (1 \!-\! p_{i,m}) V^{*}(\tau'_{i} \!+\!\!1, \!\tau_{j}\!+\!\!1) \!\geq\! p_{j,m} V^{*}(\tau'_{i}\!+\!\!1, \!1) \!+\! (1 \!-\! p_{j,m}) V^{*}(\tau'_{i}\!+\!\!1, \!\tau_{j}\!+\!\!1). \hspace{-0.2cm}
\end{align}
According to the analysis from~\eqref{eq: pro2 Q1<Q2} to~\eqref{eq: P(i,j),a}, it directly follows $Q(\mathbf{s}', \mathbf{a}^{*}) \geq Q(\mathbf{s}', \mathbf{a}')$ from~\eqref{eq:Vi(tau'_i)>Vj(tau'_j)}.
\end{proof}

\begin{remark}[Generality of the derived results]\label{remark: different reward function}
We note that all the theoretical results proved in this section merely rely on two features of the formulated optimal scheduling problem: 1) the reward function of each user is a monotonically increasing function in terms of its AoI state, and 2) the total reward function of the problem is the sum of individual rewards.
In other words, for other scheduling problems of different systems, if the two features are satisfied, the structural properties still hold.
For example, if a scheduling problem is for minimizing the overall expected sum AoI (i.e., the instantaneous reward is $r(\mathbf{s}_{t}) = \sum_{n=1}^{N} -\tau_{n,t}$), not the sum estimation MSE, then the optimal policy does have the same structural properties.
\end{remark}

\begin{remark}[Counterexample]\label{remark:counterexample}
We can show that if the considered optimal scheduling problem of a multi-sensor remote estimation system has a reward function as the negative product of the individual estimation MSE, i.e., $r(\mathbf{s}) = -\prod_{i=1}^{N} \operatorname{Tr} (\mathbf{P}_{i,t})$, which does not satisfy the second feature in Remark~\ref{remark: different reward function}, then the threshold structure may not exist. Fig~\ref{fig:counterexample} illustrates the optimal scheduling policy of a two-sensor-single-channel remote estimation system with the constructed reward function. We see that the optimal policy is not a threshold policy as in Theorem~\ref{theo:2-s-1-c}.
\end{remark}
\begin{figure}[t]
    \centering
        \centering
        \includegraphics[width=0.45\textwidth]{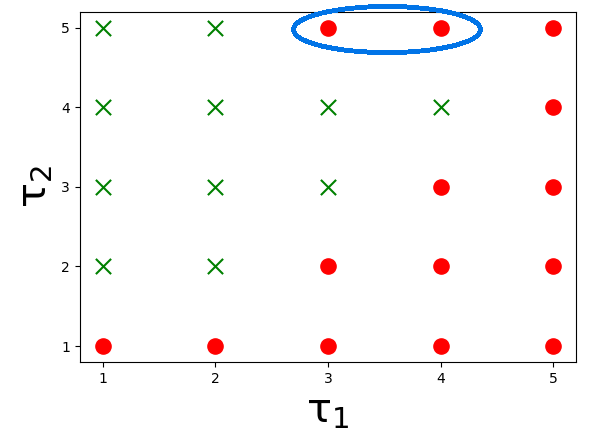}
        \vspace{-0.2cm}
        \caption{The optimal policy of a two-sensor-single-channel scheduling problem with the multiplicative reward function, where $\bullet$ and $\times$ represent the schedule of sensor 1 and 2, respectively.}
        \label{fig:counterexample}
    \vspace{-0.7cm}
\end{figure}

\section{Structure-Enhanced DRL} \label{sec: DRL}
In the literature, DQN and DDPG are the most commonly adopted off-policy DRL algorithms for solving optimal scheduling problems (see \cite{leong2020DRL,pang2022drl} and references therein), and they provide significant performance improvements over heuristic policies.
Next, we develop threshold structure-enhanced (SE) DQN and DDPG for solving Problem~\ref{pro1} based on the structural properties in Definitions~\ref{def: threshold-channel} and~\ref{def: threshold-AoI}. The performance improvement of SE algorithms will be presented in Section~\ref{sec: simulation}.

\subsection{Structure-Enhanced DQN}
Based on~\eqref{eq:V} and~\eqref{eq:Bellman_Q}, we have the Bellman equation in terms of the Q-values of the optimal policy:
\begin{equation}\label{eq:optimal Q}
    Q(\mathbf{s}_t, \mathbf{a}_t) = r(\mathbf{s}_t) + \mathbb{E}_{\mathbf{s}_{t+1}} \left[ \gamma \max_{\mathbf{a}_{t+1}}Q \left(\mathbf{s}_{t+1}, \mathbf{a}_{t+1}\right) \right].
\end{equation}
A well-trained DQN uses a neural network (NN) with the parameter set $\bm{\theta}$ to approximate $Q(\mathbf{s}_{t}, \mathbf{a}_{t})$ by $Q(\mathbf{s}_{t}, \mathbf{a}_{t};\bm{\theta})$ and use it to find the optimal action, i.e., $\mathbf{a}^*_t = \operatorname{argmax}_{\mathbf{a}_{t}\in\mathcal{A}}Q(\mathbf{s}_{t}, \mathbf{a}_{t};\bm{\theta})$. 
Considering the action space defined in Section~\ref{sec:MDP}, the DQN has $N!/(N-M)!$ Q-value outputs of different state-action pairs.
To train the DQN, one needs to sample data (consisting of states, actions, rewards, and next states), define a loss function of $\bm \theta$ based on the collected data, and minimize the loss function to find the optimized $\bm \theta$. 
However, the conventional DQN training method has never utilized structures of optimal policies before.

To utilize the knowledge of the threshold policy structure for enhancing the DQN training performance, we propose \textbf{1) an SE action selection method} based on Definitions~\ref{def: threshold-channel} and~\ref{def: threshold-AoI} to select reasonable actions and hence  enhance the data sampling efficiency; 
and \textbf{2) an SE loss function definition} to add the penalty to sampled actions that do not follow the threshold structure.

Our SE-DQN training algorithm has three stages: 1) the DQN with \textbf{loose SE action selection} stage, which only utilizes part of the structural property,  2) the DQN with \textbf{tight SE action selection} stage utilizes the full structural property, and 3) the conventional DQN stage. 
The first two stages use the SE action selection schemes and the SE loss function to train the DQN fast, resulting in a reasonable threshold policy, and the last stage is for further policy exploration.
In what follows, we present the loose and tight SE action selection schemes and the SE loss function.

\subsubsection{Loose SE action selection}
We randomly select an action $\mathbf{a}^\epsilon$ from the entire action space with a probability of $\epsilon$ for action exploration; with a probability of $(1-\epsilon)$, we generate the SE action $\hat{\mathbf{a}}$ as below. For simplicity, we drop the time index when describing action selections.

The threshold structure suggests that the actions of $\mathbf{s}$ and the state with a smaller AoI or channel state are correlated. 
Thus, one can infer the action based on the action at the state with a smaller channel, or AoI state, based on the channel-state and the AoI-state threshold properties in Definitions~\ref{def: threshold-channel} and~\ref{def: threshold-AoI}, respectively.
We only consider the AoI-state threshold property for loose SE action selection, as it is difficult to find actions that satisfy both structural properties at the beginning of training.  We will utilize both of them in the tight SE action selection stage.

Given the state $\mathbf{s} = (\bm{\tau}, \mathbf{H})$, we define the state with a smaller sensor $n$'s AoI as $\dot{\mathbf{s}}^{(n)} = (\dot{\bm{\tau}}^{(n)}, \mathbf{H})$, where 
\begin{equation}\label{eq: last AoI}
    \dot{\bm{\tau}}^{(n)} = (\tau_1, \dots, \tau_{n}-1, \dots, \tau_N).
\end{equation}
For each $n$, we calculate the corresponding action based on the Q-values as
\begin{align}\label{eq:AoI action}
    \dot{\mathbf{a}}^{(n)} \triangleq  (\dot{a}^{(n)}_{1}, \dots, \dot{a}^{(n)}_{n}, \dots, \dot{a}^{(n)}_{N})= \mathop{\arg\max}_{\dot{\mathbf{a}}} Q(\dot{\mathbf{s}}^{(n)}, \dot{\mathbf{a}}; \bm{\theta}).\
\end{align}
Recall that $\dot{a}^{(n)}_{n}\in\{0,1,\dots,M\}$ is the channel index assigned to sensor $n$ at the state $\dot{\mathbf{s}}^{(n)}$.

If $\dot{a}^{(n)}_{n}>0$, then the AoI-state threshold property implies that channel $\dot{a}^{(n)}_{n}$ or a better channel is assigned to sensor $n$ at the state $\mathbf{s}$. We define the set of channels with better quality as
\begin{equation}\label{eq:set M}
\mathcal{M}^{(n)} \triangleq \{m'|h_{n,m'} > h_{n,\dot{a}^{(n)}_{n}}, m'=1,\dots,M \}.
\end{equation}
Then, the SE action for sensor $n$, say $\hat{a}_n$, is randomly chosen from the set $\mathcal{M}^{(n)}$ with probability~$\xi$, and is equal to $\dot{a}^{(n)}_{n}$ with probability $1-\xi$.

If $\dot{a}^{(n)}_{n}=0$, then the AoI-state threshold property cannot help with determining the action. Then, we define the action selected by the greedy policy (i.e., the conventional DQN method) at the state $\mathbf{s}$ as
\begin{equation}\label{eq: greedy action}
\tilde{\mathbf{a}} \triangleq (\tilde{a}_1,\tilde{a}_2,\dots,\tilde{a}_N) = \mathop{\arg\max}_{\mathbf{a}} Q(\mathbf{s}, \mathbf{a}; \bm{\theta}).
\end{equation}
Thus, we set the SE action of sensor $n$ identical to the one generated by the conventional DQN method, i.e., $\hat{a}_n=\tilde{a}_n$.

Now we can define the SE action for $N$ sensors as $\hat{\mathbf{a}} = (\hat{a}_{1}, \hat{a}_{2}, \dots, \hat{a}_{N})$. If such an action meets the constraint
\begin{align}\label{eq: loose action constraint}
    \sum_{m=1}^{M} \boldsymbol{\mathbbm{1}}\left( \hat{a}_{n} = m \right) \leq 1, \quad
    \sum_{n=1}^{N} \boldsymbol{\mathbbm{1}}\left( \hat{a}_{n} = m \right) \leq 1,
\end{align}
which is less restrictive than \eqref{eq: action constraint},
we select the action $\mathbf{a}$ as $\hat{\mathbf{a}}$ and assign the unused channels randomly to unscheduled sensors; otherwise, $\mathbf{a}$ is identical to that of the conventional method as~$\tilde{\mathbf{a}}$. 

\subsubsection{Tight SE action selection} By using the loose SE action selection, we first infer the scheduling action of sensor $n$ at the state $\mathbf{s}$ based on the action of the state with a smaller AoI, $\dot{\mathbf{s}}^{(n)}$. Then, we check whether the loose SE action satisfies the channel-state threshold property as below.

For notation simplicity, we use $m>0$ to denote the SE channel selection for sensor $n$. 
Given the state $\mathbf{s}$, we define the state with a smaller channel $m$'s state for sensor $n$ as $\ddot{\mathbf{s}}^{(n,m)} = (\bm{\tau}, \ddot{\mathbf{H}}^{(n,m)})$, where $\ddot{\mathbf{H}}^{(n,m)}$ and $\mathbf{H}$ are identical except the element $\ddot{h}^{(n,m)}_{n,m} = h_{n,m} -1$. Then, we calculate the corresponding action
\begin{equation} \label{eq:channel action}
    \ddot{\mathbf{a}}^{(n,m)} \!\!\triangleq\! (\ddot{a}^{(n,m)}_{1}\!\!\!, \!\dots,\! \ddot{a}^{(n,m)}_{n}\!\!\!, \dots, \ddot{a}^{(n,m)}_{N}) 
    \!=\! \mathop{\arg\max}_{\ddot{\mathbf{a}}} Q(\ddot{\mathbf{s}}^{(n,m)}\!\!\!, \ddot{\mathbf{a}};\! \bm{\theta}). 
\end{equation}
From the channel-state and the AoI-state threshold properties, the scheduling action $\ddot{a}^{(n,m)}_{n}$ should be identical to $m$. 
Thus, if $\ddot{a}^{(n,m)}_{n}=m$, then the SE action for sensor $n$ is $\hat{a}_n=m$; otherwise, $\hat{a}_n$ is identical to the conventional DQN action.
The SE action satisfying both the threshold properties is $\hat{\mathbf{a}} = (\hat{a}_{1}, \hat{a}_{2}, \dots, \hat{a}_{N})$. If such an action meets the constraint~\eqref{eq: action constraint}, 
then it is executed as $\mathbf{a} = \hat{\mathbf{a}}$; otherwise, we select the greedy action $\mathbf{a} = \tilde{\mathbf{a}}$. 

\subsubsection{SE loss function}
During the training, each transition $(\mathbf{s}_t, \mathbf{a}_t, \hat{\mathbf{a}}_t, \tilde{\mathbf{a}}_{t}, r_t, \mathbf{s}_{t+1})$ is stored in a replay memory, where $r_t\triangleq r(\mathbf{s}_t)$ denotes the immediate reward.
Different from the conventional DQN, we include both the SE action $\hat{\mathbf{a}}$ and the greedy action $\tilde{\mathbf{a}}$, in addition to the executed action $\mathbf{a}$.

\begin{algorithm}[t]
    \small
    \caption{\small{SE-DQN for sensor scheduling in the  remote estimation system}}
    \label{alg:SE-DQN}
    \begin{algorithmic}[1]
        \State Initialize replay memory $\mathcal{D}$ to capacity $N$
        \State Initialize policy network with random weights $\bm{\theta}$
        \State Initialize target network with weights $\hat{\bm{\theta}} = \bm{\theta}$
        \For {episode $= 1, 2, \dots, E_{1}$}
            \State Initialize state $\mathbf{s}_0$
            \For {$t = 1, 2, \dots, T$}
                \State Generate $\hat{\mathbf{a}}_t$ and select action $\mathbf{a}_t$ by the loose SE action selection method of SE-DQN
                \State Execute action $\mathbf{a}_t$ and observe $r_t$ and $\mathbf{s}_{t+1}$
                \State Compute action $\tilde{\mathbf{a}}_{t} = \mathop{\arg\max}_{\mathbf{a}} Q(\mathbf{s}_{t}, \mathbf{a}; \bm{\theta})$
                \State Store transition $(\mathbf{s}_t, \mathbf{a}_t, \hat{\mathbf{a}}_t, \tilde{\mathbf{a}}_{t}, r_t, \mathbf{s}_{t+1})$ in $\mathcal{D}$
                \State Sample a random batch of transitions $(\mathbf{s}_{i}, \mathbf{a}_{i}, \hat{\mathbf{a}}_{i}, \tilde{\mathbf{a}}_{i}, r_{i}, \mathbf{s}_{i+1})$
                \State Calculate $\mathsf{TD}_i$ and $\mathsf{AD}_i$ based on \eqref{eq:TD} and \eqref{eq:AD}
                \State Update $\bm \theta$ according to equation \eqref{grat:critic}
                \State Every $t'$ steps set $\hat{\bm{\theta}} = \bm{\theta}_{t}$
            \EndFor
        \EndFor
        \For {episode $= E_{1}, \dots, E_{2}$}
            \State Repeat algorithm from line 5 to line 15 by adopting the tight SE action selection method
        \EndFor
        \For {episode $= E_{2}, \dots, E$}
            \State Execute the DQN algorithm as in \cite{leong2020DRL}
        \EndFor
    \end{algorithmic}
\end{algorithm}

Let $\mathcal{T}_{i} \triangleq (\mathbf{s}_i, \mathbf{a}_i, \hat{\mathbf{a}}_i, \tilde{\mathbf{a}}_{t}, r_i, \mathbf{s}_{i+1})$ denote the $i$th transition of a sampled batch from the replay memory. 
Same as the conventional DQN, we define the temporal-difference (TD) error of $\mathcal{T}_i$ as
\begin{equation}\label{eq:TD}
    \mathsf{TD}_{i} \triangleq y_{i}-Q(\mathbf{s}_{i},\mathbf{a}_{i};\bm{\theta}),
\end{equation} 
where $y_{i} = r_{i} + \gamma \max_{{\mathbf{a}}_{i+1}} Q(\mathbf{s}_{i+1},{\mathbf{a}}_{i+1};\bm{\theta})$ is the estimation of Q-value at next step.
This is to measure the gap between the left and right sides of the Bellman equation~\eqref{eq:optimal Q}. A larger gap indicates that the approximated Q-values are far from the optimal.
Different from DQN, we introduce the action-difference (AD) error as below to measure the difference of Q-value between actions selected by the SE strategy and the greedy strategy:
\begin{equation}\label{eq:AD}
    \mathsf{AD}_{i} \triangleq Q(\mathbf{s}_{i},\hat{\mathbf{a}}_{i};\bm{\theta}) - Q(\mathbf{s}_{i}, \tilde{\mathbf{a}}_{i};\bm{\theta}).
\end{equation}
Since the optimal policy has the threshold structure, the inferred action $\hat{\mathbf{a}}$ should be identical to the optimal action $\tilde{\mathbf{a}}$. Thus, a well-trained $\bm \theta$ should lead to a small difference in \eqref{eq:AD}. 

Based on \eqref{eq:TD} and \eqref{eq:AD}, we define the loss function of $\mathcal{T}_i$ as \vspace{-0.5cm}
\begin{equation}\label{loss:critic_i}
    L_{i}(\bm{\theta}) = \left\{
    \begin{array}{l}
    \begin{aligned}
        & \alpha_1 \mathsf{TD}^{2}_{i} + (1-\alpha_1) \mathsf{AD}^{2}_{i},  \quad \text{if $\mathbf{a}_{i} = \hat{\mathbf{a}}_{i}$} \\
        & \mathsf{TD}^{2}_{i}, \qquad \qquad \qquad \qquad \ \;  \text{otherwise},
    \end{aligned}
    \end{array}
    \right. 
\end{equation}
where $\alpha_1$ is a hyperparameter to balance the importance of $\mathsf{TD}_i$ and $\mathsf{AD}_i$.
In other words, if the SE action is executed, both the TD and AD errors are taken into account; otherwise, the conventional TD-error-based loss function is adopted.

Given the batch size $B$, the overall loss function is
\begin{equation}\label{loss:critic}
    L(\bm{\theta}) = \frac{1}{B} \sum_{i=1}^{B} L_{i}(\bm{\theta}).
\end{equation}
To optimize $\bm \theta$, we adopt the well-known gradient descent method and calculate the gradient as below
\begin{equation}\label{grat:critic}
    \nabla_{\bm{\theta}} L(\bm{\theta}) = \frac{1}{B} \sum_{i=1}^{B} \nabla_{\bm{\theta}} L_{i}(\bm{\theta}),
\end{equation}
where $\nabla_{\bm{\theta}} L_{i}(\bm{\theta})$ is given as
\begin{equation}\label{grat:critic_i}
    \nabla_{\bm{\theta}} L_{i}(\bm{\theta}) = \left\{
    \begin{array}{l}
    \begin{aligned}
        & \!\!\!-2 ((1-\alpha_1) \mathsf{AD}_{i} \nabla_{\bm{\theta}}\left(Q\left(\mathbf{s}_{i},\hat{\mathbf{a}}_{i};\bm{\theta}) - Q(\mathbf{s}_{i}, \tilde{\mathbf{a}}_{i};\bm{\theta}\right)\right) \\
        & \ \ +\alpha_1\mathsf{TD}_{i} \nabla_{\bm{\theta}} Q(\mathbf{s}_{i}, \mathbf{a}_{i};\bm{\theta})), \quad \text{if $\mathbf{a}_{i} = \hat{\mathbf{a}}_{i}$} \\
        & \!\!\! -2(\mathsf{TD}_{i} \nabla_{\bm{\theta}} Q(\mathbf{s}_{i}, \mathbf{a}_{i};\bm{\theta})), \qquad \; \, \text{otherwise}.
    \end{aligned}
    \end{array}
    \right. 
\end{equation}

The details of the SE-DQN algorithm are given in Algorithm~\ref{alg:SE-DQN}. 


\subsection{Structure-Enhanced DDPG}
Different from the DQN, which has one NN to estimate the Q-value, a DDPG agent has two NNs~\cite{lillicrap2015DDPG}, a critic NN with parameter $\bm \theta$ and an actor NN with the parameter $\bm \mu$.
In particular,  the actor NN   approximates the optimal policy $\pi^*(\mathbf{s})$ by $\mu(\mathbf{s};\bm{\mu})$, while the critic NN approximates the Q-value of the optimal policy $Q(\mathbf{s}, \mathbf{a})$ by $Q(\mathbf{s}, \mathbf{a};\bm{\theta})$.
In general, the critic NN judges whether the generated action of the actor NN is good or not, and the latter can be improved based on the former. The critic NN is updated by minimizing the TD error similar to the DQN.
The actor-critic framework enables DDPG to solve MDPs with continuous and large action space, which cannot be handled by the DQN.

To solve our scheduling problem with a discrete action space, we adopt an action mapping scheme similar to the one adopted in~\cite{pang2022drl}. 
We set the direct output of the actor NN with $N$ continuous values, ranging from $-1$ to $1$, corresponding to sensors $1$ to $N$, respectively. Recall that the DQN has $N!/(N-M)!$ outputs.
The $N$ values are sorted in descending order. The sensors with the highest $M$ ranking are assigned to channels $1$ to $M$, respectively. The corresponding ranking values are then linearly normalized to $[-1,1]$ as the output of the final outputs of the actor NN, named as the virtual action $\mathbf{v}$. 
It directly follows that the virtual action $\mathbf{v}$ and the real scheduling action $\mathbf{a}$ can be mapped from one to the other directly. 
Therefore, we use the virtual action $\mathbf{v}$, instead of the real action $\mathbf{a}$, when presenting the SE-DDPG algorithm.

Similar to the SE-DQN, the SE-DDPG has the loose SE-DDPG stage, the tight SE-DDPG stage, and the conventional DDPG stage. The $i$th sampled transition is denoted as
\begin{equation}
\mathcal{T}_{i} \triangleq (\mathbf{s}_i, \mathbf{v}_i, \hat{\mathbf{v}}_i, \tilde{\mathbf{v}}_{t}, r_i, \mathbf{s}_{i+1}).
\end{equation} 
We present the SE action selection method and the SE loss function in the sequel.

\subsubsection{SE action selection}\label{subsec:SE-DDPG action} The general action selection approach for the SE-DDPG is identical to that of the SE-DQN, by simply converting $\mathbf{v}_i$, $\mathbf{v}^\epsilon_i$, $\tilde{\mathbf{v}}_i$, and $\hat{\mathbf{v}}_i$ to $\mathbf{a}_i$, $\mathbf{a}^\epsilon_i$, $\tilde{\mathbf{a}}_i$, and $\hat{\mathbf{a}}_i$, respectively. 
Different from DQN with $\epsilon$-greedy actions, the action generated by the DDPG based on the current state is 
\begin{equation}\label{eq:tilde_v}
\tilde{\mathbf{v}}_i = \mu(\mathbf{s}_i;\bm{\mu}),
\end{equation}
and the random action $\mathbf{v}^\epsilon_i$ was generated by adding noise to the original continuous output of the actor NN.

\subsubsection{SE loss function} Different from the SE-DQN, the SE-DDPG needs different loss functions for updating the critic NN and the actor NN.
For the critic NN, we use the same loss function as in \eqref{loss:critic}, and thus the gradient for the critic NN update is identical to \eqref{grat:critic}.
Note that for DDPG, the next virtual action $\tilde{\mathbf{v}}_{i+1}$ is the direct output of the actor NN given the next state $\mathbf{s}_{i+1}$, i.e., $\tilde{\mathbf{v}}_{i+1} = \mu(\mathbf{s}_{i+1};\bm \mu)$. Thus, when calculating the TD error \eqref{eq:TD}, we have $y_{i} = r_{i} + \gamma Q(\mathbf{s}_{i+1}, {\mu}(\mathbf{s}_{i+1};\bm\mu);\bm{\theta})$.

For the actor NN, we introduce the difference between actions selected by the SE strategy $\hat{\mathbf{v}}_i$ and the actor NN  $\tilde{\mathbf{v}}_i$, when $\hat{\mathbf{v}}_i$ is executed, i.e., $\mathbf{v}_i=\hat{\mathbf{v}}_i$. If the SE action is not selected, then the loss function is the Q-value given the state-action pair, which is identical to the conventional DDPG. Given the hyperparameter $\alpha_2$, the loss function for the transition~$\mathcal{T}_i$ is defined as
\begin{equation}\label{loss:actor_i}
    L_{i}(\bm{\mu}) = \left\{
    \begin{array}{l}
    \begin{aligned}
        & \alpha_2 Q(\mathbf{s}_{i}, \tilde{\mathbf{v}}_{i}; \bm{\theta}) + (1-\alpha_2) \left( \mathbf{v}_{i} - \tilde{\mathbf{v}}_{i} \right)^{2},  \ \ \text{if $\mathbf{v}_{i} = \hat{\mathbf{v}}_{i}$} \\
        & Q(\mathbf{s}_{i}, \tilde{\mathbf{v}}_{i}; \bm{\theta}), \qquad \qquad \qquad \qquad \qquad \ \ \  \text{otherwise},
    \end{aligned}
    \end{array}
    \right. 
\end{equation}
\begin{algorithm}[t]
\small
    \caption{\small{SE-DDPG for sensor scheduling in the  remote estimation system}}
    \label{alg:SE-DDPG}
    \begin{algorithmic}[1]
        \State Initialize replay memory $\mathcal{D}$ to capacity $N$
        \State Initialize actor network and critic network with random weights $\bm{\mu}$ and $\bm{\theta}$
        \State Initialize target network and with weight $\hat{\bm{\mu}} = \bm{\mu}$, $\hat{\bm{\theta}} = \bm{\theta}$
        \For {episode $= 1, 2, \dots, E_{1}$}
            \State Initialize a random noise $\mathcal{N}$ for action exploration
            \State Initialize state $\mathbf{s}_0$
            \For {$t=1, 2, \dots, T$}
                \State Generate $\hat{\mathbf{v}}_{t}$ and select action $\mathbf{v}_{t}$ by the loose SE action selection method of SE-DDPG
                \State Mapping virtual action $\mathbf{v}_{t}$ to real action $\mathbf{a}_{t}$
                \State Execute action $\mathbf{a}_{t}$ and observe $r_t$ and $\mathbf{s}_{t+1}$
                \State Compute action $\tilde{\mathbf{v}}_{t} = \mu(\mathbf{s}_t;\bm{\mu})$
                \State Store transition $(\mathbf{s}_t, \mathbf{v}_t, \hat{\mathbf{v}}_t, \tilde{\mathbf{v}}_{t}, r_t, \mathbf{s}_{t+1})$ in $\mathcal{D}$
                \State Sample a random batch of transitions $(\mathbf{s}_{i}, \mathbf{v}_{i}, \hat{\mathbf{v}}_{i}, \tilde{\mathbf{v}}_{i}, r_{i}, \mathbf{s}_{i+1})$
                \State Calculate $\mathsf{TD}_i$ and $\mathsf{AD}_i$ based on \eqref{eq:TD} and \eqref{eq:AD} but with the virtual actions $\mathbf{v}_{i}, \hat{\mathbf{v}}_{i}, \Tilde{\mathbf{v}}_{i}$
                \State Update $\bm{\theta}$ according to equation \eqref{grat:critic}
                \State Update $\bm{\mu}$ according to equation \eqref{grat:actor}
                \State Update the target network:
                \begin{align}
                    \hat{\bm{\mu}} & \gets \delta \bm{\mu} + (1 - \delta)\hat{\bm{\mu}} \\
                    \hat{\bm{\theta}} & \gets \delta \bm{\theta} + (1 - \delta)\hat{\bm{\theta}}
                \end{align}
            \EndFor
        \EndFor
        \For {episode $E_{1} = 1, 2, \dots, E_{2}$}
            \State Repeat algorithm from line 5 to line 18 by adopting the tight SE action selection method
        \EndFor
        \For {episode $E_{2} = 1, 2, \dots, E$}
            \State Execute the conventional DDPG algorithm as in \cite{lillicrap2015DDPG}
        \EndFor
    \end{algorithmic}
\end{algorithm}
and hence the overall loss function given the sampled batch is 
\begin{equation}\label{loss:actor}
    L(\bm{\mu}) = \frac{1}{B} \sum_{i=1}^{B} L_{i}(\bm{\mu}).
\end{equation}
By replacing \eqref{eq:tilde_v} and \eqref{loss:actor_i} in \eqref{loss:actor} and applying the chain rule, we can derive the gradient of the overall loss function in terms of $\bm \mu$ as
\vspace{-0.2cm}
\begin{equation}\label{grat:actor}
    \nabla_{\bm{\mu}} L(\bm{\mu}) = \frac{1}{B} \sum_{i=1}^{B} \nabla_{\bm{\mu}} L_{i}(\bm{\mu}),
\end{equation}
\vspace{-0.3cm}
where $\nabla_{\bm{\mu}} L_{i}(\bm{\mu})$ is given by:
\begin{equation}\label{grat:actor_i}
    \nabla_{\bm{\mu}} L_{i}(\bm{\mu}) = \left\{
    \begin{array}{l}
    \begin{aligned}
        & \alpha_2 \nabla_{\tilde{\mathbf{v}}_{i}} Q(\mathbf{s}_{i}, \tilde{\mathbf{v}}_{i}; \bm{\theta}) \nabla_{\bm{\mu}} \mu(\mathbf{s}_{i};\bm{\mu}) - 2 (1-\alpha_2) \\
        & \ \ \left( \mathbf{v}_{i} - \mu\left(\mathbf{s}_{i};\bm{\mu}\right) \right) \nabla_{\bm{\mu}} \mu(\mathbf{s}_{i};\bm{\mu}) , \quad \text{if $\mathbf{v}_{i} = \hat{\mathbf{v}}_{i}$} \\
        & \nabla_{\tilde{\mathbf{v}}_{i}} Q(\mathbf{s}_{i}, \tilde{\mathbf{v}}_{i}; \bm{\theta}) \nabla_{\bm{\mu}} \mu(\mathbf{s}_{i};\bm{\mu}), \qquad \text{otherwise}.
    \end{aligned}
    \end{array}
    \right. 
\end{equation}
\vspace{-0.5cm}
The details of the proposed  SE-DDPG algorithm are given in Algorithm~\ref{alg:SE-DDPG}.

\section{Numerical Experiments} \label{sec: simulation}

In this section, we evaluate and compare the performance of the proposed SE-DQN and SE-DDPG with the benchmark DQN and DDPG.
\subsection{Experiment Setups}
\begin{table}[t]
    \begin{minipage}[b]{0.51\textwidth}
    \footnotesize
    \setlength\tabcolsep{1pt}
    \centering
    \caption{Summary of Hyperparameters}
    \label{tab:Setup}
    \vspace{-0.3cm}
    \begin{tabular}{c|c}
         \hline \hline
         \textbf{Hyperparameters of SE-DQN and SE-DDPG} & Value \\
         \hline
         Initial values of $\epsilon$ and $\xi$ & 1 \\
         \rowcolor[HTML]{EFEFEF} 
         Decay rates of $\epsilon$ and $\xi$                     & 0.999 \\
         Minimum $\epsilon$ and $\xi$                        & 0.01 \\
         \rowcolor[HTML]{EFEFEF} 
         Mini-batch size, $B$                       & 128 \\
         Experience replay memory size, $K$         & 20000 \\
         \rowcolor[HTML]{EFEFEF} 
         Discount factor, $\gamma$                  & 0.95 \\
         \hline
         \textbf{Hyperparameters of SE-DQN} \\
         \hline
         Learning rate                              & 0.0001 \\
         \rowcolor[HTML]{EFEFEF}
         Decay rate of learning rate                & 0.001 \\
         Target network update frequency            & 100 \\
         \rowcolor[HTML]{EFEFEF}
         Weight of SE-DQN loss function, $\alpha_1$ & 0.5 \\
         Input dimension of network   & $N+N \times M$ \\
         \rowcolor[HTML]{EFEFEF}
         Output dimension of network     & $N!/(N-M)!$ \\
         \hline
         \textbf{Hyperparameters of SE-DDPG} \\
         \hline
         Learning rate of actor network             & 0.0001 \\
         \rowcolor[HTML]{EFEFEF}
         Learning rate of critic network             & 0.001 \\
         Decay rate of learning rate                & 0.001 \\
         \rowcolor[HTML]{EFEFEF}
         Soft parameter for target update, $\delta$ & 0.005 \\
         Weight of critic network loss function, $\alpha_1$ & 0.5 \\
         \rowcolor[HTML]{EFEFEF}
         Weight of actor network loss function, $\alpha_2$ & 0.9 \\
         Input dimension of actor network   & $N+N \times M$ \\
         \rowcolor[HTML]{EFEFEF}
         Output dimension of actor network   & $N$ \\
         Input dimension of critic network   & $2N+N \times M$ \\
         \rowcolor[HTML]{EFEFEF}
         Output dimension of critic network   & $1$ \\
         \hline \hline
    \end{tabular}
    \end{minipage}
    \hfill
    \begin{minipage}[b]{0.47\textwidth}
        \centering
        \includegraphics[width=1\textwidth]{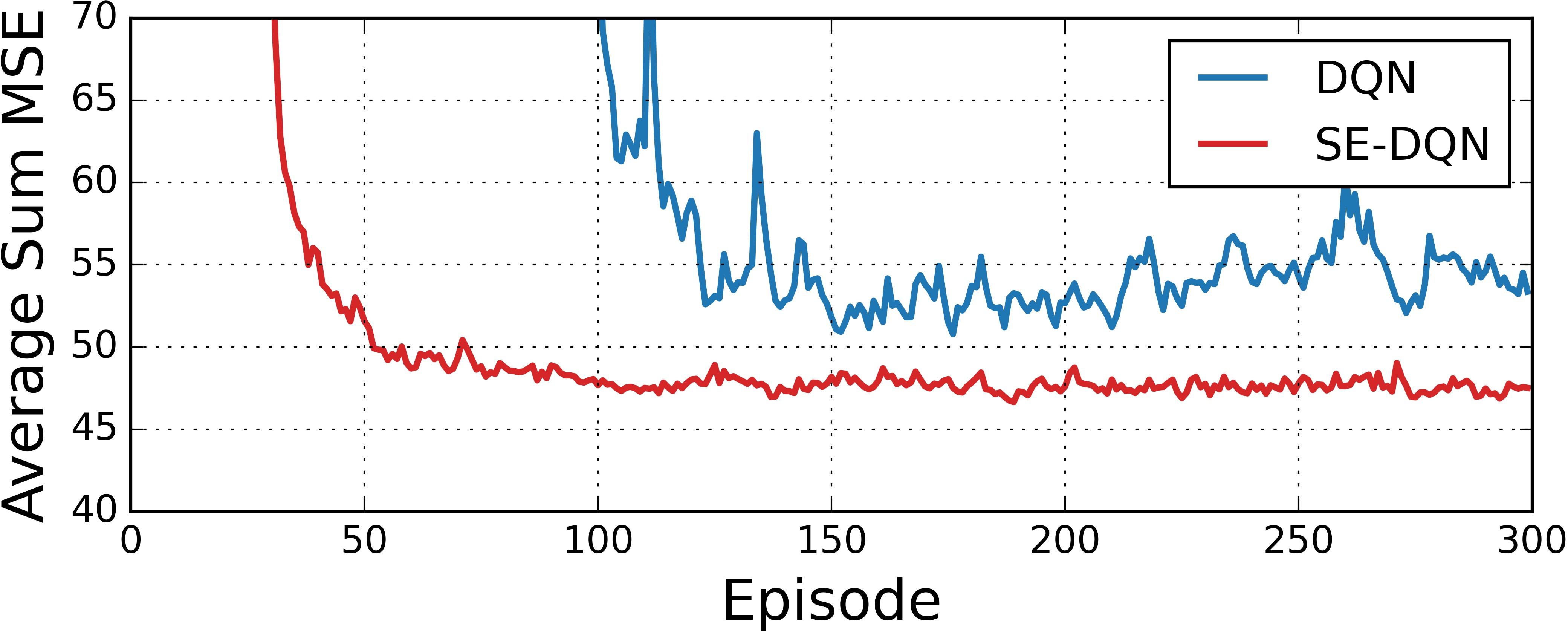}
        \vspace{-0.6cm}
        \captionof{figure}{Average sum MSE of all processes during training with $N=6, M=3$.}
        \label{fig:training curve DQN}
        \vspace{0.3cm}
        \centering
        \includegraphics[width=1\textwidth]{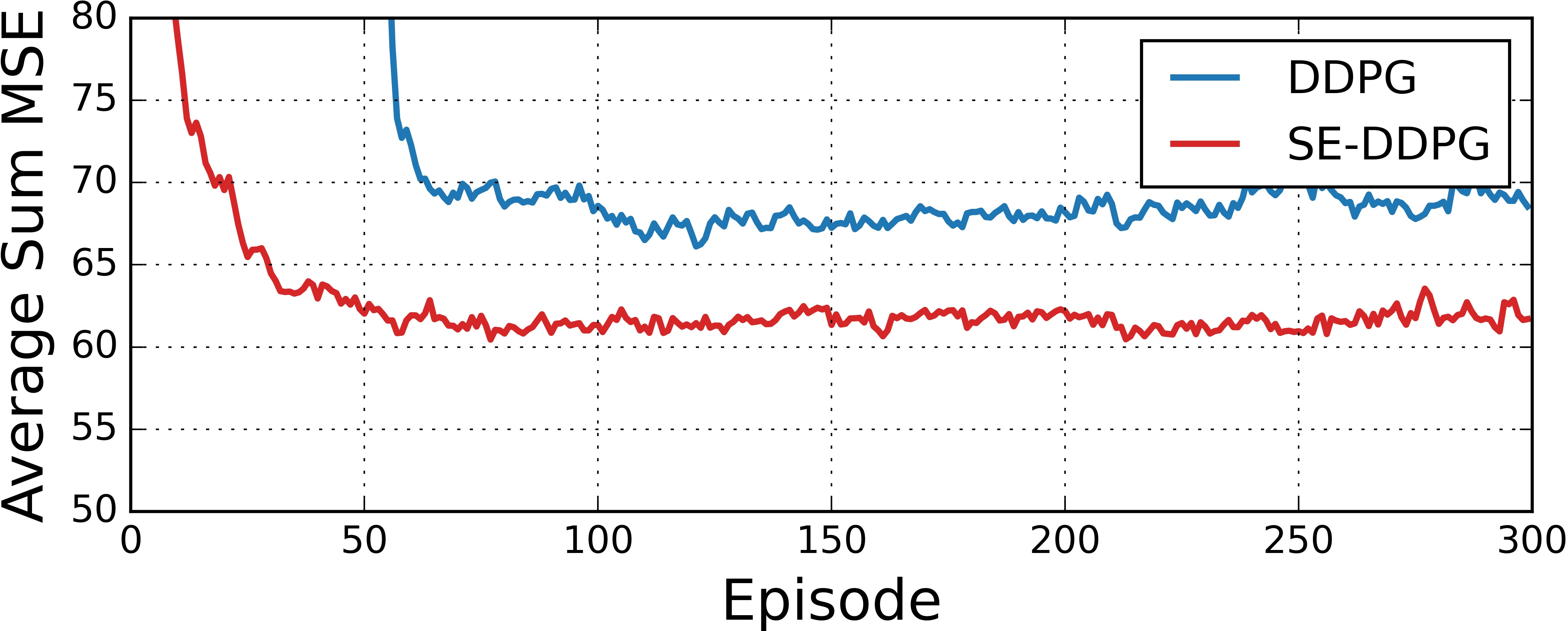}
        \vspace{-0.6cm}
        \captionof{figure}{Average sum MSE of all processes during training with $N\!=10, M\!=5$.}
        \label{fig:training curve DDPG}
        \vspace{-4cm}
    \end{minipage}
    \vspace{-0.7cm}
\end{table}
Our numerical experiments run on a computing
platform with an Intel Core i5 9400F CPU @ 2.9 GHz with 16GB RAM and an NVIDIA RTX 2070 GPU. For the remote estimation system, we set the dimensions of the process state and the measurement as $l_{n} = 2$ and $e_{n} = 1$, respectively. The system matrices $\{\mathbf{A}_{n}\}$ are randomly generated with the spectral radius within the range of $(1, 1.4)$. The entries of $\mathbf{C}_{n}$ are drawn uniformly from the range $(0, 1)$, $\mathbf{W}_n$ and $\mathbf{V}_n$ are identity matrices. The fading channel state is quantized into $\bar{h}=5$ levels, and the corresponding packet drop probabilities are $0.2, 0.15, 0.1, 0.05,$ and $0.01$. The distributions of the channel states of each sensor-channel pair $(n,m)$, i.e., $q^{(n,m)}_1,\dots,q^{(n,m)}_{\bar{h}}$ are generated randomly.

During the training, we use the ADAM optimizer for calculating the gradient and reset the environment for each episode with $T=500$ steps. The episode numbers for the loose SE action, the tight SE action, and the conventional DQN stages are 50, 100, and 150, respectively. The settings of the hyperparameters for Algorithm~\ref{alg:SE-DQN} and Algorithm~\ref{alg:SE-DDPG} are summarized in Table~\ref{tab:Setup}. 


\subsection{Performance Comparison}
Fig.~\ref{fig:training curve DQN} and Fig.~\ref{fig:training curve DDPG} illustrate the average sum MSE of all processes during the training achieved by the SE-DRL algorithms and the benchmarks under some system settings. Fig.~\ref{fig:training curve DQN} shows that the SE-DQN saves about $50\%$ training episodes for the convergence, and also decreases the average sum MSE for $10\%$ than the conventional DQN. 
Fig.~\ref{fig:training curve DDPG} shows that the SE-DDPG saves about $30\%$ training episodes and reduces the average sum MSE by $10\%$, when compared to the conventional DDPG. Also, we see that the conventional DQN and DDPG stages (i.e. the last 150 episodes) in Fig~\ref{fig:training curve DQN} and~\ref{fig:training curve DDPG} cannot improve much of the training performance. This implies that the SE stages have found near optimal policies.
\begin{table}[t]
    \centering
    \begin{minipage}[b]{0.53\textwidth}
    \footnotesize
	\setlength\tabcolsep{3pt}
	\centering
	\caption{Performance Comparison of the SE-DRL and the Benchmarks in terms of the Average Estimation MSE}
	\label{tab:test result}
	\vspace{-0.3cm}
	\begin{tabular}{c|c|c|c|c|c}
		\hline \hline
		\thead{System Scale \\ $(N,M)$} &  \thead{Param. \\ setup} & DQN & \textbf{SE-DQN} & DDPG & \textbf{SE-DDPG}\\
		\hline
		\rowcolor[HTML]{EFEFEF}
		$(6, 3)$ & 1 & 52.4121 & \textbf{47.6766} & 48.4075 & \textbf{47.0594} \\
		& 2 & 67.4247 & \textbf{49.8476} & 53.3675 & \textbf{44.7423} \\
		\rowcolor[HTML]{EFEFEF}
		& 3 & 84.1721 & \textbf{59.9127} & 56.9504 & \textbf{55.2409} \\
		& 4 & 79.5902 & \textbf{65.1640} & 64.4313 & \textbf{58.5534} \\
		\hline
		
		\rowcolor[HTML]{EFEFEF}
		$(6, 2)$ & 5 & 55.7092 & \textbf{50.5983}  & 51.3488 & \textbf{47.1210} \\
		  & 6 & $-$ & \textbf{80.2522} & 77.4124 & \textbf{72.4863} \\
		  \rowcolor[HTML]{EFEFEF}
		  & 7 & $-$ & \textbf{78.7715} & 85.3182 & \textbf{75.8465}\\
		  & 8 & $-$ & \textbf{58.6024} & 61.4121 & \textbf{57.8539}\\
		  \hline
		
		\rowcolor[HTML]{EFEFEF}
		$(10, 5)$ & 9 & $-$ & $-$ & 68.0247 &  \textbf{62.4727} \\
		& 10 & $-$ & $-$ & 89.6290 & \textbf{78.4138} \\
		\rowcolor[HTML]{EFEFEF}
		& 11 & $-$ & $-$ & 90.2812 & \textbf{81.1148} \\
		& 12 & $-$ & $-$ & $-$ & \textbf{147.2844} \\
		\hline
		
		\rowcolor[HTML]{EFEFEF}
		$(20, 10)$ & 13 & $-$ & $-$ & 181.4135 & \textbf{159.4321} \\
		& 14 & $-$ & $-$ & 163.1257 & \textbf{142.1850} \\
		\rowcolor[HTML]{EFEFEF}
		& 15 & $-$ & $-$ & 173.0940 & \textbf{144.8166} \\
		& 16 & $-$ & $-$ & 215.2231 & \textbf{160.2304} \\
		\hline \hline
	\end{tabular}
    \end{minipage}
    \begin{minipage}[b]{0.46\textwidth}
        \centering
        \includegraphics[width=1\textwidth]{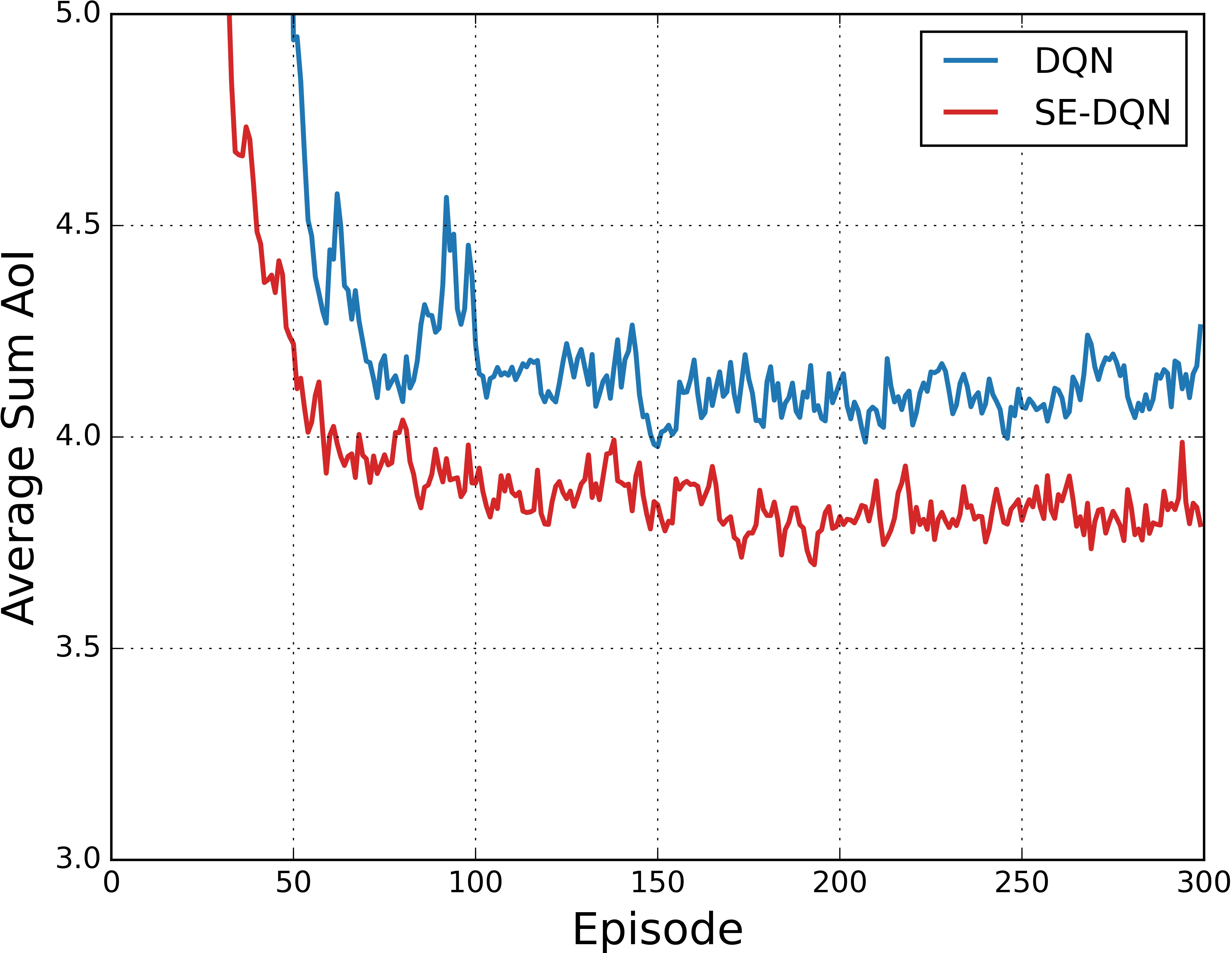}
        \vspace{-0.6cm}
        \captionof{figure}{Average sum AoI of all processes during the training with $N=6, M=3$.}
        \label{fig:training curve AoI}
        \vspace{-3.2cm}
    \end{minipage}
    \vspace{-0.7cm}
\end{table}

In Table~\ref{tab:test result}, we test the performance of well-trained SE-DRL algorithms for different numbers of sensors and channels, and different settings of system parameters, i.e.,  $\mathbf{A}_{n}$, $\mathbf{C}_{n}$, and $q^{(n,m)}_1,\dots,q^{(n,m)}_{\bar{h}}$, based on 10000-step simulations. 
We see that for 6-sensor-3-channel systems, the SE-DQN reduces the average MSE by $10\%$ to $25\%$ over DQN, while both DDPG and SE-DDPG achieve similar performance as the SE-DQN. This suggests that the SE-DQN is almost optimal and that the DDPG and the SE-DDPG cannot improve further. In 6-sensor-2-channel systems, the DQN cannot converge in most experiments but the SE-DQN and SE-DDPG can still perform better than DDPG.
In 10-sensor-5-channel systems, neither the DQN nor the SE-DQN can converge, and the SE-DDPG achieves a $10\%$ MSE reduction over the DDPG. In particular, the SE-DDPG is the only converged algorithm in Experiment 12.
In 20-sensor-10-channel systems, we see that the SE-DDPG can reduce the average MSE by $15\%$ to $25\%$. Therefore, the performance improvement of the SE-DDPG appears significant for large systems.

In addition, to demonstrate that the SE-DRL algorithms can also be applied to other scenarios, we consider a multi-sensor scheduling problem for minimizing the overall expected sum AoI as mentioned in Remark~\ref{remark:counterexample}. In Fig~\ref{fig:training curve AoI}, we see that the SE-DQN can effectively reduce the average sum AoI by $10\%$ compared with the DQN.

We also test the effectiveness of the channel-state threshold property and the loose SE action selection stage in the SE-DRL algorithms.
Fig.~\ref{fig:training curve action} illustrates that without using the channel-threshold property in the SE-action selection process, although the convergence speed is not affected much, such a training method does not converge to a near-optimal policy and leads to a noticeably higher estimation MSE than the original SE-DDPG.
In Fig.~\ref{fig:training curve pretrain}, we see that if one disables the loose SE-action selection stage, the training convergence time doubles compared with the original SE-DDPG, though both converged algorithms provide similar performance.
Thus, both the channel-threshold-enabled action selection and the loose action selection stage are critical to guarantee the performance of the SE-DRL algorithms.

\begin{figure}[t]
    \centering
    \begin{minipage}[t]{0.46\textwidth}
        \centering
        \includegraphics[width=1\textwidth]{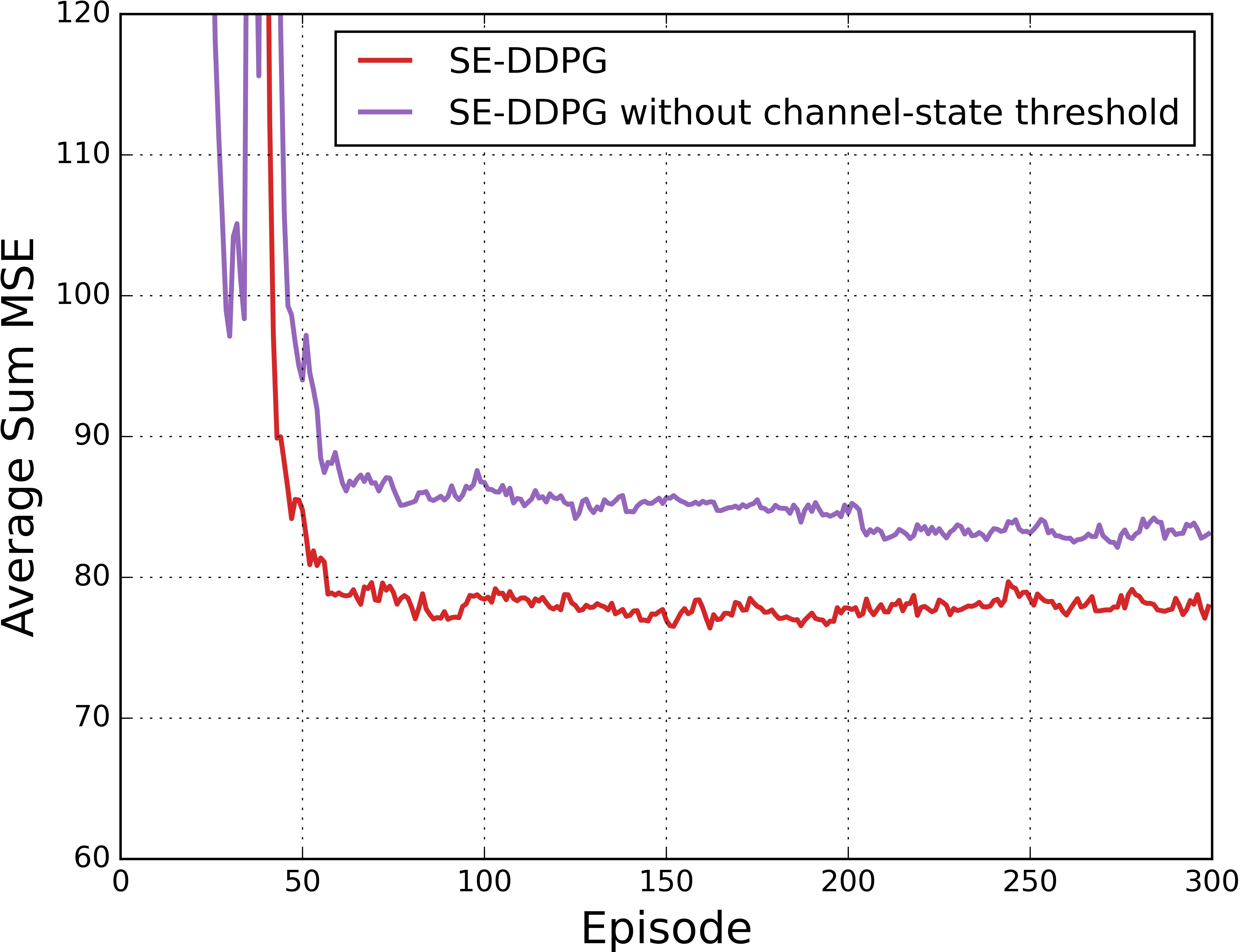}
        \vspace{-0.6cm}
        \caption{Training performance comparison between SE-DDPG with and without using the channel-state threshold property, where $N\!=10, M\!=5$.}
        \label{fig:training curve action}
    \end{minipage}
    \hspace{0.3cm}
    \begin{minipage}[t]{0.46\textwidth}
        \centering
        \includegraphics[width=1\textwidth]{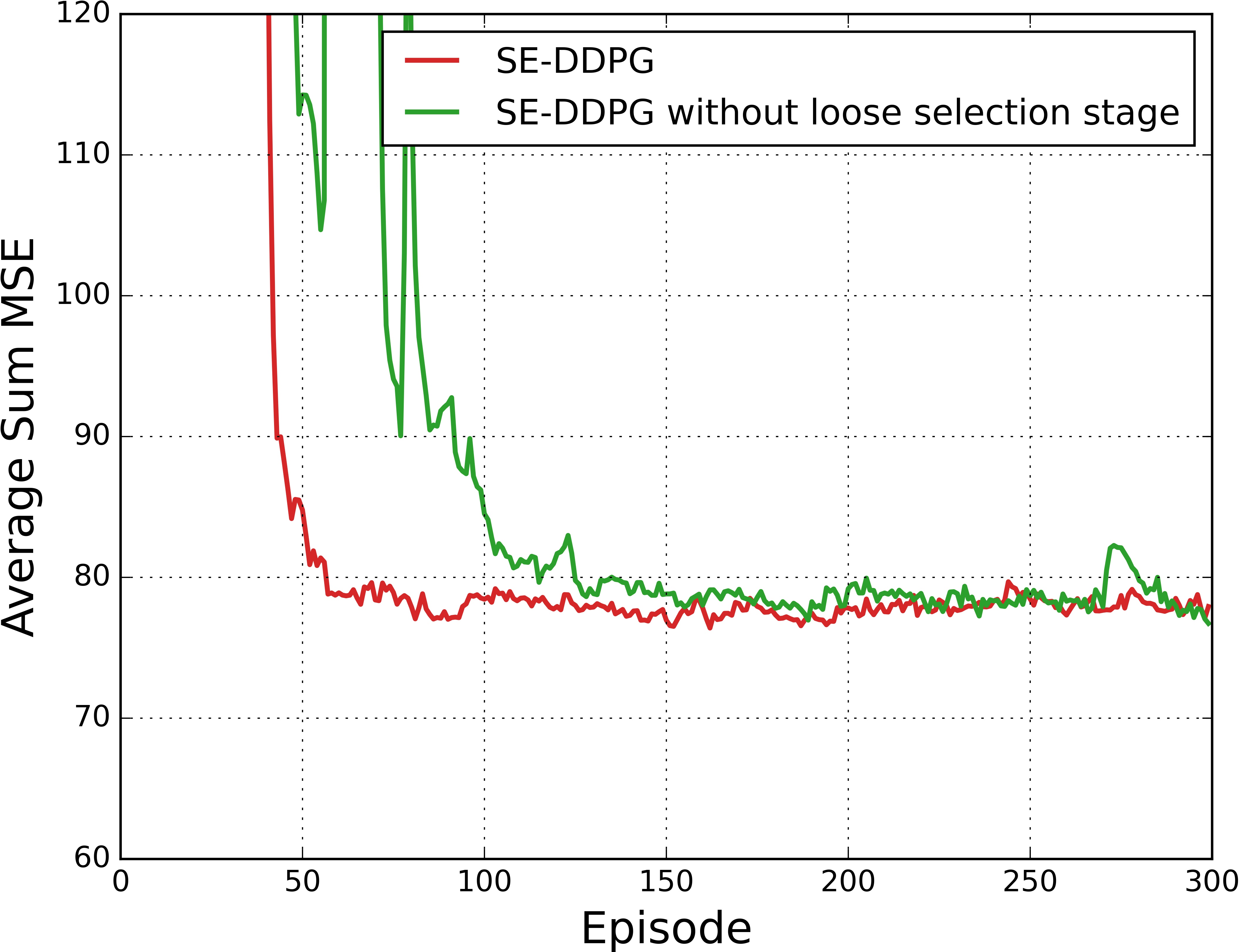}
        \vspace{-0.6cm}
        \caption{Training performance comparison between SE-DDPG with and without the loose SE-action selection stage, where $N\!=10, M\!=5$.}
        \label{fig:training curve pretrain}
    \end{minipage}
    \vspace{-0.7cm}
\end{figure}

\section{Conclusion} \label{sec: conclusion}
In this paper, we have proved that the optimal scheduling policy of the remote estimation system satisfies a number of threshold properties. Then, based on these theoretical guidelines, we have developed the structure-enhanced deep reinforcement learning (SE-DRL) algorithms for solving the optimal scheduling problem. In particular, we have proposed a novel action selection method and a new loss function to improve training efficiency. Our numerical results have illustrated that SE-DRL algorithms can save training time by 50\% and reduce the estimation mean-square error (MSE) by 10\% to 25\%, when compared with benchmark DRL algorithms. In addition, the results also show that SE-DRL can effectively solve a range of optimal scheduling problems, not limited to remote state estimation systems. For future work, we will investigate other structural properties of optimal scheduling and resource allocation policies for wireless communications and networked systems, and then use these properties to develop tailored DRL algorithms.



\appendices
\section{Proof of Lemma~\ref{lemma:monotone}}\label{proof: monotonicity}
Similar to the Q-value function, we define a W-function for the value function, $W(\mathbf{s}, \mathbf{a}, V^{\tilde{t}}): \mathcal{S} \times \mathcal{A} \times \mathcal{V} \to \mathbb{R}$
\begin{equation}
    W(\mathbf{s}, \mathbf{a}, V^{\tilde{t}}) = r(\mathbf{s}) + \gamma \sum_{\mathbf{s}^{+}} \operatorname{Pr}(\mathbf{s}^{+}|\mathbf{s}, \mathbf{a}) V^{\tilde{t}} (\mathbf{s}^{+}). \label{eq:W-function}
\end{equation}
According to Lemma~\ref{lemma:converge of V*}, the converge of the optimal value function, $V^{*}(\mathbf{s})$, is independent of the initial value function $V^{0}(\mathbf{s})$. Therefore, given states $\mathbf{s} = (\bm{\tau}, \mathbf{H})$ and $ \mathbf{s}' = (\bm{\tau}'_{(i)}, \mathbf{H})$, where $\tau'_{i} \geq \tau_{i}$, we can assume that $V^{0}(\mathbf{s})$ is monotonic, i.e.
\begin{equation}\label{eq:monotonicity,0}
    V^{0}\left(\mathbf{s}'\right) \leq V^{0}\left(\mathbf{s}\right).
\end{equation}
To prove Lemma~\ref{lemma:monotone} based on~\eqref{eq:monotonicity,0} and Lemma~\ref{lemma:converge of V*}, it is sufficient to show that the value function~$V^{1}(\mathbf{s})$ is monotonic, i.e.
\begin{equation}\label{eq:monotonicity,1}
    V^{1}\left(\mathbf{s}'\right) \leq V^{1}\left(\mathbf{s}\right),
\end{equation}
and then the monotonicity of the value function can be preserved by the Bellman operator $\mathsf{B}[\cdot]$ from $V^{0}(\mathbf{s})$ to $V^{*}(\mathbf{s})$.
During the value iteration, we define the optimal action and the optimal policy at $\tilde{t}$th iteration as 
\begin{equation}
    \mathbf{a}^{\tilde{t}} \triangleq \pi^{\tilde{t}}(\mathbf{s}) = r(\mathbf{s}) + \gamma \mathop{\arg\max}_{\mathbf{a}\in \mathcal{A}}  \left[ \sum_{\mathbf{s}^{+}} \operatorname{Pr}(\mathbf{s}^{+}|\mathbf{s}, \mathbf{a})V^{\tilde{t}-1}(\mathbf{s}^{+})\right]. \label{eq:optimal action V_t}
\end{equation}
From~\eqref{eq: bellman operator} and~\eqref{eq:optimal action V_t}, it directly follows the relationship between the value function and the W-functions:
\begin{equation}\label{ineq:V&W}
    V^{\tilde{t}+1} (\mathbf{s}) = W(\mathbf{s}, \mathbf{a}^{\tilde{t}}, V^{\tilde{t}}) \geq W(\mathbf{s}, \mathbf{a}, V^{\tilde{t}}). 
\end{equation}

To prove \eqref{eq:monotonicity,1}, we use~\eqref{ineq:V&W} and the optimal action $\mathbf{a}^{1} = \pi^{1}\left( \mathbf{s}' \right)$ to derive the following inequality,
\begin{align}\label{eq:app A V->W}
    V^{1}\left (\mathbf{s}'\right) - V^{1}\left(\mathbf{s}\right) \leq W(\mathbf{s}', \mathbf{a}^{1}, V^{0}) - W(\mathbf{s}, \mathbf{a}^{1}, V^{0}).
\end{align}
By using~\eqref{eq: transition p} and~\eqref{eq: transition tau},~\eqref{eq:app A V->W} can be derived that
\begin{align}
    & W(\mathbf{s}', \mathbf{a}^{1}, V^{0}) - W(\mathbf{s}, \mathbf{a}^{1}, V^{0}) \\
    & = \left[r\left(\bm{\tau}'\right) -  r\left(\bm{\tau}\right) \right] + \gamma \Bigg[ \sum_{\mathbf{H}^{+}} \sum_{{\bm{\tau}_{\backslash i}}^{+}} \sum_{{\tau_{i}'}^{+}} \operatorname{Pr} (\mathbf{H}^{+}) \operatorname{Pr} ({\bm{\tau}_{\backslash i}}^{+}|\bm{\tau}_{\backslash i}, \mathbf{H}_{\backslash i}, \mathbf{a}^{1}_{\backslash i}) \operatorname{Pr} ({\tau'_{i}}^{+}|{\tau'_{i}}, \mathbf{h}_{i}, a^{1}_{i}) V^{0}({\mathbf{s}'}^{+}) \\
    \label{eq: W i -i}
    & \quad - \sum_{\mathbf{H}^{+}} \sum_{{\bm{\tau}_{\backslash i}}^{+}} \sum_{{\tau_{i}}^{+}} \operatorname{Pr} ({\bm{\tau}_{\backslash i}}^{+}|\bm{\tau}_{\backslash i}, \mathbf{H}_{\backslash i}, \mathbf{a}^{1}_{\backslash i}) \operatorname{Pr} ({\tau_{i}}^{+}|{\tau_{i}}, \mathbf{h}_{i}, a^{1}_{i}) V^{0}(\mathbf{s}^{+})\Bigg],
\end{align}
which implies that the states except for $\tau_{i}$ are constant. Therefore, the states can be written as $\mathbf{s} = \tau_{i}$ and $ \mathbf{s}' = \tau'_{i}$ for notation simplicity. Then, using~\eqref{eq:monotonicity,0} and $\operatorname{Pr} ({\tau'_{i}}^{+}={\tau'_{i}}+1|{\tau'_{i}}, \mathbf{h}_{i}, a^{1}_{i}) = \operatorname{Pr} ({\tau_{i}}^{+}={\tau_{i}}+1|{\tau_{i}}, \mathbf{h}_{i}, a^{1}_{i})$, we have 
\begin{align}\label{eq:V(tau')<V(tau)}
    \operatorname{Pr} ({\tau'_{i}}^{+}={\tau'_{i}}+1|{\tau'_{i}}, \mathbf{h}_{i}, a^{1}_{i}) V^{0}({\tau'_{i}}+1) \leq \operatorname{Pr} ({\tau_{i}}^{+}={\tau_{i}}+1|{\tau_{i}}, \mathbf{h}_{i}, a^{1}_{i}) V^{0}({\tau_{i}}+1).
\end{align}
Similarly, we use $\operatorname{Pr} ({\tau'_{i}}^{+}=1|{\tau'_{i}}, \mathbf{h}_{i}, a^{1}_{i}) = \operatorname{Pr} ({\tau_{i}}^{+}=1|{\tau_{i}}, \mathbf{h}_{i}, a^{1}_{i})$ to derive that
\begin{align}\label{eq:V(tau)=V(tau)}
    \operatorname{Pr} ({\tau'_{i}}^{+}=1|{\tau'_{i}}, \mathbf{h}_{i}, a^{1}_{i}) V^{0}(1) = \operatorname{Pr} ({\tau_{i}}^{+}=1|{\tau_{i}}, \mathbf{h}_{i}, a^{1}_{i}) V^{0}(1).
\end{align}
Based on~\eqref{eq:monotonicity,0},~\eqref{eq:V(tau')<V(tau)},~\eqref{eq:V(tau)=V(tau)}, and $r\left(\tau'_{i}\right) <  r\left(\tau_{i}\right)$, the following inequality can be derived from~\eqref{eq: W i -i}
\begin{align}\label{eq:r(tau')-r(tau) + V(tau')-V(tau) < 0}
    \!\!\!\! \left[r\left(\tau'_{i}\right) -  r\left(\tau_{i}\right) \right] 
    + \gamma \bigg[ \sum_{{\tau_{i}'}^{+}} \operatorname{Pr} ({\tau'_{i}}^{+}|{\tau'_{i}}, \mathbf{h}_{i}, a^{1}_{i}) V^{0}({\tau'_{i}}^{+}) - \sum_{{\tau_{i}}^{+}} \operatorname{Pr} ({\tau_{i}}^{+}|{\tau_{i}}, \mathbf{h}_{i}, a^{1}_{i}) V^{0}({\tau_{i}}^{+}) \bigg] \leq 0.
\end{align}
From~\eqref{eq:app A V->W},~\eqref{eq: W i -i}, and~\eqref{eq:r(tau')-r(tau) + V(tau')-V(tau) < 0}, we have $V^{1}\left (\mathbf{s}'\right) \leq V^{1}\left(\mathbf{s}\right)$.

Thus, the monotonicity of the value function $V^{0}(\mathbf{s})$ propagates through the Bellman operator~$\mathsf{B}[\cdot]$ to the optimal value function $V^{*}(\mathbf{s})$.

\section{Proof of Lemma~\ref{lemma:probabilistic sup}} \label{proof: sup and sub, N=2}
In this proof, we also need the submodularity of the value function. Therefore, similar to the proof of Lemma~\ref{lemma:monotone}, we assume that the initial value function $V^{0}(\mathbf{s})$ is submodular and probabilistic supermodular. Thus, given states $\mathbf{s} = (\bm{\tau}, \mathbf{H})$ and $ \mathbf{s}^{\circ} = (\bm{\tau}^{\circ}, \mathbf{H})$ where $\bm{\tau}^\circ = (\tau''_i, \tau'_{j})$ with $\tau''_{i} \leq \tau_{i}$ and $\tau'_{j} \geq \tau_{j}$, we obtain that
\begin{subequations}
    \begin{align}
        V^{0}\left(\mathbf{s} \vee \mathbf{s}^{\circ}\right) + V^{0}\left(\mathbf{s} \wedge \mathbf{s}^{\circ}\right) & \leq V^{0}\left(\mathbf{s}\right) + V^{0}\left(\mathbf{s}^{\circ}\right), \label{eq:sub_0} \\
        p_{j,1} V^{0}\left(\mathbf{s} \wedge \mathbf{s}^{\circ}\right) + (p_{j,1} - p_{i,1}) V^{0}\left(\mathbf{s} \vee \mathbf{s}^{\circ}\right) & \geq p_{j,1} V^{0}\left(\mathbf{s}\right) + (p_{j,1} - p_{i,1}) V^{0}\left(\mathbf{s}^{\circ}\right).\label{eq:sup_0}  
    \end{align}
\end{subequations}
To prove Lemma~\ref{lemma:probabilistic sup} based on~\eqref{eq:sub_0},~\eqref{eq:sup_0} and Lemma~\ref{lemma:converge of V*}, it is sufficient to show that the value function $V^{1}(\mathbf{s})$ is submodular and probabilistic supermodular, i.e.
\begin{subequations}
    \begin{align}
    V^{1}\left(\mathbf{s} \vee \mathbf{s}^{\circ}\right) + V^{1}\left(\mathbf{s} \wedge \mathbf{s}^{\circ}\right) & \leq V^{1}\left(\mathbf{s}\right) + V^{1}\left(\mathbf{s}^{\circ}\right), \label{eq:sup&sub_1,a} \\
        p_{j,1} V^{1}\left(\mathbf{s} \wedge \mathbf{s}^{\circ}\right) + (p_{j,1} - p_{i,1}) V^{1}\left(\mathbf{s} \vee \mathbf{s}^{\circ}\right) & \geq p_{j,1} V^{1}\left(\mathbf{s}\right) + (p_{j,1} - p_{i,1}) V^{1}\left(\mathbf{s}^{\circ}\right), \label{eq:sup&sub_1,b} 
    \end{align}
\end{subequations}
and then these properties of the value function can be preserved by the Bellman operator $\mathsf{B}[\cdot]$ from $V^{0}(\mathbf{s})$ to $V^{*}(\mathbf{s})$.

Similar to the proof of Lemma~\ref{lemma:monotone}, we write the states as $\mathbf{s} = \bm{\tau}$ and $\mathbf{s}^{\circ} = \bm{\tau}^{\circ}$, as the channel states are constant. Then, we have $\bm{\tau} \vee \bm{\tau}^{\circ} = (\tau_{i}, \tau'_{j})$ and $\bm{\tau} \wedge \bm{\tau}^{\circ} = (\tau''_{i}, \tau_{j})$.
In terms of the policy, since we only consider the system with a single channel in this proof, we write the optimal policy as ${\pi^{\tilde{t}}}(\bm{\tau}) = i$ to represent that sensor $i$ is scheduled for the state $\bm{\tau}$, i.e. $a^{\tilde{t}}_{i}=1$. 

In the following, we prove~\eqref{eq:sup&sub_1,a} by cases \textbf{(a)} and \textbf{(b)}, and~\eqref{eq:sup&sub_1,b} by cases \textbf{(a')} and \textbf{(b')} with different packet success rates.
\begin{itemize}
    \item[\textbf{(a)}] If $p_{j,1} \leq p_{i,1}$, then there are four cases with different optimal actions of the states:

    (a.1) ${\pi^{1}} (\bm{\tau} \vee \bm{\tau}^{\circ}) = {\pi^{1}} (\bm{\tau} \wedge \bm{\tau}^{\circ}) = i$,
    (a.2) ${\pi^{1}} (\bm{\tau} \vee \bm{\tau}^{\circ}) = {\pi^{1}} (\bm{\tau} \wedge \bm{\tau}^{\circ}) = j$,
    (a.3) ${\pi^{1}} (\bm{\tau} \vee \bm{\tau}^{\circ}) = i$ and ${\pi^{1}} (\bm{\tau} \wedge \bm{\tau}^{\circ}) = j$,
    (a.4) ${\pi^{1}} (\bm{\tau} \vee \bm{\tau}^{\circ}) = j$ and ${\pi^{1}} (\bm{\tau} \wedge \bm{\tau}^{\circ}) = i$.

    \item[(a.1)] If ${\pi^{1}} (\bm{\tau} \vee \bm{\tau}^{\circ}) = {\pi^{1}} (\bm{\tau} \wedge \bm{\tau}^{\circ}) = i$, then based on~\eqref{ineq:V&W}, we have
    \begin{align}
        & V^{1}(\bm{\tau} \vee \bm{\tau}^{\circ}) + V^{1}(\bm{\tau} \wedge \bm{\tau}^{\circ}) - V^{1}(\bm{\tau}) - V^{1}(\bm{\tau}^{\circ}) \\
        \label{eq:appb case a.1}
        & \leq W(\bm{\tau} \vee \bm{\tau}^{\circ}, i, V^{0}) + W(\bm{\tau} \wedge \bm{\tau}^{\circ}, i, V^{0}) - W(\bm{\tau}, i, V^{0}) - W(\bm{\tau}^{\circ}, i, V^{0}).
    \end{align}
    Using~\eqref{eq:W-function} and $r(\bm{\tau} \vee \bm{\tau}^{\circ}) + r(\bm{\tau} \wedge \bm{\tau}^{\circ}) = r(\bm{\tau}) + r(\bm{\tau}^{\circ})$,~\eqref{eq:appb case a.1} can be derived that
    \begin{align}
        & \gamma p_{i,1} \left[ V^{0}(1, \tau'_{j} + 1) + V^{0}(1, \tau_{j} + 1) - V^{0}(1, \tau_{j} + 1) - V^{0}(1, \tau'_{j} + 1) \right] \\
        & + \gamma (1\!-\!p_{i,1}) \big[ V^{0}(\tau_{i} \!+\! 1, \tau'_{j} \!+\! 1) \!+\! V^{0}(\tau''_i \!+\! 1, \tau_{j} \!+\! 1) \!-\! V^{0}(\tau_{i} \!+\! 1, \tau_{j} \!+\! 1) \!-\! V^{0}(\tau''_i \!+\! 1, \tau'_{j} \!+\! 1) \big] \\
        \label{eq:appb case a.1 <0}
        & \leq 0,
    \end{align}
    where the inequality is based on~\eqref{eq:sub_0}. From~\eqref{eq:appb case a.1} and~\eqref{eq:appb case a.1 <0}, we can derive~\eqref{eq:sup&sub_1,a}.
    \item[(a.2)] If ${\pi^{1}} (\bm{\tau} \vee \bm{\tau}^{\circ}) = {\pi^{1}} (\bm{\tau} \wedge \bm{\tau}^{\circ}) = j$, the proof is similar to the case (a.1) by showing $W(\bm{\tau} \vee \bm{\tau}^{\circ}, j, V^{0}) + W(\bm{\tau} \wedge \bm{\tau}^{\circ}, j, V^{0}) - W(\bm{\tau}, j, V^{0}) - W(\bm{\tau}^{\circ}, j, V^{0}) \leq 0$.
    
    \item[(a.3)] If ${\pi^{1}} (\bm{\tau} \vee \bm{\tau}^{\circ}) = i$ and ${\pi^{1}} (\bm{\tau} \wedge \bm{\tau}^{\circ}) = j$, then
    \begin{align*}
        & \hspace{-0.4cm} V^{1}(\bm{\tau} \vee \bm{\tau}^{\circ}) + V^{1}(\bm{\tau} \wedge \bm{\tau}^{\circ}) - V^{1}(\bm{\tau}) - V^{1}(\bm{\tau}^{\circ}) \\
        & \hspace{-0.4cm} \!\!\leq\! W(\bm{\tau} \vee \bm{\tau}^{\circ}, i, V^{0}) + W(\bm{\tau} \wedge \bm{\tau}^{\circ}, j, V^{0}) - W(\bm{\tau}, i, V^{0}) - W(\bm{\tau}^{\circ}, j, V^{0})\\
        & \hspace{-0.4cm} \!\!=\! \gamma \big[p_{i,1} V^{0}(1, \tau'_{j} \!+\! 1) \!+\! (1\!-\!p_{i,1}) V^{0}(\tau_{i} \!+\! 1, \tau'_{j} \!+\! 1) \!+\! p_{j,1} V^{0}(\tau''_i \!+\! 1, 1) \!+\! (1\!-\!p_{j,1}) V^{0}(\tau''_i \!+\! 1, \tau_{j} \!+\! 1) \\
        & \hspace{-0.4cm} \quad \!-\! p_{i,1} V^{0}(1, \tau_{j} \!+\! 1) \!-\! (1\!-\!p_{i,1}) V^{0}(\tau_{i} \!+\! 1,\! \tau_{j} \!+\! 1) \!-\! p_{j,1} V^{0}(\tau''_i \!+\! 1, 1) \!-\! (1\!-\!p_{j,1}) V^{0}(\tau''_i \!+\! 1,\! \tau'_{j} \!+\! 1)\big] \\
        & \hspace{-0.4cm} \!\!=\! \gamma \big[p_{i,1} V^{0}(1,\! \tau'_{j} \!+\!\! 1) \!+\! (p_{i,1} \!-\! p_{j,1}\!) V^{0}(\tau''_i \!\!+\!\! 1,\! \tau_{j} \!+\!\! 1) \!-\! p_{i,1} V^{0}(1,\! \tau_{j} \!+\!\! 1) \!-\! (p_{i,1} \!-\! p_{j,1}) V^{0}(\tau''_i \!\!+\!\! 1,\! \tau'_{j} \!+\!\! 1) \big] \\
        & \hspace{-0.4cm} \quad + \gamma (1\!-\!p_{i,1}) \big[ V^{0}\big(\tau_{i} \!+\! 1, \tau'_{j} \!+\! 1) \!+\! V^{0}(\tau''_i \!+\! 1, \tau_{j} \!+\! 1)  \!-\! V^{0}(\tau_{i} \!+\! 1, \tau_{j} \!+\! 1) \!-\! V^{0}(\tau''_i \!+\! 1,  \tau'_{j} \!+\! 1) \big] \\
        & \hspace{-0.4cm} \!\leq\! 0,
    \end{align*}
    where the last inequality is based on \eqref{eq:sub_0} and \eqref{eq:sup_0}.
    \item[(a.4)] If ${\pi^{1}} (\bm{\tau} \vee \bm{\tau}^{\circ}) = j$ and ${\pi^{1}} (\bm{\tau} \wedge \bm{\tau}^{\circ}) = i$, then
    \begin{align}
        & \hspace{-0.7cm} V^{1}(\bm{\tau} \vee \bm{\tau}^{\circ}) + V^{1}(\bm{\tau} \wedge \bm{\tau}^{\circ}) - V^{1}(\bm{\tau}) - V^{1}(\bm{\tau}^{\circ}) \\
        & \hspace{-0.7cm} \!\leq\! W(\bm{\tau} \vee \bm{\tau}^{\circ}, j, V^{0}) + W(\bm{\tau} \wedge \bm{\tau}^{\circ}, i, V^{0}) - W(\bm{\tau}, i, V^{0}) - W(\bm{\tau}^{\circ}, j, V^{0})\\
        & \hspace{-0.7cm} \!=\! \gamma p_{j,1} [V^{0}(\tau_{i} \!+\! 1, 1) \!-\! V^{0}(\tau''_i \!+\! 1, 1)] \!+\! \gamma (1\!-\!p_{j,1}) [V^{0}(\tau_{i} \!+\! 1, \tau'_{j} \!+\! 1) \!-\! V^{0}(\tau''_i \!+\! 1, \tau'_{j} \!+\! 1)] \\
        & \hspace{-0.7cm} \quad \!+\! \gamma (1-p_{i,1}) [V^{0}(\tau''_{i} + 1, \tau_{j} + 1) - V^{0}(\tau_{i} + 1, \tau_{j} + 1)] \\
        \label{eq:appb case a.4}
        & \hspace{-0.7cm} \!\leq\! \gamma (1\!-\!p_{i,1}) [V^{0}(\tau_{i} \!+\! 1, \tau'_{j} \!+\! 1) \!+\! V^{0}(\tau_{i} \!+\! 1, \tau_{j} \!+\! 1) \!-\! V^{0}(\tau''_i \!+\! 1, \tau'_{j} \!+\! 1) \!-\! V^{0}(\tau_{i} \!+\! 1, \tau_{j} \!+\! 1)],
    \end{align}
    where the last inequality is based on $p_{j,1} \leq p_{i,1}, V^{0}(\tau_{i} \!+\! 1, 1) \leq V^{0}(\tau''_i \!+\! 1, 1)$ and $V^{0}(\tau_{i} \!+\! 1, \tau'_{j} \!+\! 1) \leq V^{0}(\tau''_i \!+\! 1, \tau'_{j} \!+\! 1)$ from the proof of Lemma~\ref{lemma:monotone}.
    From~\eqref{eq:sub_0} and~\eqref{eq:appb case a.4}, we have
    \begin{align}
        V^{1}(\bm{\tau} \vee \bm{\tau}^{\circ}) + V^{1}(\bm{\tau} \wedge \bm{\tau}^{\circ}) - V^{1}(\bm{\tau}) - V^{1}(\bm{\tau}^{\circ}) \leq 0,
    \end{align}
    which is exactly~\eqref{eq:sup&sub_1,a}.
    \item[\textbf{(b)}] If $p_{j,1} \geq p_{i,1}$, the proof is similar to the case \textbf{(a)}.
\end{itemize}

Similarly, we prove~\eqref{eq:sup&sub_1,b} by different cases with different packet success rates.

\begin{itemize}
    \item[\textbf{(a')}] If $p_{j,1} \leq p_{i,1}$, then
    \begin{align}
        & \!\!\!\!p_{j,1} V^{1}(\bm{\tau}) + (p_{j,1} - p_{i,1}) V^{1}(\bm{\tau}^{\circ}) - p_{j,1} V^{1}(\bm{\tau} \wedge \bm{\tau}^{\circ}) - (p_{j,1} - p_{i,1}) V^{1}(\bm{\tau} \vee \bm{\tau}^{\circ}) \\
        & \!\!\!\!\!= \!p_{j,1} \! \left[ V^{1}(\tau_{i} \!+\!\! 1, \tau_{j} \!+\!\! 1) \!-\! V^{1}(\tau''_i \!+\!\! 1, \tau_{j} \!+\!\! 1) \right] \!+\! (p_{i,1} \!-\! p_{j,1}) \!\left[ V^{1}(\tau_{i} \!+\!\! 1, \tau'_{j} \!+\!\! 1) \!-\! V^{1}(\tau''_i \!+\!\! 1, \tau'_{j} \!+\!\! 1) \right], \!\!\\
        & \!\!\!\!\!\leq 0,
    \end{align}
    where the last inequality is based on the proof of Lemma~\ref{lemma:monotone}. 
    
    \item[\textbf{(b')}] If $p_{j,1} \geq p_{i,1}$, then there are four cases with different optimal actions of the states:
    
    (b'.1)\! ${\pi^{1}} (\bm{\tau}) \!=\! {\pi^{1}} (\bm{\tau}^{\circ}) \!=\! i$,
    (b'.2) \!${\pi^{1}} (\bm{\tau}) \!=\! {\pi^{1}} (\bm{\tau}^{\circ}) \!=\! j$,
    (b'.3) \!${\pi^{1}} (\bm{\tau}) \!=\! i$ and ${\pi^{1}} (\bm{\tau}^{\circ}) \!=\! j$,
    (b'.4) \!${\pi^{1}} (\bm{\tau}) \!=\! j$ and ${\pi^{1}} (\bm{\tau}^{\circ}) \!=\! i$.

    \item[(b'.1)] If ${\pi^{1}} (\bm{\tau}) = {\pi^{1}} (\bm{\tau}^{\circ}) = i$, then
    \begin{align}
        & \hspace{-0.5cm} p_{j,1}  V^{1}(\bm{\tau}) + (p_{j,1} - p_{i,1})  V^{1}(\bm{\tau}^{\circ}) - p_{j,1}  V^{1}(\bm{\tau} \wedge \bm{\tau}^{\circ}) - (p_{j,1} - p_{i,1})  V^{1}(\bm{\tau} \vee \bm{\tau}^{\circ}) \\
        & \hspace{-0.5cm} \!\leq\! p_{j,1} W(\bm{\tau}, i, \!V^{0}) \!+\! (p_{j,1} \!-\! p_{i,1}) W(\bm{\tau}^{\circ}\!, i, \!V^{0}) \!-\! p_{j,1} W(\bm{\tau} \!\wedge\! \bm{\tau}^{\circ}\!, i, \!V^{0}) \!-\! (p_{j,1} \!-\! p_{i,1}) W(\bm{\tau} \!\vee\! \bm{\tau}^{\circ}\!, i, \!V^{0}) \! \\
        & \hspace{-0.5cm} \!=\! p_{i,1} \left[ r(\tau_{i}, \tau'_{j}) \!-\! r(\tau''_i, \tau'_{j}) \right] \!+\! \gamma p_{j,1} \left[ p_{i,1} V^{0}(1, \tau_{j} \!+\! 1) \!+\! (1-p_{i,1}) V^{0}(\tau_{i} \!+\! 1, \tau_{j} \!+\! 1) \right] \\
        & \hspace{-0.5cm} \quad + \gamma (p_{j,1} - p_{i,1}) \left[ p_{i,1} V^{0}(1, \tau'_{j} + 1) + (1-p_{i,1}) V^{0}(\tau''_i + 1, \tau'_{j} + 1) \right] \\
        & \hspace{-0.5cm} \quad - \gamma p_{j,1} \left[ p_{i,1} V^{0}(1, \tau_{j} + 1) + (1-p_{i,1}) V^{0}(\tau''_i + 1, \tau_{j} + 1) \right] \\
        \label{eq:appb case b'1 1}
        & \hspace{-0.5cm} \quad - \gamma (p_{j,1} - p_{i,1}) \left[ p_{i,1} V^{0}(1, \tau'_{j} + 1) + (1-p_{i,1}) V^{0}(\tau_{i} + 1, \tau'_{j} + 1) \right].
    \end{align}
    From~\eqref{eq:sup_0}, we have
    \begin{align}\label{eq:appb case b'1 2}
        \!\!\!\!\!\! p_{j,1} V^{0}(1, \tau_{j} \!+\! 1) + (p_{j,1} - p_{i,1}) V^{0}(1, \tau'_{j} + 1) - V^{0}(1, \tau_{j} + 1) - p_{i,1} V^{0}(1, \tau'_{j} + 1) \leq 0
    \end{align}
    and 
    \begin{align}\label{eq:appb case b'1 3}
        & p_{j,1} V^{0}(\tau_{i} + 1, \tau_{j} + 1) + (p_{j,1} - p_{i,1}) V^{0}(\tau''_i + 1, \tau'_{j} + 1) \\
        & - p_{j,1} V^{0}(\tau''_i + 1, \tau_{j} + 1) - (p_{j,1} - p_{i,1}) V^{0}(\tau_{i} + 1, \tau'_{j} + 1) \leq 0.
    \end{align}
    Based on $r(\tau_{i}, \tau'_{j}) \leq r(\tau''_i, \tau'_{j})$,~\eqref{eq:appb case b'1 1},~\eqref{eq:appb case b'1 2}, and~\eqref{eq:appb case b'1 3}, it can be derived that
    \begin{align}
        V^{1}(\bm{\tau}) + (p_{j,1} - p_{i,1})  V^{1}(\bm{\tau}^{\circ}) - p_{j,1}  V^{1}(\bm{\tau} \wedge \bm{\tau}^{\circ}) - (p_{j,1} - p_{i,1})  V^{1}(\bm{\tau} \vee \bm{\tau}^{\circ}) \leq 0,
    \end{align}
    which is exactly~\eqref{eq:sup&sub_1,b}.
    \item[(b'.2)] If ${\pi^{1}} (\bm{\tau}) = {\pi^{1}} (\bm{\tau}^{\circ}) = j$, the proof is similar to the case (b'.1) by showing $p_{j,1} W(\bm{\tau}, j, V^{0}) \!+\! (p_{j,1} \!-\! p_{i,1}) W(\bm{\tau}^{\circ}, j, V^{0}) \!-\! p_{j,1} W(\bm{\tau} \!\wedge\! \bm{\tau}^{\circ}, j, V^{0}) \!-\! (p_{j,1} \!-\! p_{i,1}) W(\bm{\tau} \!\vee\! \bm{\tau}^{\circ}, j, V^{0}) \leq 0$.

    \item[(b'.3)] If ${\pi^{1}} (\bm{\tau}) = i$ and ${\pi^{1}} (\bm{\tau}^{\circ}) = j$, then
    \vspace{-0.5cm}
    \begin{align}
        & \hspace{-0.8cm} p_{j,1}  V^{1}(\bm{\tau}) + (p_{j,1} - p_{i,1})  V^{1}(\bm{\tau}^{\circ}) - p_{j,1}  V^{1}(\bm{\tau} \wedge \bm{\tau}^{\circ}) - (p_{j,1} - p_{i,1})  V^{1}(\bm{\tau} \vee \bm{\tau}^{\circ}) \\
        & \hspace{-0.8cm} \!\leq\! p_{j,1} W(\bm{\tau}, i, \!V^{0}) \!+\! (p_{j,1} \!-\! p_{i,1}) W(\bm{\tau}^{\circ}\!, j, \!V^{0}) \!-\! p_{j,1} W(\bm{\tau} \!\wedge\! \bm{\tau}^{\circ}\!, i, \!V^{0}) \!-\! (p_{j,1} \!-\! p_{i,1}) W(\bm{\tau} \!\vee\! \bm{\tau}^{\circ}\!, j, \!V^{0})  \\
        & \hspace{-0.8cm} \!\leq \!\! \gamma p_{j,1} \!\!\left[ p_{j,1} \!V^{0}(\tau''_i \!\!+\!\! 1,\! 1) \!+\! (p_{j,1}\!\!-\!p_{i,1}) V^{0}(\tau_i \!+\!\! 1, \!\tau'_j \!+\!\! 1) \!-\! p_{j,1} \!V^{0}(\tau_i \!+\!\! 1,\! 1) \!-\! (p_{j,1}\!\!-\!p_{i,1}) V^{0}(\tau''_i \!\!+\!\! 1,\! \tau'_j \!+\!\! 1)\right] \!\!\! \\
        & \hspace{-0.8cm} \quad - \gamma [(p_{j,1} - p_{i,1}) V^{0}(\tau_i + 1, \tau'_j + 1) -(p_{j,1} - p_{i,1}) V^{0}(\tau''_i + 1, \tau'_j + 1)]  \\
        & \hspace{-0.8cm} \quad + \gamma [p_{j,1}  V^{0}(\tau_i + 1, \tau_j + 1) - p_{j,1} V^{0}(\tau''_i + 1, \tau_j + 1) ] \\
        \label{eq:appb case b'3 1}
        & \hspace{-0.8cm} \quad - \gamma p_{i,1} p_{j,1} \left[  V^{0}(\tau''_i + 1, 1) + V^{0}(\tau_i + 1, \tau_j + 1) - V^{0}(\tau_i + 1, 1) - V^{0}(\tau''_i + 1, \tau_j + 1)\right],
    \end{align}
    where the inequality is based on $r(\tau_{i}, \tau'_{j}) \leq r(\tau''_i, \tau'_{j})$. Then, from~\eqref{eq:sup_0}, we have
    \begin{align}\label{eq:appb case b'3 sup}
        \hspace{-1.3cm} p_{j,1} V^{0}(\tau''_i \!+\!\! 1, \!1) \!+\! (p_{j,1}\!-\!p_{i,1}) V^{0}(\tau_{i} \!+\!\! 1, \!\tau'_{j} \!+\!\! 1) \!\geq\! p_{j,1} V^{0}(\tau_{i} \!+\!\! 1,\! 1) \!+\! (p_{j,1}\!-\!p_{i,1}) V^{0}(\tau''_i \!+\!\! 1, \!\tau'_{j} \!+\!\! 1).\!\!
    \end{align}
    From $p_{j} < 1$ and~\eqref{eq:appb case b'3 sup},~\eqref{eq:appb case b'3 1} is smaller than the following equation
    \begin{align}\label{eq:appb case b'3 sub}
        \hspace{-1.3cm} \gamma \left[p_{j,1} - p_{j,1} p_{i,1}\right] \left[ V^{0}(\tau''_i + 1, 1) + V^{0}(\tau_i + 1, \tau_j + 1) - V^{0}(\tau_i + 1, 1) - V^{0}(\tau''_i + 1, \tau_j + 1) \right] \!\!,
    \end{align}
    which is smaller than 0 from~\eqref{eq:sub_0} and $p_{j,1} > p_{j,1} p_{i,1}$.
    Thus, based on~\eqref{eq:appb case b'3 1} and~\eqref{eq:appb case b'3 sub}, it can be derived that 
    \begin{align}
        p_{j,1}  V^{1}(\bm{\tau}) + (p_{j,1} - p_{i,1})  V^{1}(\bm{\tau}^{\circ}) - p_{j,1}  V^{1}(\bm{\tau} \wedge \bm{\tau}^{\circ}) - (p_{j,1} - p_{i,1})  V^{1}(\bm{\tau} \vee \bm{\tau}^{\circ}) \leq 0,
    \end{align}
    which is exactly~\eqref{eq:sup&sub_1,b}.
    \item[(b'.4)] If ${\pi^{1}} (\bm{\tau}) = j$ and ${\pi^{1}} (\bm{\tau}^{\circ}) = i$, then based on the case \textbf{(a')} and $p_{j,1} \geq p_{i,1}$, we have
    \begin{align}
        p_{i,1} V^{1}(\bm{\tau}) + (p_{i,1} - p_{j,1}) V^{1}(\bm{\tau}^{\circ}) \leq p_{i,1} V^{1}(\bm{\tau} \wedge \bm{\tau}^{\circ}) + (p_{i,1} - p_{j,1}) V^{1}(\bm{\tau} \vee \bm{\tau}^{\circ}).
    \end{align}
    In addition, as Theorem~\ref{theo:2-s-1-c} is derived by the probabilistic supermodularity, the optimal policy of $V^{1}(\mathbf{s})$ can also be proved to have the AoI-state threshold property in terms of $\tau_{j}$ in this case. Thus, we obtain that ${\pi^{1}} (\bm{\tau} \vee \bm{\tau}^{\circ}) = j$ from ${\pi^{1}} (\bm{\tau}) = j$, which implies that
    \vspace{-0.1cm}
    \begin{equation}
            W\left( \left( \tau_{i}, \tau'_{j} \right), j, V^{0} \right) \geq W\left( \left( \tau_{i}, \tau'_{j} \right), i, V^{0} \right), \label{eq:Wj > Wi}
            \vspace{-0.2cm}
    \end{equation}
    \vspace{-0.2cm}
    based on~\eqref{ineq:V&W}. Using~\eqref{eq:W-function},~\eqref{eq:Wj > Wi} is derived to
    \begin{align}
        p_{j,1} V^{0}(\tau_{i}\!+\!1, 1) \!+\! (1\!-\!p_{j,1}) V^{0}(\tau_{i}\!+\!1, \tau'_{j}\!+\!1) \geq p_{i,1} V^{0}(1, \tau'_{j}\!+\!1) \!+\! (1\!-\!p_{i,1}) V^{0}(\tau_{i}\!+\!1, \tau'_{j}\!+\!1),\!\!
    \end{align}
    which is exactly
    \vspace{-0.2cm}
    \begin{align}
        p_{j,1} V^{0}(\tau_{i}+1, 1) + (p_{i,1}- p_{j,1}) V^{0}(\tau_{i}+1, \tau'_{j}+1) \geq p_{i,1} V^{0}(1, \tau'_{j}+1). \label{eq: counter1}
    \end{align}
    Similarly, we derive the following inequality based on~\eqref{eq:W-function},~\eqref{ineq:V&W} and ${\pi^{1}} (\bm{\tau}^{\circ}) = i$,
    \begin{align}
        p_{i,1} V^{0}(1, \tau'_{j}+1) \geq p_{j,1} V^{0}(\tau''_{i}+1, 1) + (p_{i,1}-p_{j,1}) V^{0}(\tau''_{i}+1, \tau'_{j}+1).\label{eq: counter2}
    \end{align}
    Then, based on~\eqref{eq: counter1} and~\eqref{eq: counter2}, we have
     \begin{align}
         \hspace{-0.7cm} p_{j,1} V^{0}(\tau_{i}\!+\!\!1, \!1) \!+\! (p_{j,1}\!-\!p_{i,1}) V^{0}(\tau''_{i}\!+\!\!1, \tau'_{j}\!+\!\!1) \!\geq\! p_{j,1} V^{0}(\tau''_{i}\!+\!\!1, \!1) \!+\! (p_{j,1}\!-\!p_{i,1}) V^{0}(\tau_{i}\!+\!\!1, \tau'_{j}\!+\!\!1), \!\!
    \end{align}
    which is opposite to the~\eqref{eq:sup_0}. Therefore, this case cannot exist at the 1st iteration. As the value function of the last iteration is probabilistic supermodular, this case can also be proved to be non-existence by the same method for all other iterations in the value iteration.
\end{itemize}

Thus, the submodularity and the probabilistic supermodularity of the value function $V^{0}(\mathbf{s})$ propagates through the Bellman operator $\mathsf{B}[\cdot]$ to the optimal value function $V^{*}(\mathbf{s})$.

\section{Proof of Lemma~\ref{lemma:much larger}} \label{proof: much larger}
Similar to the proof of Lemma~\ref{lemma:monotone}, we assume that the initial value function $V^{0}(\mathbf{s})$ has the following property given states $\mathbf{s} = (\bm{\tau}, \mathbf{H})$ and $ \mathbf{s}' = (\bm{\tau}'_{(i)}, \mathbf{H})$ where $\tau'_{i} \gg \tau_{i}$,
\begin{equation}\label{eq: much larger 0}
    V^{0}\left(\mathbf{s}'\right) \ll V^{0} \left(\mathbf{s}\right).
\end{equation}
To prove Lemma~\ref{lemma:much larger} based on~\eqref{eq: much larger 0} and Lemma~\ref{lemma:converge of V*}, it is sufficient to show that the value function~$V^{1}(\mathbf{s})$ has the same property, i.e.
\begin{equation}\label{eq: much larger 1}
    V^{1}\left(\mathbf{s}'\right) \ll V^{1} \left(\mathbf{s}\right).
\end{equation}

Similar to the proof of Lemma~\ref{lemma:probabilistic sup}, we write the states as $\mathbf{s} = \bm{\tau}$ and $ \mathbf{s}' = \bm{\tau}'_{(i)}$, and the action as ${\pi^{\tilde{t}}}(\mathbf{s}) = i$ to represent $a^{\tilde{t}}_{i}=1$ for the state $\mathbf{s}$.
Moreover, we drop the constant AoI states in different optimal value functions according to~\eqref{eq: W i -i}. For example, given states $\bm{\tau}  = (\tau_1, \dots, \tau_{i}, \dots, \tau_N)$ and $\bm{\tau}' = (\tau_1, \dots, \tau'_{i}, \dots, \tau_N)$, we write them as $\bm{\tau}  = (\tau_{i})$ and $\bm{\tau}'  = (\tau'_{i})$.
In the following, we prove~\eqref{eq: much larger 1} by cases with different optimal actions of the states.
\begin{itemize}
    \item[(a)] If ${\pi^{1}} \left(\bm{\tau}'_{(i)}\right) = i$, then 
    \begin{align}
        & \hspace{-0.7cm} V^{1} \left(\bm{\tau}' \right) - V^{1} \left(\bm{\tau} \right) \\
        & \hspace{-0.7cm} \leq W \left(\bm{\tau}', i, V^{0} \right) - W \left(\bm{\tau}, i, V^{0} \right) \\
        & \hspace{-0.7cm} = p_{i,1} \left[r(\tau'_{i}, \tau_{j}) - r(\tau_{i}, \tau_{j})\right] \\
        & \hspace{-0.7cm} \quad \!+\! \gamma \left[p_{i,1} V^{0}(1, \tau_{j}\!+\!\!1) \!+\! (1\!-\!p_{i,1}) V^{0}(\tau'_{i}\!+\!\!1, \tau_{j}\!+\!\!1) \!-\! p_{i,1} V^{0}(1, \tau_{j}\!+\!\!1) \!-\! (1\!-\!p_{i,1}) V^{0}(\tau_{i}\!+\!\!1, \tau_{j}\!+\!\!1) \right] \!\! \\
        & \hspace{-0.7cm} \ll 0,
    \end{align}
    where the last inequality is base on $r(\tau'_{i}, \tau_{j}) \ll r(\tau_{i}, \tau_{j})$ and~\eqref{eq: much larger 0}.
    \item[(b)] If ${\pi^{1}} \left(\bm{\tau}'_{(i)}\right) = j$, then the proof is similar to the case (a) by showing $W \left(\bm{\tau}', j, V^{0} \right) - W \left(\bm{\tau}, j, V^{0} \right) \ll 0$.
\end{itemize}

\section{Proof of Lemma~\ref{lemma: probabilistic sup, N>2}} \label{proof: sup and sub, N>2}
Similar to the proof of Lemma~\ref{lemma:monotone}, we assume that the initial value function $V^{0}(\mathbf{s})$ is probabilistic supermodular given states $\mathbf{s} = (\bm{\tau}, \mathbf{H})$ and $\mathbf{s}^{\circ} = (\bm{\tau}^\circ, \mathbf{H})$, where $\bm{\tau}^\circ = (\tau''_i, \tau'_{j})$ with $ \tau''_i \ll \tau_{i}$ and $\tau'_{j} \geq \tau_{j}$,
\begin{align}
    p_{j,1} V^{0}\left(\mathbf{s} \wedge \mathbf{s}^{\circ}\right) + (p_{j,1} - p_{i,1}) V^{0}\left(\mathbf{s} \vee \mathbf{s}^{\circ}\right) \geq p_{j,1} V^{0}\left(\mathbf{s}\right) + (p_{j,1} - p_{i,1}) V^{0}\left(\mathbf{s}^{\circ}\right).\label{eq:sup_0 N>2} 
\end{align}
To prove Lemma~\ref{lemma: probabilistic sup, N>2} based on~\eqref{eq:sup_0 N>2} and Lemma~\ref{lemma:converge of V*}, it is sufficient to show that the value function~$V^{1}(\mathbf{s})$ is probabilistic supermodular, i.e.
\begin{equation}
    p_{j,1} V^{1}\left(\mathbf{s} \wedge \mathbf{s}^{\circ}\right) + (p_{j,1} - p_{i,1}) V^{1}\left(\mathbf{s} \vee \mathbf{s}^{\circ}\right) \geq p_{j,1} V^{1}\left(\mathbf{s}\right) + (p_{j,1} - p_{i,1}) V^{1}\left(\mathbf{s}^{\circ}\right) .\label{eq:sup_1 N>2}
\end{equation}
Similar to the proof of Lemma~\ref{lemma:much larger}, we write the states as $\mathbf{s} = \bm{\tau} = (\tau_{i}, \tau_{j})$ and $ \mathbf{s}^{\circ} = \bm{\tau}^{\circ} = (\tau''_{i}, \tau'_{j})$, and the action as ${\pi^{\tilde{t}}}(\mathbf{s}) = i$ to represent $a^{\tilde{t}}_{i}=1$, then $\bm{\tau} \vee \bm{\tau}^{\circ} = (\tau_{i}, \tau'_{j})$ and $\bm{\tau} \wedge \bm{\tau}^{\circ} = (\tau''_{i}, \tau_{j})$.
In the following, we prove~\eqref{eq:sup_1 N>2} by case \textbf{(a)} and \textbf{(b)} with different packet success rates.
\begin{itemize}
    \item[\textbf{(a)}] If $p_{j,1} \leq p_{i,1}$, then
    \begin{align}
        & p_{j,1} V^{1}(\bm{\tau}) + (p_{j,1} - p_{i,1}) V^{1}(\bm{\tau}^{\circ}) - p_{j,1} V^{1}(\bm{\tau} \wedge \bm{\tau}^{\circ}) - (p_{j,1} - p_{i,1}) V^{1}(\bm{\tau} \vee \bm{\tau}^{\circ}) \\
        & = p_{j,1} [V^{1}(\bm{\tau}) - V^{1}(\bm{\tau} \wedge \bm{\tau}^{\circ})] + (p_{i,1} - p_{j,1}) [V^{1}(\bm{\tau} \vee \bm{\tau}^{\circ}) - V^{1}(\bm{\tau}^{\circ})] \\
        & \leq 0,
    \end{align}
    where the last inequality is derived based on the proof of Lemma~\ref{lemma:monotone}.
    \item[\textbf{(b)}] If $p_{j,1} \geq p_{i,1}$, then there are ten cases with different optimal actions of the states. 
    
    (b.1) ${\pi^{1}}(\bm{\tau}) = {\pi^{1}}(\bm{\tau}^{\circ}) = i$, 
    (b.2) ${\pi^{1}}(\bm{\tau}) = {\pi^{1}}(\bm{\tau}^{\circ}) = j$, 
    (b.3) ${\pi^{1}}(\bm{\tau}) = i$ and ${\pi^{1}}(\bm{\tau}^{\circ}) = j$, 
    (b.4) ${\pi^{1}}(\bm{\tau}) = j$ and ${\pi^{1}}(\bm{\tau}^{\circ}) = i$, 
    (b.5) ${\pi^{1}}(\bm{\tau}) = {\pi^{1}}(\bm{\tau}^{\circ}) = k, (k \neq i,j)$ 
    (b.6) ${\pi^{1}}(\bm{\tau}) = i$ and ${\pi^{1}}(\bm{\tau}^{\circ}) = k$, 
    (b.7) ${\pi^{1}}(\bm{\tau}) = k$ and ${\pi^{1}}(\bm{\tau}^{\circ}) = i$, 
    (b.8) ${\pi^{1}}(\bm{\tau}) = j$ and ${\pi^{1}}(\bm{\tau}^{\circ}) = k$, 
    (b.9) ${\pi^{1}}(\bm{\tau}) = k$ and ${\pi^{1}}(\bm{\tau}^{\circ}) = j$, 
    (b.10) ${\pi^{1}}(\bm{\tau}) = k_1$ and ${\pi^{1}}(\bm{\tau}^{\circ}) = k_2$.
    
    For the cases (b.1), (b.2), (b.3), and (b.4), the proof is the same as the proof of Lemma~\ref{lemma:probabilistic sup}, because the states, except for sensor $i$ and sensor $j$, are constant in these cases. Therefore, in the sequel, we only prove the other cases.
    In addition, for the cases with ${\pi^{1}}(\bm{\tau}) = k$ or ${\pi^{1}}(\bm{\tau}^{\circ}) = k$, the states are written as $\bm{\tau} = (\tau_{i}, \tau_{j}, \tau_{k})$, as the AoI state $\tau_{k}$ is not constant.
    \item[(b.5)] If ${\pi^{1}}(\bm{\tau}) = {\pi^{1}}(\bm{\tau}^{\circ}) = k, (k \neq i,j)$, then 
    \begin{align}
        & \hspace{-0.5cm} p_{j,1} V^{1}(\bm{\tau}) + (p_{j,1}-p_{i,1}) V^{1}(\bm{\tau}^{\circ}) - p_{j,1} V^{1}(\bm{\tau} \wedge \bm{\tau}^{\circ}) - (p_{j,1}-p_{i,1}) V^{1}(\bm{\tau} \vee \bm{\tau}^{\circ}) \\
        & \hspace{-0.5cm} \leq \! p_{j,1} W(\bm{\tau}\!, k, \!V^{0}) \!+\! (p_{j,1}\!-\!p_{i,1}) W(\bm{\tau}^{\circ}\!\!, k, \!V^{0}) \!-\! p_{j,1} W(\bm{\tau} \!\!\wedge\! \bm{\tau}^{\circ}\!\!, k, \!V^{0}) \!-\! (p_{j,1}\!-\!p_{i,1}) W(\bm{\tau} \!\vee\! \bm{\tau}^{\circ}\!\!, k, \!V^{0}) \!\! \\
        & \hspace{-0.5cm} \leq \gamma p_{j,1} \!\left[ p_{k,1} V^{0}(\tau_{i}\!+\!1, \tau_{j}\!+\!1, 1) \!+\! (1\!-\!p_{k,1}) V^{0}(\tau_{i}\!+\!1, \tau_{j}\!+\!1, \tau_k\!+\!1) \right] \\
        & \hspace{-0.5cm} \quad + \gamma (p_{j,1} - p_{i,1}) [p_{k,1} V^{0}(\tau''_i+1, \tau'_{j}+1, 1) + (1-p_{k,1}) V^{0}(\tau''_i+1, \tau'_{j}+1, \tau_k+1)] \\
        & \hspace{-0.5cm} \quad - \gamma p_{j,1} [p_{k,1} V^{0}(\tau''_i+1, \tau_{j}+1, 1) + (1-p_{k,1}) V^{0}(\tau''_i+1, \tau_{j}+1, \tau_k+1)] \\
        \label{eq:appd case b.5 1}
        & \hspace{-0.5cm} \quad - \gamma (p_{j,1} - p_{i,1}) [p_{k,1} V^{0}(\tau_{i}+1, \tau'_{j}+1, 1) + (1-p_{k,1}) V^{0}(\tau_{i}+1, \tau'_{j}+1, \tau_k+1)],
    \end{align}
    where the second inequality is based on $r(\bm{\tau} \!\vee\! \bm{\tau}^{\circ}) < r(\bm{\tau})$. Using~\eqref{eq:sup_0 N>2}, it can be derived that\vspace{-0.5cm}
    \begin{align}
        & \gamma p_{k,1} [p_{j,1} V^{0}(\tau_{i}+1, \tau_{j}+1, 1) + (p_{j,1} - p_{i,1}) V^{0}(\tau''_i+1, \tau'_{j}+1, 1) \\
        \label{eq:appd case b.5 2}
        & \quad - p_{j,1} V^{0}(\tau''_i+1, \tau_{j}+1, 1) - (p_{j,1} - p_{i,1}) V^{0}(\tau_{i}+1, \tau'_{j}+1, 1)] \leq 0 
    \end{align}
    and \vspace{-0.4cm}
    \begin{align}
        & \gamma (1-p_{k,1}) [p_{j,1} V^{0}(\tau_{i}+1, \tau_{j}+1, \tau_k+1) + (p_{j,1} - p_{i,1}) V^{0}(\tau''_i+1, \tau'_{j}+1, \tau_k+1) \\
        \label{eq:appd case b.5 3}
        & \quad - p_{j,1} V^{0}(\tau''_i+1, \tau_{j}+1, \tau_k+1) - (p_{j,1} - p_{i,1}) V^{0}(\tau_{i}+1, \tau'_{j}+1, \tau_k+1)] \leq 0.\vspace{-0.4cm}
    \end{align}
    Based on~\eqref{eq:appd case b.5 1},~\eqref{eq:appd case b.5 2}, and~\eqref{eq:appd case b.5 3}, we have \vspace{-0.2cm}
    \begin{align}
        p_{j,1} V^{1}(\bm{\tau}) + (p_{j,1}-p_{i,1}) V^{1}(\bm{\tau}^{\circ}) - p_{j,1} V^{1}(\bm{\tau} \wedge \bm{\tau}^{\circ}) - (p_{j,1}-p_{i,1}) V^{1}(\bm{\tau} \vee \bm{\tau}^{\circ}) \leq 0,\vspace{-0.2cm}
    \end{align}
    \vspace{-0.3cm}which is exactly~\eqref{eq:sup_1 N>2}.
    \item[(b.6)] If ${\pi^{1}}(\bm{\tau}) = i$ and ${\pi^{1}}(\bm{\tau}^{\circ}) = k$, then
    \begin{align}
        & \hspace{-0.5cm} p_{j,1} V^{1}(\bm{\tau}) + (p_{j,1}-p_{i,1}) V^{1}(\bm{\tau}^{\circ}) - p_{j,1} V^{1}(\bm{\tau} \wedge \bm{\tau}^{\circ}) - (p_{j,1}-p_{i,1}) V^{1}(\bm{\tau} \vee \bm{\tau}^{\circ}) \\
        & \hspace{-0.5cm} \leq p_{j,1} W(\bm{\tau}, i, \! V^{0}) \!+\! (p_{j,1}\!-\!p_{i,1}) W(\bm{\tau}^{\circ}\!\!, k, \!V^{0}) \!-\! p_{j,1} W(\bm{\tau} \!\wedge\! \bm{\tau}^{\circ}\!\!, k, \!V^{0}) \!-\! (p_{j,1}\!-\!p_{i,1}) W(\bm{\tau} \!\vee\! \bm{\tau}^{\circ}\!\!, i, \!V^{0}) \!\! \\
        & \hspace{-0.5cm} \leq \gamma p_{j,1} \!\left[ p_{i,1} V^{0}(1, \tau_{j}\!+\!1, \tau_k\!+\!1) \!+\! (1\!-\!p_{i,1}) V^{0}(\tau_{i}\!+\!1, \tau_{j}\!+\!1, \tau_k\!+\!1) \right] \\
        & \hspace{-0.5cm} \quad + \gamma (p_{j,1} - p_{i,1}) [p_{k,1} V^{0}(\tau''_i+1, \tau'_{j}+1, 1) + (1-p_{k,1}) V^{0}(\tau''_i+1, \tau'_{j}+1, \tau_k+1)] \\
        & \hspace{-0.5cm} \quad - \gamma p_{j,1} [p_{k,1} V^{0}(\tau''_i+1, \tau_{j}+1, 1) + (1-p_{k,1}) V^{0}(\tau''_i+1, \tau_{j}+1, \tau_k+1)] \\
        \label{eq:appd case b.6 1}
        & \hspace{-0.5cm} \quad - \gamma (p_{j,1} - p_{i,1}) [p_{i,1} V^{0}(1, \tau'_{j}+1, \tau_k+1) + (1-p_{i,1}) V^{0}(\tau_{i}+1, \tau'_{j}+1, \tau_k+1)].
    \end{align}
    Using Lemma~\ref{lemma:much larger} and $\tau_{i} \gg \tau''_{i}$, we have
    \begin{align}\label{eq:appd case b.6 2}
        \hspace{-1cm} V^{0}(\tau_{i}\!+\!1,\! \tau_{j}\!+\!1,\! \tau_{k}\!+\!1) \!\ll\! V^{0}(1, \!\tau_{j}\!+\!1, \!\tau_k\!+\!1) \!-\! V^{0}(\tau''_{i}\!+\!1,\! \tau_{j}\!+\! 1, \!1) \!-\! V^{0}(\tau''_i\!+\!1, \!\tau_{j}\!+\!1, \!\tau_k\!+\!1)
    \end{align}
    and 
    \begin{align}\label{eq:appd case b.6 3}
        \hspace{-1cm} V^{0}(\tau_{i}\!+\!1,\! \tau'_{j}\!+\!1, \!\tau_k\!+\!1) \!\ll\! V^{0}(1, \!\tau'_{j}\!+\!1, \!\tau_k\!+\!1) \!-\! V^{0}(\tau''_i\!+\!1, \!\tau'_{j}\!+\!1, \!1) \!-\! V^{0}(\tau''_i\!+\!1, \tau'_{j}\!+\!1, \!\tau_k\!+\!1).
    \end{align}
    From~\eqref{eq:appd case b.6 2},~\eqref{eq:appd case b.6 3}, $V(\mathbf{s}) < 0$, and $0<p_{i,1},p_{j,1}<1$, we can derive that $V^{0}(\tau_{i}\!+\!1,\! \tau_{j}\!+\!1,\! \tau_{k}\!+\!1)$ and $V^{0}(\tau_{i}\!+\!1,\! \tau'_{j}\!+\!1, \!\tau_k\!+\!1)$ are dominate elements in~\eqref{eq:appd case b.6 1}. Therefore, to prove~\eqref{eq:sup_1 N>2} from~\eqref{eq:appd case b.6 1}, we only to prove the following inequality
    \begin{align}
        &(1\!-\!p_{i,1}) [p_{j,1} V^{0}(\tau_{i}\!+\!1, \tau_{j}\!+\!1, \tau_k\!+\!1) + (p_{j,1} - p_{i,1}) V^{0}(\tau''_i+1, \tau'_{j}+1, \tau_k+1) \\
        & \quad - p_{i,1} V^{0}(\tau''_i+1, \tau_{j}+1, \tau_k+1) + (p_{j,1} - p_{i,1}) V^{0}(\tau_{i}+1, \tau'_{j}+1, \tau_k+1)] \leq 0,
    \end{align}
    which can be derived from~\eqref{eq:sup_0 N>2} directly.
    
    For cases (b.7), (b.8), (b.9), and (b.10), the proof is similar to the case (b.6)
\end{itemize}
Thus, the probabilistic supermodularity of the value function $V^{0}(\mathbf{s})$ propagates through the Bellman operator $\mathsf{B}[\cdot]$ to the optimal value function $V^{*}(\mathbf{s})$.

\bibliographystyle{IEEEtran}

\begin{thebibliography}{10}
	\providecommand{\url}[1]{#1}
	\csname url@samestyle\endcsname
	\providecommand{\newblock}{\relax}
	\providecommand{\bibinfo}[2]{#2}
	\providecommand{\BIBentrySTDinterwordspacing}{\spaceskip=0pt\relax}
	\providecommand{\BIBentryALTinterwordstretchfactor}{4}
	\providecommand{\BIBentryALTinterwordspacing}{\spaceskip=\fontdimen2\font plus
		\BIBentryALTinterwordstretchfactor\fontdimen3\font minus
		\fontdimen4\font\relax}
	\providecommand{\BIBforeignlanguage}[2]{{%
			\expandafter\ifx\csname l@#1\endcsname\relax
			\typeout{** WARNING: IEEEtran.bst: No hyphenation pattern has been}%
			\typeout{** loaded for the language `#1'. Using the pattern for}%
			\typeout{** the default language instead.}%
			\else
			\language=\csname l@#1\endcsname
			\fi
			#2}}
	\providecommand{\BIBdecl}{\relax}
	\BIBdecl
	
	\bibitem{Chen2023SEDRL}
	\BIBentryALTinterwordspacing
	J.~Chen, W.~Liu, D.~E. Quevedo, Y.~Li, and B.~Vucetic, ``Structure-enhanced
	deep reinforcement learning for optimal transmission scheduling,'' in
	\emph{Proc. IEEE ICC}, 2023. [Online]. Available:
	\url{https://doi.org/10.48550/arXiv.2205.11861}
	\BIBentrySTDinterwordspacing
	
	\bibitem{Park2018WNCS}
	P.~Park, S.~Coleri~Ergen, C.~Fischione, C.~Lu, and K.~H. Johansson, ``Wireless
	network design for control systems: A survey,'' \emph{IEEE Commun. Surv.
		Tutor.}, vol.~20, no.~2, pp. 978--1013, May 2018.
	
	\bibitem{Huang2020UpDown}
	K.~Huang, W.~Liu, Y.~Li, B.~Vucetic, and A.~Savkin, ``Optimal downlink-uplink
	scheduling of wireless networked control for industrial {IoT},'' \emph{IEEE
		Internet Things J.}, vol.~7, no.~3, pp. 1756--1772, Mar. 2020.
	
	\bibitem{Schenato2007Stability}
	L.~Schenato, B.~Sinopoli, M.~Franceschetti, K.~Poolla, and S.~S. Sastry,
	``Foundations of control and estimation over lossy networks,'' \emph{Proc.
		IEEE}, vol.~95, no.~1, pp. 163--187, Mar. 2007.
	
	\bibitem{gatsis2015opportunistic}
	K.~Gatsis, M.~Pajic, A.~Ribeiro, and G.~J. Pappas, ``Opportunistic control over
	shared wireless channels,'' \emph{IEEE Trans. Autom. Control}, vol.~60,
	no.~12, pp. 3140--3155, Mar, 2015.
	
	\bibitem{han2017optimal}
	D.~Han, J.~Wu, H.~Zhang, and L.~Shi, ``Optimal sensor scheduling for multiple
	linear dynamical systems,'' \emph{Automatica}, vol.~75, pp. 260--270, Jan,
	2017.
	
	\bibitem{wu2018optimalmulti}
	S.~Wu, X.~Ren, S.~Dey, and L.~Shi, ``Optimal scheduling of multiple sensors
	over shared channels with packet transmission constraint,''
	\emph{Automatica}, vol.~96, pp. 22--31, Oct. 2018.
	
	\bibitem{Forootani2022NOMA}
	A.~Forootani, R.~Iervolino, M.~Tipaldi, and S.~Dey, ``Transmission scheduling
	for multi-process multi-sensor remote estimation via approximate dynamic
	programming,'' \emph{Automatica}, vol. 136, pp. 1--14, Feb, 2022. {Art. no.}
	110061.
	
	\bibitem{sutton2018reinforcement}
	R.~S. Sutton and A.~G. Barto, \emph{Reinforcement learning: An
		introduction}.\hskip 1em plus 0.5em minus 0.4em\relax MIT press, 2018.
	
	\bibitem{zhao2022deep}
	\BIBentryALTinterwordspacing
	Z.~Zhao, W.~Liu, D.~E. Quevedo, Y.~Li, and B.~Vucetic, ``Deep learning for
	wireless networked systems: a joint estimation-control-scheduling approach,''
	\emph{arXiv preprint}, Oct, 2022. [Online]. Available:
	\url{https://doi.org/10.48550/arXiv.2210.00673}
	\BIBentrySTDinterwordspacing
	
	\bibitem{leong2020DRL}
	A.~S. Leong, A.~Ramaswamy, D.~E. Quevedo, H.~Karl, and L.~Shi, ``Deep
	reinforcement learning for wireless sensor scheduling in cyber--physical
	systems,'' \emph{Automatica}, vol. 113, pp. 1--8, Mar. 2020. {Art. no.}
	108759.
	
	\bibitem{liu2021drlscheduling}
	\BIBentryALTinterwordspacing
	W.~Liu, K.~Huang, D.~E. Quevedo, B.~Vucetic, and Y.~Li, ``Deep reinforcement
	learning for wireless scheduling in distributed networked control,''
	\emph{submitted to Automatica}, 2021. [Online]. Available:
	\url{https://doi.org/10.48550/arXiv.2109.12562}
	\BIBentrySTDinterwordspacing
	
	\bibitem{Demirel2018DEEPCAS}
	B.~Demirel, A.~Ramaswamy, D.~E. Quevedo, and H.~Karl, ``Deep{CAS}: A deep
	reinforcement learning algorithm for control-aware scheduling,'' \emph{IEEE
		Contr. Syst. Lett.}, vol.~2, no.~4, pp. 737--742, Oct. 2018.
	
	\bibitem{Yang2022OMA}
	L.~Yang, H.~Rao, M.~Lin, Y.~Xu, and P.~Shi, ``Optimal sensor scheduling for
	remote state estimation with limited bandwidth: A deep reinforcement learning
	approach,'' \emph{Inf. Sci.}, vol. 588, pp. 279--292, Apr. 2022.
	
	\bibitem{pang2022drl}
	\BIBentryALTinterwordspacing
	G.~Pang, W.~Liu, Y.~Li, and B.~Vucetic, ``{DRL}-based resource allocation in
	remote state estimation,'' \emph{accepted by IEEE Trans. Wirel. Commun.},
	2022. [Online]. Available: \url{https://doi.org/10.48550/arXiv.2205.12267}
	\BIBentrySTDinterwordspacing
	
	\bibitem{guo2019exploration}
	\BIBentryALTinterwordspacing
	Z.~D. Guo and E.~Brunskill, ``Directed exploration for reinforcement
	learning,'' \emph{arXiv preprint}, Jun 2019. [Online]. Available:
	\url{https://doi.org/10.48550/arXiv.1906.07805}
	\BIBentrySTDinterwordspacing
	
	\bibitem{wu2019single}
	S.~Wu, X.~Ren, Q.-S. Jia, K.~H. Johansson, and L.~Shi, ``Learning optimal
	scheduling policy for remote state estimation under uncertain channel
	condition,'' \emph{IEEE Trans. Control. Netw. Syst.}, vol.~7, no.~2, pp.
	579--591, June, 2020.
	
	\bibitem{wu2020optimalmulti}
	S.~Wu, K.~Ding, P.~Cheng, and L.~Shi, ``Optimal scheduling of multiple sensors
	over lossy and bandwidth limited channels,'' \emph{IEEE Trans. Netw. Syst.},
	vol.~7, no.~3, pp. 1188--1200, Jan. 2020.
	
	\bibitem{hsu2017threshold}
	Y.-P. Hsu, E.~Modiano, and L.~Duan, ``Age of information: Design and analysis
	of optimal scheduling algorithms,'' in \emph{Proc. IEEE Int. Symp. Inf.
		Theory}, June. 2017, pp. 561--565.
	
	\bibitem{liu2021remoteMF}
	W.~Liu, D.~E. Quevedo, Y.~Li, K.~H. Johansson, and B.~Vucetic, ``Remote state
	estimation with smart sensors over {Markov} fading channels,'' \emph{IEEE
		Trans. Autom. Control}, vol.~67, no.~6, pp. 2743--2757, June, 2022.
	
	\bibitem{liu2022stability}
	W.~Liu, D.~E. Quevedo, K.~H. Johansson, B.~Vucetic, and Y.~Li, ``Stability
	conditions for remote state estimation of multiple systems over multiple
	markov fading channels,'' \emph{IEEE Trans. Autom. Control}, early access,
	Aug. 2022.
	
	\bibitem{puterman2014markov}
	M.~L. Puterman, \emph{Markov decision processes: discrete stochastic dynamic
		programming}.\hskip 1em plus 0.5em minus 0.4em\relax John Wiley \& Sons,
	2014.
	
	\bibitem{puterman1990markov}
	------, ``Markov decision processes,'' \emph{Handbooks in operations research
		and management science}, vol.~2, pp. 331--434, 1990.
	
	\bibitem{hernandez2012further}
	O.~Hern{\'a}ndez-Lerma and J.~B. Lasserre, \emph{Further topics on
		discrete-time Markov control processes}.\hskip 1em plus 0.5em minus
	0.4em\relax Berlin, Germany: Springer, 2012, vol.~42.
	
	\bibitem{topkis1998supermodularity}
	D.~M. Topkis, \emph{Supermodularity and complementarity}.\hskip 1em plus 0.5em
	minus 0.4em\relax Princeton university press, 1998.
	
	\bibitem{lillicrap2015DDPG}
	\BIBentryALTinterwordspacing
	T.~P. Lillicrap, J.~J. Hunt, A.~Pritzel, N.~Heess, T.~Erez, Y.~Tassa,
	D.~Silver, and D.~Wierstra, ``Continuous control with deep reinforcement
	learning,'' \emph{arXiv preprint}, Sep. 2015. [Online]. Available:
	\url{https://doi.org/10.48550/arXiv.1509.02971}
	\BIBentrySTDinterwordspacing
	
\end{thebibliography}

\end{document}